\title{Deciding twin-width at most 4 is NP-complete}
\titlerunning{Deciding twin-width at most 4 is NP-complete}
\author{Pierre Bergé}{Univ Lyon, CNRS, ENS de Lyon, Université Claude Bernard Lyon 1, LIP UMR5668, France}{}{}{}
\author{\'{E}douard Bonnet}{Univ Lyon, CNRS, ENS de Lyon, Université Claude Bernard Lyon 1, LIP UMR5668, France \and \url{http://perso.ens-lyon.fr/edouard.bonnet/}}{edouard.bonnet@ens-lyon.fr}{https://orcid.org/0000-0002-1653-5822}{}
\author{Hugues Déprés}{Univ Lyon, CNRS, ENS de Lyon, Université Claude Bernard Lyon 1, LIP UMR5668, France \and \url{http://perso.ens-lyon.fr/hugues.depres/}}{hugues.depres@ens-lyon.fr}{}{}
\authorrunning{P. Bergé, \'E. Bonnet, H. Déprés}
\keywords{Twin-width, lower bounds}
\newtheorem*{rep@theorem}{\rep@title}
\newcommand{\newreptheorem}[2]{%
\newenvironment{rep#1}[1]{%
 \def\rep@title{#2 \ref{##1}}%
 \begin{rep@theorem}}%
 {\end{rep@theorem}}}
\tikzset{draw half paths/.style 2 args={%
  decoration={show path construction,
    lineto code={
      \draw [#1] (\tikzinputsegmentfirst) -- 
         ($(\tikzinputsegmentfirst)!0.5!(\tikzinputsegmentlast)$);
      \draw [#2] ($(\tikzinputsegmentfirst)!0.5!(\tikzinputsegmentlast)$)
        -- (\tikzinputsegmentlast);
    }
  }, decorate
}}
\renewcommand{\geq}{\geqslant}
\renewcommand{\leq}{\leqslant}
\renewcommand{\le}{\leq}
\renewcommand{\ge}{\geq}
\newcommand{\card}[1]{|{#1}|}
\theoremstyle{definition}
\newcommand{\tww}{tww}
\colorlet{npink}{red!30!pink}
\begin{document}

\maketitle

\begin{abstract}
  We show that determining if an $n$-vertex graph has twin-width at most~4 is NP-complete, and requires time $2^{\Omega(n/\log n)}$ unless the Exponential-Time Hypothesis fails.
  Along the way, we give an elementary proof that $n$-vertex graphs subdivided at least $2 \log n $ times have twin-width at most~4.
  We also show how to encode trigraphs $H$ (2-edge colored graphs involved in the definition of twin-width) into graphs $G$, in the sense that every $d$-sequence (sequence of vertex contractions witnessing that the twin-width is at most~$d$) of $G$ inevitably creates $H$ as an induced subtrigraph, whereas there exists a partial $d$-sequence that actually goes from $G$ to $H$.
  We believe that these facts and their proofs can be of independent interest.
\end{abstract}
\maketitle

\section{Introduction}\label{sec:intro}

A~\emph{trigraph} is a graph with some edges colored black, and some colored red.
A~(vertex) \emph{contraction} consists of merging two (non-necessarily adjacent) vertices, say, $u, v$ into a~vertex~$w$, and keeping every edge $wz$ black if and only if $uz$ and $vz$ were previously black edges.
The other edges incident to $w$ become red (if not already), and the rest of the trigraph stays the same.
A~\emph{contraction sequence} of an $n$-vertex graph $G$ is a sequence of trigraphs $G=G_n, \ldots, G_1=K_1$ such that $G_i$ is obtained from $G_{i+1}$ by performing one contraction.
A~\mbox{\emph{$d$-sequence}} is a contraction sequence where all the trigraphs have red degree at most~$d$.
The~\emph{twin-width} of $G$, denoted by $\tww(G)$, is then the minimum integer~$d$ such that $G$ admits a $d$-sequence.
See~\cref{fig:contraction-sequence} for an example of a graph admitting a 2-sequence.
The~\emph{red graph} of a trigraph is obtained by simply deleting its black edges.
A~\emph{partial $d$-sequence} is similar to a $d$-sequence but ends on any trigraph $G_i$, instead of on the 1-vertex (tri)graph $G_1$.  
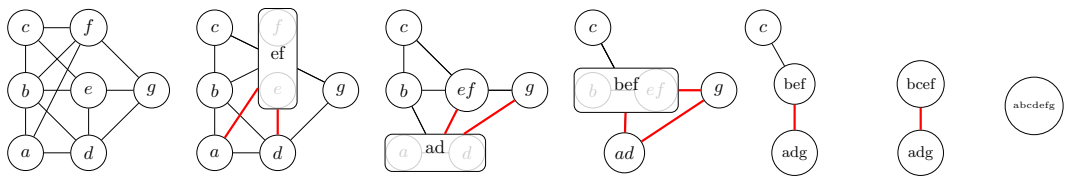
\begin{figure}
  \centering
  \resizebox{400pt}{!}{
  \begin{tikzpicture}[
      vertex/.style={circle, draw, minimum size=0.68cm}
    ]
    \def\s{1.2}
    \foreach \i/\j/\l in {0/0/a,0/1/b,0/2/c,1/0/d,1/1/e,1/2/f,2/1/g}{
      \node[vertex] (\l) at (\i * \s,\j * \s) {$\l$} ;
    }
    \foreach \i/\j in {a/b,a/d,a/f,b/c,b/d,b/e,b/f,c/e,c/f,d/e,d/g,e/g,f/g}{
      \draw (\i) -- (\j) ;
    }

    \begin{scope}[xshift=3 * \s cm]
    \foreach \i/\j/\l in {0/0/a,0/1/b,0/2/c,1/0/d,2/1/g}{
      \node[vertex] (\l) at (\i * \s,\j * \s) {$\l$} ;
    }
    \foreach \i/\j/\l in {1/1/e,1/2/f}{
      \node[vertex,opacity=0.2] (\l) at (\i * \s,\j * \s) {$\l$} ;
    }
    \node[draw,rounded corners,inner sep=0.01cm,fit=(e) (f)] (ef) {ef} ;
    \foreach \i/\j in {a/b,a/d,b/c,b/d,b/ef,c/ef,c/ef,d/g,ef/g,ef/g}{
      \draw (\i) -- (\j) ;
    }
    \foreach \i/\j in {a/ef,d/ef}{
      \draw[red, very thick] (\i) -- (\j) ;
    }
    \end{scope}

    \begin{scope}[xshift=6 * \s cm]
    \foreach \i/\j/\l in {0/1/b,0/2/c,2/1/g,1/1/ef}{
      \node[vertex] (\l) at (\i * \s,\j * \s) {$\l$} ;
    }
    \foreach \i/\j/\l in {0/0/a,1/0/d}{
      \node[vertex,opacity=0.2] (\l) at (\i * \s,\j * \s) {$\l$} ;
    }
    \draw[opacity=0.2] (a) -- (d) ;
    \node[draw,rounded corners,inner sep=0.01cm,fit=(a) (d)] (ad) {ad} ;
    \foreach \i/\j in {ad/b,b/c,b/ad,b/ef,c/ef,c/ef,ef/g,ef/g}{
      \draw (\i) -- (\j) ;
    }
    \foreach \i/\j in {ad/ef,ad/g}{
      \draw[red, very thick] (\i) -- (\j) ;
    }
    \end{scope}

    \begin{scope}[xshift=9 * \s cm]
    \foreach \i/\j/\l in {0/2/c,2/1/g,0.5/0/ad}{
      \node[vertex] (\l) at (\i * \s,\j * \s) {$\l$} ;
    }
    \foreach \i/\j/\l in {0/1/b,1/1/ef}{
      \node[vertex,opacity=0.2] (\l) at (\i * \s,\j * \s) {$\l$} ;
    }
    \draw[opacity=0.2] (b) -- (ef) ;
    \node[draw,rounded corners,inner sep=0.01cm,fit=(b) (ef)] (bef) {bef} ;
    \foreach \i/\j in {ad/bef,bef/c,bef/ad,c/bef,c/bef,bef/g}{
      \draw (\i) -- (\j) ;
    }
    \foreach \i/\j in {ad/bef,ad/g,bef/g}{
      \draw[red, very thick] (\i) -- (\j) ;
    }
    \end{scope}

    \begin{scope}[xshift=11.7 * \s cm]
    \foreach \i/\j/\l in {0/2/c}{
      \node[vertex] (\l) at (\i * \s,\j * \s) {$\l$} ;
    }
     \foreach \i/\j/\l in {0.5/0/adg,0.5/1.1/bef}{
      \node[vertex] (\l) at (\i * \s,\j * \s) {\footnotesize{\l}} ;
    }
    \foreach \i/\j in {c/bef}{
      \draw (\i) -- (\j) ;
    }
    \foreach \i/\j in {adg/bef}{
      \draw[red, very thick] (\i) -- (\j) ;
    }
    \end{scope}

    \begin{scope}[xshift=13.7 * \s cm]
    \foreach \i/\j/\l in {0.5/0/adg,0.5/1.1/bcef}{
      \node[vertex] (\l) at (\i * \s,\j * \s) {\footnotesize{\l}} ;
    }
    \foreach \i/\j in {adg/bcef}{
      \draw[red, very thick] (\i) -- (\j) ;
    }
    \end{scope}

    \begin{scope}[xshift=15 * \s cm]
    \foreach \i/\j/\l in {1/0.75/abcdefg}{
      \node[vertex] (\l) at (\i * \s,\j * \s) {\tiny{\l}} ;
    }
    \end{scope}
    
  \end{tikzpicture}
  }
  \caption{A 2-sequence witnessing that the initial graph has twin-width at most~2.}
  \label{fig:contraction-sequence}
\end{figure}
Twin-width can be naturally extended to matrices over a finite alphabet (in an unordered~\cite{twin-width1}, or an ordered setting~\cite{twin-width4}), and hence to any binary structure. 

Surprisingly many classes turn out to be of bounded twin-width.
Such is the case of graphs with bounded clique-width, $H$-minor free graphs for any fixed $H$, posets with antichains of bounded size, strict subclasses of permutation graphs, map graphs, bounded-degree string graphs~\cite{twin-width1}, as well as $\Omega(\log n)$-subdivisions of $n$-vertex graphs, and some classes of cubic expanders~\cite{twin-width2}.
One of the main algorithmic interests with twin-width is that first-order (FO) model checking, that is, deciding if a sentence $\varphi$ holds in a graph $G$, can be decided in fixed-parameter time (FPT) $f(|\varphi|,d) \cdot |V(G)|$ for some computable function $f$, when given a $d$-sequence of $G$~\cite{twin-width1}. 
As for most classes known to have bounded twin-width, one can compute $O(1)$-sequences in polynomial time for members of the class, the latter result unifies and extends several known results~\cite{Flum01,Gajarsky14,Ganian15,Gajarsky15,Guillemot14} for hereditary (but not necessarily monotone) classes.

For monotone (i.e., subgraph-closed) classes, the FPT algorithm of Grohe, Kreutzer, and Siebertz~\cite{Grohe17} for FO model checking on nowhere dense classes, is complemented by W$[1]$-hardness on classes that are somewhere dense (i.e., \emph{not} nowhere dense)~\cite{Dvorak13}, and even AW$[*]$-hardness on classes that are \emph{effectively} somewhere dense~\cite{Kreutzer09}.
The latter results mean that, for monotone classes, FO model checking is unlikely to be FPT beyond nowhere dense classes.

The missing piece for an FO model-checking algorithm in FPT time on any class of bounded twin-width is a polynomial-time algorithm and a computable function $f$, that given a constant integer bound $c$ and a graph $G$, either finds an $f(c)$-sequence for $G$, or correctly reports that the twin-width of $G$ is greater than $c$.
The running time of the algorithm could be $n^{g(c)}$, for some function $g$.
However to get an FPT algorithm in the combined parameter \emph{size of the sentence + bound on the twin-width}, one would further require that the approximation algorithm takes FPT time in $c$ (now seen as a parameter), i.e., $g(c) n^{O(1)}$.
We know such an algorithm for instance on ordered graphs (more generally, ordered binary structures)~\cite{twin-width4}, graphs of bounded clique-width, proper minor-closed classes~\cite{twin-width1}, but not in general graphs.

On the other hand, prior to this paper, no algorithmic lower bound was known for computing the twin-width.
Our main result rules out an (exact) XP algorithm to decide \emph{$\tww(G) \leqslant k$}, that is, an algorithm running in time $n^{f(k)}$ for some computable function $f$.
Indeed we show that deciding if the twin-width of a graph is at most~4 is intractable.
We refer the reader to~\cref{sec:prelim} for some context on the Exponential-Time Hypothesis (ETH), which implies that $n$-variable \textsc{$3$-SAT} cannot be solved in time $2^{o(n)}$. 

\begin{theorem}\label{thm:main}
  Deciding if a graph has twin-width at most~4 is NP-complete.
  Furthermore, no algorithm running in time $2^{o(n/\log n)}$ can decide if an $n$-vertex graph has twin-width at most~4, unless the ETH fails.
\end{theorem}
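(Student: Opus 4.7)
The plan is a polynomial-time reduction from $3$-SAT on $n$ variables and $O(n)$ clauses (ETH-hard at rate $2^{\Omega(n)}$ via the Sparsification Lemma) to the twin-width~$\leq 4$ problem. From a formula $\Phi$ I build a graph $G_\Phi$ on $N = O(n \log n)$ vertices with $\tww(G_\Phi) \leq 4$ iff $\Phi$ is satisfiable; a hypothetical $2^{o(N/\log N)}$ algorithm would then decide $3$-SAT in $2^{o(n)}$ time, contradicting the ETH. NP membership is immediate, since a $4$-sequence is a polynomial-size certificate verifiable in polynomial time.

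The construction factors through a trigraph $H_\Phi$ assembled from two kinds of gadgets: a \emph{variable} gadget whose low-red-degree contraction sequences parametrize a Boolean choice, and a \emph{clause} gadget that admits a local $4$-sequence when at least one incident literal is set to true, but every contraction sequence of which spikes at red degree $\geq 5$ when all three literals are false. Wiring the gadgets so that each clause sees a consistent assignment yields $\tww(H_\Phi) \leq 4 \Leftrightarrow \Phi$ is satisfiable. I then invoke the trigraph-to-graph encoding highlighted in the abstract: every $4$-sequence of $G_\Phi$ creates an induced copy of $H_\Phi$ in some intermediate trigraph (whence restricting the sequence to $H_\Phi$'s vertex set witnesses $\tww(H_\Phi) \leq 4$), while $G_\Phi$ admits a partial $4$-sequence contracting down to $H_\Phi$ (whence a $4$-sequence of $H_\Phi$ can be prepended by it to give a $4$-sequence of $G_\Phi$). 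Together these yield $\tww(G_\Phi) \leq 4 \Leftrightarrow \tww(H_\Phi) \leq 4$.

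To keep $|V(G_\Phi)| = O(n \log n)$, the encoding is realized through ``wire'' paths of length $\Theta(\log n)$, exploiting the elementary lemma of this paper that $(2 \log n)$-subdivided graphs have twin-width at most~$4$. These long paths simultaneously insulate each gadget against spurious cross-gadget contractions (supporting the forcing half of the encoding) and can themselves be gracefully contracted from the endpoints inward (supporting the partial-$4$-sequence half).

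The main obstacle is the clause gadget: engineering a substructure whose twin-width sits \emph{exactly} on the boundary $\tww = 4$ versus $\tww \geq 5$ depending on which truth values reach its ports. Small-constant twin-width thresholds are notoriously rigid, and soundness requires ruling out \emph{every} contraction sequence—not just the ones aligned with the gadget's visible structure—so the analysis is likely to hinge on a careful bookkeeping of how red neighborhoods accumulate under arbitrary merges, probably through a series of structural lemmas isolating the bottlenecks where red degree~$5$ is forced. By contrast, the $(2\log n)$-subdivision lemma should admit a direct inductive contraction schedule from path endpoints, and the trigraph-encoding step should be a largely mechanical ``wrap each vertex in a locally rigid template'' construction once the right forcing substructure is isolated.
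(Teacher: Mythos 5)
Your high-level plan matches the paper's: a quasilinear reduction from 3-SAT (via sparsification and Tovey's bounded-occurrence variant), a trigraph $H_\Phi$ built from variable/clause gadgets, the trigraph-to-graph encoding lemma, and the $(2\log n)$-subdivision twin-width-$4$ lemma as the finishing tool. But there are two concrete gaps where your sketch would not go through as written.

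First, you assign the ``insulation'' role to the long wire paths themselves, writing that they ``insulate each gadget against spurious cross-gadget contractions.'' In the paper, plain long paths do not do this; arbitrary $4$-sequences can freely merge path vertices with neighboring gadgets in harmful ways. The actual isolation mechanism is the \emph{fence gadget}: a small trigraph ring attached to a set $S$ with the proven property (\cref{cor:fg4}) that no part of any partial $4$-sequence can overlap $S$ until $S$ is contracted to a single vertex. Vertical sets, propagation gadgets, OR/AND gates, variable gadgets, and clause gadgets are all built on top of fences. The long wires serve a different purpose: they guarantee that the residual trigraph after all gadgets collapse is a $(\geq 2\log)$-subdivision, so that \cref{lem:subd} applies at the very end. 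Second, and more seriously, your account of the ``if'' direction stops at ``wiring the gadgets so that each clause sees a consistent assignment.'' In twin-width you must contract to a \emph{single vertex}, not merely reach a good intermediate trigraph. After propagating a satisfying assignment forward to the global AND-of-clauses output, the false-literal wires and the untouched halves of the variable gadgets are still locked (by the half-guard fences), so no full $4$-sequence exists yet. The paper's key structural idea, entirely absent from your sketch, is the feedback wire from the global output back to the half-guards $V_i^1, V_i^2$ of every variable gadget: contracting that wire unlocks the variable gadgets, which then lets you contract the remaining wires and clause gadgets, leaving a long subdivision to which \cref{lem:subd} applies. Without some such unlocking mechanism, the satisfiable direction cannot be completed, and the reduction fails.
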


As far as approximation algorithms are concerned, our result only rules out a ratio better than $5/4$ for determining the twin-width.
This still leaves plenty of room for an $f(\text{OPT})$-approximation, which would be good enough for most of the (theoretical) algorithmic applications.
Note that such algorithms exist for treewidth in polytime~\cite{Feige08} and FPT time~\cite{Korhonen21}, for pathwidth~\cite{Feige08}, and for clique-width via rank-width~\cite{Oum08}.

Is~\cref{thm:main} surprising?
On the one hand, it had to be expected that deciding, given a graph $G$ and an integer $k$, whether $\tww(G) \leqslant k$ would be NP-complete.
This is the case for example of treewidth~\cite{Arnborg87}, pathwidth~\cite{Ohtsuki79,Kashiwabara79,Lengauer81}, clique-width~\cite{Fellows09}, rank-width~\cite{Hlineny08}, mim-width~\cite{Saether16}, and bandwidth~\cite{Garey79}.
On the other hand, the parameterized complexity of these width parameters is more diverse and harder to predict.
Famously, Bodlaender's algorithm is a linear FPT algorithm to exactly compute treewidth~\cite{Bodlaender96} (and a non-uniform FPT algorithm came from the Graph Minor series~\cite{Robertson95}).
In contrast, while there is an XP algorithm to compute bandwidth~\cite{Saxe80}, an FPT algorithm is highly unlikely~\cite{Dregi14}.
It is a long-standing open whether an FPT or a mere XP algorithm exist for computing clique-width exactly, or even simply if one can recognize graphs of clique-width at most~4 in polynomial time (deciding clique-width at most~3 is indeed tractable~\cite{Corneil12}).

\Cref{thm:main} almost completely resolves the parameterized complexity of exactly computing twin-width on general graphs.
Two questions remain: can graphs of twin-width at most~2, respectively at most~3, be recognized in polynomial time.
Graphs of twin-width~0 are cographs, which can be recognized in linear time~\cite{Habib05}, while it was recently shown that graphs of twin-width at most~1 can be recognized in polynomial time~\cite{tww-polyker}.
In the course of establishing~\cref{thm:main} we show and generalize the following, where an \emph{$(\geqslant s)$-subdivision} of a graph is obtained by subdividing each of its edges at least $s$ times.
\begin{theorem}\label{thm:long-subd-gr}
  Any $(\geqslant 2 \log n)$-subdivision of an $n$-vertex graph has twin-width at most~4.
\end{theorem}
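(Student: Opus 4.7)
The plan is to exhibit an explicit $4$-sequence for $H$. Without loss of generality each edge of $G$ is subdivided exactly $k := 2 \lceil \log n \rceil$ times, since longer subdivisions only help: excess internal vertices along a path can be absorbed in the middle using the folklore trick of contracting $p_2, p_3, \ldots, p_{k-1}$ together in order, producing a single blob with red degree $2$ connected to the two surviving endpoints of the path.

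The central idea is to locally ``reduce the degree'' of each vertex $v \in V(G)$ by using the layers of subdivision vertices adjacent to $v$ as workspace for a balanced-binary-tree pairing. Label the subdivision vertices on each path $P_{vw}$ incident to $v$ as $p_1^{vw}, p_2^{vw}, \ldots$ counting from $v$; at level $\ell \in \{1, 2, \ldots, \lceil \log \deg_G(v) \rceil\}$ I pair up $v$'s currently active ``branches'' by contracting vertices at the $\ell$-th layer of pairs of incident paths. Each such pairing halves the number of active branches at $v$, so that after $\lceil \log \deg_G(v) \rceil$ levels the effective degree of $v$ in the trigraph is reduced to $1$. The critical point is the \emph{order} of contractions across layers: before performing the level-$\ell$ pairing at $v$, one must first ensure that the corresponding level-$(\ell+1)$ subdivision vertices are themselves already merged in pairs, so that the newly formed merging vertex has at most two red neighbors on the ``inside'' (toward $v$) and two red neighbors on the ``outside'' (further along the paths), hence red degree at most $4$.

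Once every vertex of $G$ has been processed this way, each original vertex is connected (via a black edge) to a single root of a local virtual binary tree of merged subdivision vertices, with all red degrees bounded by $4$. The remaining trigraph is a union of short paths, which is contracted to $K_1$ with red degree at most $2$ by the same middle-out trick.

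The main obstacle is the global red-degree analysis: since different vertices of $G$ share subdivision paths (each edge $e = vw$ uses layers on both the $v$-side and the $w$-side of $P_e$), I must verify that the interleaved pairings across the whole graph never push any trigraph vertex above red degree $4$. The budget is tight but sufficient, as $\lceil \log \deg_G(v) \rceil + \lceil \log \deg_G(w) \rceil \leq 2 \lceil \log n \rceil$ subdivision vertices suffice on the path $P_{vw}$; the heart of the proof is the careful invariant tracking, for each active merging vertex at each point in the schedule, exactly two red edges inward and two red edges outward.
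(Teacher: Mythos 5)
Your approach---a local binary ``pairing tree'' built at each original vertex $v$ out of the first $\lceil\log\deg_G(v)\rceil$ layers of subdivision vertices on its incident paths---is genuinely different from the paper's, which introduces one \emph{global} virtual full binary tree with fresh internal vertices placed above $V(G)$ and then zips each subdivided edge $ij$ against the tree-path from $i$ through the root to $j$. The global tree is precisely what makes the endgame work: once a subdivided edge has been zipped, that path has been entirely absorbed into the tree, so after all edges are processed nothing remains but a red binary tree of degree at most~3, which is trivially contractible by \cref{lem:tree}. Your local scheme does \emph{not} lead to ``a union of short paths,'' and this is a genuine gap. After all local degree reductions, the remaining trigraph is $G$ with every vertex replaced by a pendant red binary tree of height $\lceil\log\deg_G(v)\rceil$ and every edge replaced by the (possibly empty) middle segment of its subdivision path; in particular it retains all of $G$'s cycle structure and branching, so it is certainly not a disjoint union of paths. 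You never say how to contract this within red degree~4, and the natural attempt---collapse each local tree bottom-up into a single blob at $v$---eventually yields a vertex red-adjacent to all $\deg_G(v)$ surrounding middle segments, which is exactly the degree blow-up you set out to avoid. As written, the proof therefore stops short of reaching a contractible structure.

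A secondary, fixable issue is that the schedule inside each local tree is underspecified and, under the most natural reading, does not maintain the bound: if you finish all the pairings of one layer before moving to the next, then an intermediate blob covering $2^j$ paths at some layer becomes red-adjacent to up to $2^j$ still-unmerged vertices one layer closer to $v$, far exceeding~4. What does work is to add the incident paths one at a time and, for each new path, zip it into the already-built partial tree across all relevant depths simultaneously (keeping, as you say, at most two red neighbors inward and two outward at the moving front)---but that is essentially the paper's zipping argument carried out locally, and it needs to be spelled out rather than deferred to ``careful invariant tracking.''
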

That those graphs have bounded twin-width was known~\cite{twin-width2}, but not with the explicit bound of~4.
Another family of graphs with twin-width at most~4 is the set of grids, walls, their subgraphs, and subdivisions.
Even if there is no proof of that fact, sufficiently large grids, walls, or their subdivisions likely have twin-width \emph{at least}~4; it is actually surprising that the $6 \times 8$ grid still has twin-width~3~\cite{Schidler21}.
We also believe that long subdivisions of ``sufficiently complicated'' graphs have twin-width \emph{at least}~4.
That would make graphs of twin-width at most~3 considerably simpler than those of twin-width at most~4, especially among sparse graphs.

Contrary to the hardness proof for treewidth~\cite{Arnborg87}, which involves some structural characterizations by chordal completions, and the intermediate problems \textsc{Minimum Cut Linear Arrangement}, \textsc{Max Cut}, and \textsc{Max $2$-SAT} \cite{Garey79}, our reduction is ``direct'' from \textsc{$3$-SAT}.
This makes the proven hardness of twin-width more robust, and easier to extend to restricted classes of graphs, especially sparse ones.
\Cref{thm:main} holds for bounded-degree input graphs.
For instance, performing our reduction from~\textsc{Planar $3$-SAT} produces subgraphs of constant powers of the planar grid (while admittedly weakening the ETH lower bound from $2^{\Omega(n/\log n)}$ to $2^{\Omega(\sqrt n/\log n)}$).
Hence, while the complexity status of computing treewidth on planar graphs is a famous long-standing open question, one can probably extend the NP-hardness of \emph{twin-width at most~4} to planar graphs, by tuning and/or replacing the few non-planar gadgets of our reduction. 

Let us point out that, in contrast to subset problems, there is no $2^{O(n)}$-time algorithm known to compute twin-width.
The exhaustive search takes time $n^{2n+O(1)}$ by considering all sequences of $n-1$ pairs of vertices.
We leave as an open question whether the ETH lower bound of computing twin-width can be brought from $2^{\Omega(n/\log n)}$ to $2^{\Omega(n)}$, or even $2^{\Omega(n \log n)}$.
The latter lower bound is known to hold for \textsc{Subgraph Isomorphism}~\cite{Cygan17} (precisely, given a graph $H$ and an $n$-vertex graph $G$, deciding if $H$ is isomorphic to a subgraph of $G$ requires time $2^{\Omega(n \log n)}$, unless the ETH fails), or computing the Hadwiger number~\cite{Fomin21} (i.e., the size of the largest clique minor).

\subsection{Outline of the proof of~\cref{thm:main}}

We propose a quasilinear reduction from \textsc{$3$-SAT}.
Given an $n$-variable instance $I$ of \textsc{$3$-SAT}, we shall construct an $O(n \log n)$-vertex graph $G=G(I)$ which has twin-width at most~4 if and only if $I$ is satisfiable.

Half of our task is to ensure that no 4-sequence will exist if $I$ is unsatisfiable.
This is challenging since many contraction strategies are to be considered and addressed.
We make this task more tractable by attaching \emph{fence gadgets} to some chosen vertex subsets.
The effect of the fence \emph{enclosing}~$S$ is that no contraction can involve vertices in $S$ with vertices outside of~$S$, while~$S$ is not contracted into a single vertex.
The \emph{maximal or outermost} fences (we may nest two or more fence gadgets) partition the rest of the vertices.
This significantly tames the potential 4-sequences of~$G$.

Our basic building block, the \emph{vertical set}, consists of a pair of vertices (\emph{vertical pair}) enclosed by a fence.
It can be thought of as a bit set to 0 as long as the pair is not contracted, and to 1 when the pair gets contracted.
It is easy to assemble vertical sets as prescribed by an auxiliary digraph $D$ (of maximum degree~3), in such a way that, to contract (by a partial 4-sequence) the pair of a vertical set $V$, one first has to contract all the vertical sets that can reach $V$ in $D$.
This allows to propagate and duplicate a bit in a so-called \emph{wire} (corresponding to an out-tree in $D$), and to perform the logical AND of two bits.

The bit propagation originates from a variable gadget (we naturally have one per variable appearing in $I$) that offers two alternatives.
One can contract the ``top half'' of the gadget of variable $x_i$, which then lets one contract the vertical sets in the wire of literal $x_i$, or one can contract instead the ``bottom half'' of the gadget, as well as the vertical sets in the wire of literal $\neg x_i$. 
Concretely, these two contraction schemes represent the two possible assignments for variable $x_i$. 
A special ``lock'' on the variable gadget (called \emph{half-guards}) prevents its complete contraction, and in particular, performing contractions in both the wires of a literal and its negation.

The leaves of the literal wires serve as inputs for 3-clause gadgets.
One can contract the output (also a vertical set) of a clause gadget if and only if one of its input is previously contracted.
We then progressively make the AND of the clauses via a ``path'' of binary AND gadgets fed by the clause outputs.
We eventually get a vertical set, called \emph{global output}, which can be contracted by a partial 4-sequence only if $I$ is satisfiable.
Indeed at this point, the variable gadgets are still locked so at most one of their literals can be propagated.
This ticks one of our objective off.
We should now ensure that a 4-sequence is possible from there, when $I$ is satisfiable.

For that purpose, we add a wire from the global output back to the half-guards (or locks) of the variable gadgets.
One can contract the vertical sets of that wire, and in particular the half-guards.
Once the variable gadgets are ``unlocked,'' they can be fully contracted.
As a consequence, one can next contract the wires of literals set to false, and \emph{all} the remaining vertical sets involved in clause gadgets.

At this point, the current trigraph $H$ roughly has one vertex per outermost fence with red edges linking two adjacent gadgets (and no black edge).
We will guarantee that the (red) degree of $H$ is at most~4, its number of vertices of degree at least 3 is at most $\beta n$, for some constant $\beta$.
Besides we will separate gadgets by degree-2 wires of length $2 \log(\beta n)$ beforehand.
This is crucial so that the red graph of $H$ is a $(2 \log n')$-subdivision of an $n'$-vertex graph.
We indeed show that such trigraphs have twin-width at most~4.
A complicated proof in~\cite{twin-width2} shows that $\Theta(\log n')$-subdivisions of $n'$-vertex graphs have bounded twin-width.
Here we give an elementary proof of a similar fact with an explicit upper bound of~4.

This finishes to describe our overall plan for the reduction and its correctness.
It happens that fence gadgets are easier to build as trigraphs, while the rest of the gadgetry can be directly encoded by graphs.
We thus show how to encode trigraphs by graphs, as follows.
For any trigraph $J$ whose red graph has degree at most $d$, and component size at most $h$, there is a graph $G$ on at most $f(d,h) \cdot |V(J)|$ vertices such that $J$ has twin-width at most $2d$ if and only if $G$ has twin-width at most $2d$.
This uses some local replacements and confluence properties of certain partial contraction sequences.

\subsection{Organization of the paper}

The rest of the paper is organized as follows.
In~\cref{sec:prelim} we formally introduce the relevant definitions and notations, and state some easy observations. 
In~\cref{sec:long-subd} we show a~generalization to trigraphs of~\cref{thm:long-subd-gr}, crucial to the subsequent hardness proof.
\Cref{sec:hardness} is devoted to the hardness construction.
In \cref{subsec:trigraph-encoding} we present how to encode trigraphs into graphs.
In \cref{subsec:foreword}, we quickly recap the overall plan.
The subsequent subsections go through the various gadgets.
Finally \cref{subsec:constr-corr} details the quasilinear reduction from \textsc{$3$-SAT} to the problem of deciding if the twin-width is at most~4, and its correctness.

\section{Preliminaries}\label{sec:prelim}

For $i$ and $j$ two integers, we denote by $[i,j]$ the set of integers that are at least $i$ and at most $j$.
For every integer $i$, $[i]$ is a shorthand for $[1,i]$.
We use the standard graph-theoretic notations: $V(G)$ denotes the vertex set of a graph $G$, $E(G)$ denotes its edge set, $G[S]$ denotes the subgraph of $G$ induced by $S$, etc.
An $(\geqslant s)$-subdivision (resp.~$s$-subdivision) of a graph $G$ is obtained by subdividing every edge of $G$ at least $s$ times (resp.~exactly $s$ times).

\subsection{Definitions and notations related to twin-width}

We first define partition sequences, which is an alternative approach to contraction sequences.

\medskip

\textbf{Partition sequences.} The \emph{twin-width} of a graph, introduced in~\cite{twin-width1}, can be defined in the following way (complementary to the one given in introduction).
A~\emph{partition sequence} of an $n$-vertex graph $G$, is a sequence $\mathcal P_n, \ldots, \mathcal P_1$ of partitions of its vertex set $V(G)$, such that $\mathcal P_n$ is the set of singletons $\{\{v\}~:~v \in V(G)\}$, $\mathcal P_1$ is the singleton set $\{V(G)\}$, and for every $2 \leqslant i \leqslant n$, $\mathcal P_{i-1}$ is obtained from $\mathcal P_i$ by merging two of its parts into one.
Two parts $P, P'$ of a same partition $\mathcal P$ of $V(G)$ are said \emph{homogeneous} if either every pair of vertices $u \in P, v \in P'$ are non-adjacent, or every pair of vertices $u \in P, v \in P'$ are adjacent.
Finally the twin-width of $G$, denoted by $\tww(G)$, is the least integer $d$ such that there is partition sequence $\mathcal P_n, \ldots, \mathcal P_1$ of $G$ with every part of every $\mathcal P_i$ ($1 \leqslant i \leqslant n$) being homogeneous to every other parts of $\mathcal P_i$ but at most~$d$.

\medskip

\textbf{Contraction sequences.} 
A~\emph{trigraph} $G$ has vertex set $V(G)$, black edge set $E(G)$, red edge set $R(G)$.
Its~\emph{red graph} $(V(G),R(G))$ may be denoted $\mathcal R(G)$, and \emph{total graph} $(V(G),E(G) \cup R(G))$, $\mathcal T(G)$.
The \emph{red degree} of a trigraph is the degree of its red graph.
A trigraph $G'$ is an \emph{induced subtrigraph} of trigraph $G$ if $V(G') \subseteq V(G)$, $E(G')=E(G) \cap {V(G') \choose 2}$, and $R(G')=R(G) \cap {V(G') \choose 2}$.
Then we say that $G$ is a supertrigraph of $G'$, and we may also denote $G'$ by $G[V(G')]$.
The definition of the previous paragraph is equivalent to the one given in introduction, via contraction sequences.
Indeed the trigraph $G_i$ is obtained from partition $\mathcal P_i$, by having one vertex per part of $\mathcal P_i$, a black edge between any fully adjacent pair of parts, and a red edge between any non-homogeneous pair of parts.
A \emph{partial contraction sequence} is then a sequence of trigraphs of $\mathcal P_n, \ldots, \mathcal P_i$ (for some $i \in [n]$).
A (non-partial) \emph{contraction sequence} is one such that $i=1$.
Note that going from the trigraph of $\mathcal P_{i+1}$ to the one of $\mathcal P_i$ corresponds to the contraction operation described in~\cref{sec:intro}.
A \emph{(partial) $d$-sequence} is a (partial) contraction sequence where all the trigraphs have red degree at most~$d$.
In the above definitions of twin-width, nothing prevents the starting structure $G$ to be a trigraph itself.
We may then talk about the twin-width of a trigraph.

\medskip

\textbf{Conversion partition/contraction, and convenient notations and definitions.}
The reason we gave two (equivalent) definitions of twin-width is that both viewpoints are incomparably useful and convenient.
It is always cumbersome to describe a sequence of partitions, so we will prefer the trigraph and contraction viewpoint to describe $d$-sequences.
However the trigraph ``loses'' a part of the information that the partitioned graph preserves; namely the exact adjacencies between non-homogeneous parts, and the names of the original vertices.
To navigate between these two worlds, and keep the proofs compact, we use the following notations and vocabulary.

In this paragraph, we assume that there is a partial contraction sequence from (tri)graph $G$ to trigraph $H$.
If $u$ is a vertex of $H$, then $u(G)$ denotes the set of vertices eventually contracted into $u$ in $H$.
We denote by $\mathcal P(H)$ the partition $\{u(G) : u \in V(H)\}$ of $V(G)$.
If $G$ is clear from the context, we may refer to a \emph{part} of $H$ as any set in $\{u(G) : u \in V(H)\}$.
We may say that two parts $y(G), z(G)$ of $\mathcal P(H)$ are \emph{in conflict} if $yz \in R(H)$.
We say that a contraction of two vertices $u, u' \in V(H)$ \emph{involves} a vertex $v \in V(G)$ if $v \in u(G)$ or $v \in u'(G)$.
A contraction \emph{involves} a pair of vertices $v, v'$ if $v \in u(G)$ and $v' \in u'(G)$ (or $v \in u'(G)$ and $v' \in u(G)$).
Observe that the two vertices should appear in two distinct parts.
By extension, we may say that a contraction \emph{involves} a set $S$, if it involves a vertex of $S$, or a pair of sets $S, T$ if it involves a pair in $S \times T$.

\subsection{Useful observations}

The twin-width can only decrease when taking induced subtrigraph.
\begin{observation}\label{obs:subtrigraph}
  Let $G'$ be an induced subtrigraph of trigraph $G$.
  Then $\tww(G') \leqslant \tww(G)$.
\end{observation}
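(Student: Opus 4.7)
The plan is to start from a $d$-sequence of $G$ realizing $d=\tww(G)$, and project it onto $V(G')$ to obtain a $d$-sequence of $G'$. Concretely, let $\mathcal{P}_n,\ldots,\mathcal{P}_1$ be the partition sequence associated to an optimal contraction sequence of $G$. Define, for each $i$, the restricted family
\[
\mathcal{P}'_i = \{\, P \cap V(G') \,:\, P \in \mathcal{P}_i,\ P \cap V(G') \neq \emptyset\,\},
\]
which is a partition of $V(G')$. The sequence $\mathcal{P}'_n,\ldots,\mathcal{P}'_1$ starts at all singletons of $V(G')$ and ends at $\{V(G')\}$, and passing from $\mathcal{P}'_{i+1}$ to $\mathcal{P}'_i$ either merges two parts (when both parts of $\mathcal{P}_{i}$ being merged intersect $V(G')$) or leaves the partition unchanged. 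After deleting consecutive duplicates, we obtain a legitimate partition sequence of $G'$.

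The key point is then to verify that the red degrees do not increase under this restriction. Let $P,Q \in \mathcal{P}_i$ with $P \cap V(G'), Q \cap V(G') \neq \emptyset$. Since $G'$ is an induced subtrigraph of $G$, every edge (black or red) and every non-edge between a vertex of $P \cap V(G')$ and a vertex of $Q \cap V(G')$ is inherited from $G$. Consequently, if $P$ and $Q$ are homogeneous in $G$ (all black, or all non-edges, with no red pair), then $P \cap V(G')$ and $Q \cap V(G')$ are homogeneous in $G'$ with the same status. Contrapositively, every red edge in the trigraph associated to $\mathcal{P}'_i$ corresponds to a red edge in the trigraph associated to $\mathcal{P}_i$ between the containing parts. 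Hence the red degree of any part $P \cap V(G')$ in the trigraph of $\mathcal{P}'_i$ is at most the red degree of $P$ in the trigraph of $\mathcal{P}_i$, which is at most $d$.

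Combining these two points, the restricted sequence (with consecutive duplicates removed) is a valid $d$-sequence of $G'$, so $\tww(G') \leqslant d = \tww(G)$. There is no real obstacle here: the only point requiring a little care is handling the trivial steps where restriction collapses the merge (i.e.\ when at least one of the merged parts misses $V(G')$), which is resolved by simply skipping these steps in the projected sequence.
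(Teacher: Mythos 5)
The paper states this observation without proof, so there is no argument to compare against; your projection argument is the standard and correct one. Restricting each part to $V(G')$, dropping empties and consecutive duplicates, yields a valid partition sequence of $G'$, and since $G'$ is an \emph{induced} subtrigraph, any red edge between $P \cap V(G')$ and $Q \cap V(G')$ in the restricted trigraph forces a red edge between $P$ and $Q$ in the original, so red degrees can only go down — exactly the reasoning the authors have in mind in leaving this unproved.
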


It can be observed that adding red edges can only increase the twin-width, since the same contraction sequence for the resulting trigraph works at least as well for the initial trigraph.
\begin{observation}\label{obs:adding-red-edges}
  Let $G$ be a trigraph and $G'$ another trigraph obtained from $G$ by turning some non-edges and some edges into red edges.
  Then $\tww(G') \geqslant \tww(G)$.
\end{observation}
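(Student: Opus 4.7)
The plan is to verify the inequality via the partition-sequence formulation of twin-width given earlier in the preliminaries. Let $\mathcal{P}_n, \ldots, \mathcal{P}_1$ be a partition sequence of $G'$ witnessing $\tww(G') = d$; the goal is to show that the very same sequence of partitions, applied now to $G$ in place of $G'$, is a valid contraction sequence whose red-degree never exceeds $d$.

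The core intermediate claim is that for every partition $\mathcal{P}$ of the common vertex set $V(G) = V(G')$ and every pair of parts $P, Q \in \mathcal{P}$, homogeneity in $G'$ implies homogeneity in $G$. This is immediate from the one-directional nature of the transformation $G \mapsto G'$, which only turns black edges and non-edges into red edges: every red edge of $G$ remains red in $G'$; every black edge of $G'$ was already black in $G$; and every non-edge of $G'$ was already a non-edge of $G$. So if all pairs in $P \times Q$ are uniformly black edges (respectively non-edges) in $G'$ and none is red, the same uniformity holds in $G$. Taking the contrapositive, any pair of parts in conflict with respect to $G$ is also in conflict with respect to $G'$.

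It follows that the red graph of the trigraph associated to $\mathcal{P}_i$ via $G$ is a subgraph of the red graph of the trigraph associated to $\mathcal{P}_i$ via $G'$. Hence every vertex has red degree at most its counterpart in the $G'$-sequence, and the bound $d$ transfers directly, giving $\tww(G) \leqslant \tww(G')$. The only point that requires a moment of care is enumerating the ways a pair of parts can fail homogeneity (a red edge appears between them, or pairs mix black edges with non-edges), but each case collapses to the one-directional correspondence above, so no real obstacle is anticipated.
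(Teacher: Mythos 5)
Your proposal is correct and formalizes, via the partition-sequence formulation, exactly the one-line argument the paper gives (``the same contraction sequence for the resulting trigraph works at least as well for the initial trigraph''): homogeneity of a pair of parts with respect to $G'$ implies homogeneity with respect to $G$, so the red graph at each step of the sequence applied to $G$ is a subgraph of the one for $G'$, and the red-degree bound transfers. This is essentially the same approach, just spelled out.
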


Trees admit a simple 2-sequence, that gives a $d$-sequence on red trees of degree at most~$d$.
\begin{lemma}[\cite{twin-width1}]\label{lem:tree}
  Every (black) tree has twin-width at most~2.
  Every red tree has twin-width at most its maximum degree.
\end{lemma}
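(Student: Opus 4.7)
The second statement admits a direct argument: given a red tree $T$ of maximum degree $d$, I contract any leaf $\ell$ with its parent $p$. Because $\ell$ has no neighbor other than $p$, the merged vertex inherits as red neighbors exactly $N(p) \setminus \{\ell\}$, of size $\deg(p) - 1 \leq d-1$, and no other vertex's red degree can grow (its red neighborhood just has $p$ replaced by the merged vertex, or is unchanged). The resulting trigraph is still a red tree of maximum degree at most $d$, so iterating gives a $d$-sequence.

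For the first statement, my plan is induction on $|T|$ with a strengthened inductive hypothesis to make the recursion go through. I root $T$ at an arbitrary vertex and prove: for every rooted tree $T_v$, there is a contraction sequence from $T_v$ to a single vertex $S_v$ such that (i) the maximum red degree is at most~$2$ throughout, and (ii) the part containing the root $v$ remains the singleton $\{v\}$ until the very last contraction. The algorithm is natural: for $i=1,\ldots,k$, recursively contract the subtree $T_{c_i}$ rooted at the $i$-th child of $v$ into a single vertex $S_i$, and then merge $S_i$ into a running accumulator $X_v$ (initially empty); finally, contract $v$ with $X_v$ to obtain $S_v$.

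The verification proceeds as follows. Inside the recursive processing of $T_{c_i}$, everything internal is handled by induction; in particular, invariant (ii) applied to $T_{c_i}$ says that the $c_i$-part stays equal to $\{c_i\}$, and hence black-adjacent to $v$, until the very last step of $T_{c_i}$'s processing. Consequently $v$'s red degree while $T_{c_i}$ is being processed is at most~$1$ (coming only from $X_v$, if it is nonempty and already red-adjacent to $v$), reaching~$2$ only momentarily at the final step, when $S_i$ is created and becomes red-adjacent to $v$. The subsequent merge of $S_i$ into $X_v$ restores $v$'s red degree to~$1$; and since $v$ itself was never touched, invariant (ii) is preserved for $T_v$. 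All other vertices lie entirely inside some fully-processed $T_{c_j}$ (or inside the $T_{c_i}$ currently being contracted) and are controlled by the inductive bound.

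The main obstacle I anticipate is precisely engineering invariant (ii): the naive induction ``just contract each subtree'' fails because the root part would simultaneously accumulate red edges to its parent and to several partially-processed siblings, easily exceeding red degree~$2$. The principle ``never touch the root before the very last contraction'' is what makes the recursion clean, and verifying it for the above algorithm is the crux of the proof.
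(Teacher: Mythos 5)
Your proof is correct, and it takes a genuinely different route from the paper's. The paper's argument for the first claim is a one-line greedy scheme: root the tree arbitrarily, repeatedly contract two leaves sharing a parent whenever such a pair exists, otherwise contract a deepest leaf into its (necessarily only-child) parent, and observe directly that the red degree never exceeds $\max(2,\Delta)$. Your argument instead runs a post-order recursion with an accumulator $X_v$, and its heart is the strengthened invariant (ii) that the root of each subtree stays a singleton---hence black-adjacent upward---until the very last contraction of that subtree's processing. You correctly identify that this invariant is the crux: without it the parent $v$ could simultaneously carry a red edge to $X_v$ and a red edge into a partially processed $T_{c_i}$, pushing its red degree to $3$. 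What your approach buys is a fully modular inductive verification at the cost of being longer; what the paper's buys is brevity, with a single algorithm covering both bullet points, at the cost of leaving the red-degree bound as an ``observe that.'' For the red-tree claim, your argument (contract an arbitrary leaf into its parent and note that every red degree can only decrease, since the merged vertex has degree $\deg(p)-1$ and no other neighborhood grows) is simpler and more direct than reusing the paper's unified scheme, and it gives the bound $\Delta$ cleanly even in the degenerate cases $\Delta\le 1$ where the paper's $\max(2,\Delta)$ is vacuous.
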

\begin{proof}
  Root the tree arbitrarily.
  Contract two leaves with the same parent whenever possible.
  If not possible, contract a deepest leaf with its parent.
  Observe that this cannot result in red degree larger than the maximum of 2 and the initial degree of the tree.
\end{proof}

\subsection{The Exponential-Time Hypothesis}

The Exponential-Time Hypothesis (ETH, for short) was proposed by Impagliazzo and Paturi~\cite{Impagliazzo01} and asserts that there is no subexponential-time algorithm solving \textsc{$3$-SAT}.
More precisely, there is an $\varepsilon > 0$ such that $n$-variable \textsc{$3$-SAT} cannot be solved in time $2^{\varepsilon n}$.
Impagliazzo et al.~\cite{sparsification} present a subexponential-time Turing-reduction parameterized by a~positive real $\varepsilon > 0$ that, given an $n$-variable $m$-clause \textsc{$k$-SAT}-instance $I$, produces at most $2^{\varepsilon n}$ \textsc{$k$-SAT}-instances $I_1, \ldots, I_t$ such that $I$ is satisfiable if and only if at least one of $I_1, \ldots, I_t$ is satisfiable, each $I_i$ having no more than $n$ variables and $C_\varepsilon n$ clauses for some constant~$C_\varepsilon$ depending only on $\varepsilon$.
This crucial reduction is known as the Sparsification Lemma.
Due to it, there is an $\varepsilon' > 0$ such that there is no algorithm solving $m$-clause \textsc{$3$-SAT} in time $2^{\varepsilon' m}$, unless the ETH fails.

A classic reduction from Tovey~\cite{Tovey84}, linear in the number of clauses, then shows the following.
\begin{theorem}[\cite{Tovey84,sparsification}]\label{thm:3sat4occ}
 The $n$-variable \textsc{$3$-SAT} problem where each variable appears at most twice positively, and at most twice negatively, is NP-complete, and cannot be solved in time $2^{o(n)}$, unless the ETH fails. 
\end{theorem}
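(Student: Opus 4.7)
The plan is to give the classical Tovey-style reduction, reducing from the sparsified version of \textsc{$3$-SAT} (where the number of clauses $m$ is linear in the number of variables, so that an ETH lower bound of $2^{\Omega(m)}$ holds per the paragraph preceding the statement). Start from an arbitrary $m$-clause \textsc{$3$-SAT}-instance $I$ on variable set $X$. For every variable $x \in X$ that appears $k_x \geqslant 1$ times in total in $I$ (counting positive and negative occurrences), introduce $k_x$ fresh variables $x^1, x^2, \ldots, x^{k_x}$ and replace the $j$-th occurrence of $x$ (in some fixed ordering of its occurrences) by $x^j$, keeping the original sign. Then add the cyclic ``equality'' clauses
\[
(\neg x^j \vee x^{j+1}) \text{ for } j \in [k_x], \text{ with indices taken mod } k_x,
\]
which together force $x^1 = x^2 = \cdots = x^{k_x}$ in any satisfying assignment. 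Call the resulting instance $I'$.

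The first step is to check correctness: if $\alpha$ satisfies $I$, then setting every $x^j$ to $\alpha(x)$ satisfies all equality clauses trivially and all substituted clauses because they hold for $\alpha$. Conversely, if $\alpha'$ satisfies $I'$, the cycle of implications forces the common value $\alpha'(x^1) = \cdots = \alpha'(x^{k_x})$, and assigning $\alpha(x)$ to this common value makes $\alpha$ satisfy $I$. The second step is to bound the occurrences of each new variable: $x^j$ appears once in the substituted original clause (with whichever sign it inherits), once positively in $(\neg x^{j-1} \vee x^j)$, and once negatively in $(\neg x^j \vee x^{j+1})$. So $x^j$ appears at most twice positively and at most twice negatively, as required. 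NP-membership is immediate, and NP-hardness follows because the reduction is polynomial.

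For the ETH lower bound, the number of variables in $I'$ equals $\sum_{x \in X} k_x$, which is exactly the total number of literal occurrences in $I$, at most $3m$. Hence $I'$ has $n' \leqslant 3m$ variables. If $I'$ could be solved in time $2^{o(n')}$, one could solve $I$ in time $2^{o(m)}$, contradicting the consequence of the Sparsification Lemma stated just before the theorem (which gives a constant $\varepsilon' > 0$ such that no $2^{\varepsilon' m}$-time algorithm exists for $m$-clause \textsc{$3$-SAT} under ETH).

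There is no real obstacle here; the main points to be careful about are (i) handling variables with very few occurrences (if $k_x = 1$ the cyclic clause $(\neg x^1 \vee x^1)$ is trivially true, which is harmless, and if $k_x = 2$ the cycle degenerates to two implications forming a proper equivalence), and (ii) making sure the final count of occurrences per variable is bounded by $2$ in \emph{each} polarity, which the cyclic (rather than merely path-like) equality gadget guarantees.
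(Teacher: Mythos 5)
Your proof is correct and follows the standard Tovey cycle-of-implications reduction from sparsified \textsc{$3$-SAT}, which is exactly what the paper is invoking by citing \cite{Tovey84,sparsification} (it gives no proof of its own). The occurrence count is right: each fresh variable $x^j$ has one inherited occurrence from a substituted original clause, one positive occurrence in $(\neg x^{j-1}\vee x^j)$, and one negative occurrence in $(\neg x^j\vee x^{j+1})$, so at most two in each polarity; and $n'\leqslant 3m$ gives the $2^{o(n)}$ lower bound via the Sparsification Lemma as you argue.
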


\section{Long subdivisions have twin-width at most four}\label{sec:long-subd}

In~\cite{twin-width2}, it is proved that the $\Omega(\log n)$-subdivision of any $n$-vertex graph has bounded twin-width.
The proof is rather involved, relies on a characterization by \emph{mixed minors} established in~\cite{twin-width1}, and does not give an explicit constant bound.
Here we give an elementary proof that any $(\geqslant 2 \lceil \log n \rceil - 1)$-subdivision of an $n$-vertex graph has twin-width at most 4.

\begin{theorem}\label{thm:subd}
  Let $G$ be a trigraph obtained by subdividing each edge of an $n$-vertex graph~$H$ at least $2 \lceil \log n \rceil - 1$ times, and by turning red any subset of its edges as long as the red degree of $G$ remains at most~4, and no vertex with red degree~4 has a black neighbor.
  Then $\tww(G) \leqslant 4$.
\end{theorem}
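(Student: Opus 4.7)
The plan is to construct an explicit $4$-sequence for $G$ that exploits the long subdivision length as a buffer against red-degree cascades. I envision two interleaved phases.

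\medskip

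\emph{Phase 1 (path compaction).} For each subdivision path $P_e$ of $G$ with endpoints $u,v\in V(H)$ and internal vertices $x_1,\ldots,x_{\ell_e}$ (where $\ell_e\geq 2\lceil\log n\rceil-1$), I would iteratively contract adjacent internal vertices in pairs: round~1 contracts $(x_1,x_2),(x_3,x_4),\ldots$, round~2 contracts the resulting super-vertices pairwise, and so on, halving the current path length at each round. A direct inductive check along the path shows that throughout this procedure the red degree on internal vertices stays at most $2$, and each endpoint branch vertex gains at most one red neighbor (the super-vertex currently adjacent to it on that path). After $O(\log \ell_e)$ rounds the internal portion of $P_e$ is contracted to a single super-vertex $m_e$ with red edges only to $u$ and $v$.

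\medskip

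\emph{Phase 2 (branch vertex absorption).} After compaction, the trigraph resembles $H$ with each edge replaced by a super-vertex red-connected to its two endpoints. I would process branch vertices in a DFS post-order of a spanning tree of $H$: at each step the current ``leaf'' branch vertex $v$ is absorbed, together with its unique incident super-vertex, into its parent. This works cleanly when $\deg_H(v)\leq 4$. For higher-degree branch vertices the two phases need to be interleaved: before fully compacting the paths incident to $v$, pair them up and merge pairs $P_1,P_2$ simultaneously by contracting $x_i^{P_1}$ with $x_i^{P_2}$ for every index $i$; a similar direct check shows that this halves the number of ``active'' paths at $v$ while producing only red-degree $3$ on the newly created merged path. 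Iterating this pairwise merge reduces $v$'s active degree to at most $4$ in $O(\log \deg_H(v))$ rounds, after which ordinary absorption finishes the job.

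\medskip

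The main obstacle is the high-degree case: ensuring that repeatedly merging paths at $v$ does not cause red edges to accumulate excessively at third parties, namely the other endpoints of the merged paths. This is exactly where the lower bound $\ell_e\geq 2\lceil\log n\rceil-1$ is essential: it provides enough internal vertices in each path so that all the recursive pairwise merges (of depth $O(\log n)$) can be carried out strictly within the interior of each path, without the merging ``front'' ever reaching the far endpoints of the involved paths until the local degree at $v$ has already been trimmed. The invariant that no vertex of red degree $4$ has a black neighbor is preserved throughout by the careful ordering: a vertex only reaches red degree $4$ once every remaining edge incident to it has already been turned red by a preceding contraction. The pre-existing red edges of $G$ allowed by the hypothesis are absorbed into this same invariant and do not cause additional difficulty.
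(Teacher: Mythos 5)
Your plan has the right high-level intuition (use the logarithmic subdivision length as a buffer so that degree-related red-degree cascades can be amortized along the paths), and it shares a kernel of the idea in the paper's proof, which also relies on binary-tree-like structure of logarithmic depth. However, the execution has several genuine gaps, and I do not believe the argument as written goes through.

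\begin{itemize}
\item \emph{Round $\geq 2$ merges do not keep red degree at most $3$ as claimed.} You merge $P_1,P_2$ by contracting $x_i^{P_1}$ with $x_i^{P_2}$ ``for every index $i$,'' producing a Y-shaped merged path $Y^{(1,2)} = y_1,\ldots,y_\ell$ where the last vertex $y_\ell$ is red-adjacent to $y_{\ell-1}$ and to the two far endpoints $u_1,u_2$ (red degree $3$). Now at round~$2$ you merge $Y^{(1,2)}$ with $Y^{(3,4)}$ the same way; as soon as the round-$2$ merging front catches up to the round-$1$ front, contracting $y_{\ell}$ with $y'_{\ell}$ produces a vertex red-adjacent to the contracted predecessor and to $u_1,u_2,u_3,u_4$, i.e.\ red degree $5$. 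So the ``direct check'' you invoke does not hold if merges run to the same depth. You do hint (``the merging front ever reaching the far endpoints'') that the merges should be partial and nested at decreasing depths, but you also say ``for every index $i$,'' and you never specify the depths, never verify the resulting red degrees, and never account precisely for how much of the $2\lceil\log n\rceil-1$ budget each round at each endpoint consumes. This is not a cosmetic omission: the nested-depth bookkeeping is exactly where the real work is, and getting it right is tantamount to implicitly building the binary tree that the paper constructs explicitly.
\item \emph{Phase 2 conflates the spanning tree of $H$ with $H$ itself.} You propose absorbing a ``leaf'' branch vertex ``together with its unique incident super-vertex, into its parent.'' A leaf of the spanning tree still has full degree $\deg_H(v)$ in $H$, hence after Phase~1 it has $\deg_H(v)$ incident super-vertices, not one. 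Worse, the non-tree edges of $H$ are never addressed: after all your local merges, the red graph still contains a homeomorphic copy of $H$ (paths possibly replaced by trees rooted at endpoints), and if $H$ contains cycles this structure does not collapse to a tree. You give no argument that this remaining structure admits a $4$-sequence — and it is not obvious that it does; a sub-cubic graph need not have twin-width at most $4$. The paper sidesteps this entirely by \emph{adding} a disjoint virtual complete binary tree with the vertices of $H$ at its leaves, and then ``zipping'' each subdivided edge along the leaf-to-leaf walk in that tree, after which \emph{only the tree remains}, and trees have twin-width at most $3$. Your plan never gets to a tree.
\item \emph{Pre-existing red edges are waved away.} You assert these ``are absorbed into this same invariant and do not cause additional difficulty.'' In the paper's proof the hypothesis ``no vertex of red degree $4$ has a black neighbor'' is used at a specific and non-obvious point: when the first contraction adjacent to an endpoint leaf $\ell$ turns a black edge of $\ell$ red, one must know $\ell$ had red degree at most~$3$. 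Your proposal never identifies where this hypothesis is needed, which suggests the invariant maintenance has not actually been checked.
\end{itemize}

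In short: your Phase~1 is fine, the underlying intuition about the logarithmic buffer is correct and shared with the paper, but Phase~2 is the crux and is not a proof; it silently requires nested partial merges whose depths you do not specify or verify, it misdescribes the structure around high-degree vertices, and it provides no endgame for reducing the resulting (possibly cyclic) red graph. The paper's trick of introducing an external binary tree and zipping subdivided edges into tree walks is what makes the endgame trivial, and something functionally equivalent is missing from your proposal.
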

\begin{proof}
  By no more than doubling the number of vertices of $H$, we can assume that $n$ is a power of~2.
  Indeed, padding $H$ with isolated vertices up to the next power of~2 does not change the quantity $\lceil \log |V(H)| \rceil$.

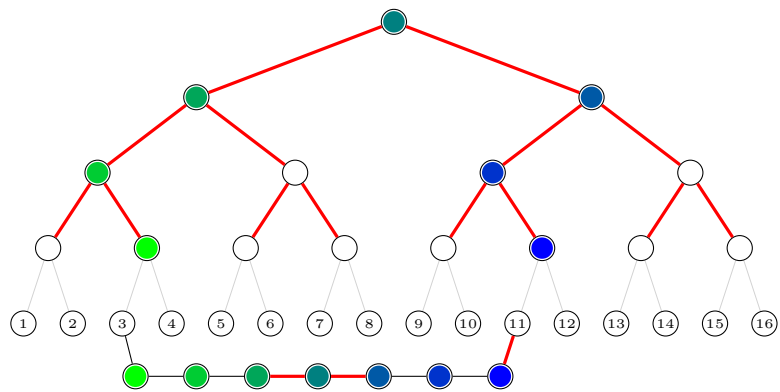
\begin{figure}[b]
    \centering
    \begin{tikzpicture}
      \def\s{1}     
      \def\sh{1.3}  
      \def\z{5}     
      \pgfmathtruncatemacro\zm{\z-1}
      \pgfmathtruncatemacro\zmm{\zm-1} 

    \foreach \i in {1,...,\z}{
      \pgfmathtruncatemacro\k{2^\i/2} 
      \foreach \j in {1,...,\k}{
        \pgfmathsetmacro\h{2^\z/(2 * 2^\i)}  
        \node[draw,circle] (a\i\j) at (\j * \h * \sh - \h * \sh / 2, - \i * \s) {} ;
      }
    }
    \pgfmathtruncatemacro\k{2^\z/2}
    \foreach \j in {1,...,\k}{
        \pgfmathsetmacro\h{1/2}  
        \node at (\j * \h * \sh - \h * \sh / 2, - \z * \s) {\tiny{$\j$}} ;
    }

    \foreach \i [count = \ip from 2] in {1,...,\zmm}{
      \pgfmathtruncatemacro\k{2^\i/2}
      \foreach \j in {1,...,\k}{
        \pgfmathtruncatemacro\jm{2 * \j - 1}
        \pgfmathtruncatemacro\jp{2 * \j}
        \draw[red,very thick] (a\i\j) -- (a\ip\jm) ;
        \draw[red,very thick] (a\i\j) -- (a\ip\jp) ;
        }
    }
    \pgfmathtruncatemacro\k{2^\zm/2}
      \foreach \j in {1,...,\k}{
        \pgfmathtruncatemacro\jm{2 * \j - 1}
        \pgfmathtruncatemacro\jp{2 * \j}
        \draw[very thin,opacity=0.18] (a\zm\j) -- (a\z\jm) ;
        \draw[very thin,opacity=0.18] (a\zm\j) -- (a\z\jp) ;
    }

    \foreach \i in {1,...,7}{
      \node[draw,circle] (b\i) at (1+0.8 * \i, -5.7) {} ;
    }
    \draw[red,very thick] (b7) -- (a\z11) ;
    \draw (a\z3) -- (b1) -- (b2) -- (b3) ;
    \draw[red, very thick] (b3) -- (b4) -- (b5) ;
    \draw (b5) -- (b6) -- (b7) ;

    \foreach \i/\op/\b in {4/1/50,3/0.7/35,5/0.7/65,2/0.4/20,6/0.4/80,1/0.2/0,7/0.2/100}{
      \node[circle,fill={blue!\b!green},inner sep=0.1cm] at (1+0.8 * \i, -5.7) {} ;
    }
    \foreach \i/\j/\op/\b in {1/1/1/50,2/1/0.7/35,3/1/0.4/20,4/2/0.2/0,2/2/0.7/65,3/3/0.4/80,4/6/0.2/100}{
      \node[circle,fill={blue!\b!green},inner sep=0.1cm] at (a\i\j) {} ;
    }
    \end{tikzpicture}
    \caption{Contracting the pairs of vertices with the same color, from the greenest to the bluest, is a partial 4-sequence, which acts as a deletion of the subdivided edge $(3,11)$.}
    \label{fig:subd}
  \end{figure}
  
  Let $G'$ be a supertrigraph of~$G$ obtained by arbitrarily arranging the vertices of $H$ (in~$G$) at the leaves of a ``virtual'' full binary tree of height $\log n$.
  So that the red degree does not exceed~4, we so far omit the edges of the tree incident to a leaf (i.e., a vertex of $H$), while we put in red all the other edges of the tree.
  (The missing edges of the tree will naturally appear in red.)
  The internal nodes of the tree are all fresh vertices, not present in $G$.
  See~\cref{fig:subd,fig:subd2} for an illustration.
  We show that $\tww(G') \leqslant 4$, hence by~\cref{obs:subtrigraph}, $\tww(G) \leqslant 4$ since $G$ is an induced subtrigraph of $G'$.

  We label $1, 2, \ldots, n$ the vertices of $H$.
  If there is an edge $ij \in E(H)$, it is subdivided into a path, say, $i, s(ij,1), s(ij,2), \ldots, s(ij,z), j$ in $G$ with $z \geqslant 2 \log n - 1$.
  First, we repeatedly contract adjacent vertices in the middle of this path until it consists of exactly $2 \log n$ edges.
  If $z > 2 \log n - 1$, we had to contract at least one pair of adjacent vertices.
  Thus the vertex in the middle of the path necessarily has now two red edges incident to it.
  Note that the other edges of the path can be black or red indifferently.
  To avoid cumbersome notations, we rename the inner vertices of the path $s(ij,1), s(ij,2), \ldots, s(ij,z)$ with now $z = 2 \log n - 1$.

  \begin{claim}\label{lem:hanging-garland}
    There is a partial 4-sequence from $G'$ to $G' - \{s(ij,1), \ldots, s(ij,z)\}$.
  \end{claim}
  \begin{proof}\renewcommand{\qedsymbol}{$\lhd$ }
    Intuitively we ``zip'' the subdivision of $ij$ with the walk made by the union of the path from leaf $i$ to the root, and the path from the root to leaf $j$.
    Let $i, v_1, v_2, \ldots, v_z, j$ be the concatenation of the simple path from $i$ to the root of the tree, and the simple path from the root to $j$.
    Its length is thus $2\log n = z+1$.
    For $h$ going from $1$ to $z = 2 \log n - 1$, we contract $v_h$ and $s(ij,h)$ (see~\cref{fig:subd}).
    After each contraction, the newly formed vertex has red degree at most 4.
    The red degree of vertices that are neither the new vertex nor a leaf of the tree is either unchanged or at most 2.
    The red degree of a leaf $\ell$ of the tree may increase by~1.
    This may only happen the first time a neighbor of $\ell$ is involved in a contraction, and that contraction merges a black neighbor of $\ell$ with the parent of $\ell$ in the tree (like is the case for leaf 3 from~\cref{fig:subd} to \cref{fig:subd2}).
    By assumption, this implies that $\ell$ had red degree at most~3, thus its red degree does not exceed~4.
    Thus, what we defined is indeed a partial 4-sequence.
    One can finally notice that after these $z$ contractions, we indeed reach trigraph $G' - \{s(ij,1), \ldots, s(ij,z)\}$.
\end{proof}

    \begin{figure}[h]
    \centering
    \begin{tikzpicture}
      \def\s{1}     
      \def\sh{1.3}  
      \def\z{5}     
      \pgfmathtruncatemacro\zm{\z-1}
      \pgfmathtruncatemacro\zmm{\zm-1}  

    \foreach \i in {1,...,\z}{
      \pgfmathtruncatemacro\k{2^\i/2}
      \foreach \j in {1,...,\k}{
        \pgfmathsetmacro\h{2^\z/(2 * 2^\i)}  
        \node[draw,circle] (a\i\j) at (\j * \h * \sh - \h * \sh / 2, - \i * \s) {} ;
      }
    }
    \pgfmathtruncatemacro\k{2^\z/2}
    \foreach \j in {1,...,\k}{
        \pgfmathsetmacro\h{1/2}  
        \node at (\j * \h * \sh - \h * \sh / 2, - \z * \s) {\tiny{$\j$}} ;
    }
    \node[draw,thick,rounded corners,fit=(a21)] {} ;

    \foreach \i [count = \ip from 2] in {1,...,\zmm}{
      \pgfmathtruncatemacro\k{2^\i/2}
      \foreach \j in {1,...,\k}{
        \pgfmathtruncatemacro\jm{2 * \j - 1}
        \pgfmathtruncatemacro\jp{2 * \j}
        \draw[red,very thick] (a\i\j) -- (a\ip\jm) ;
        \draw[red,very thick] (a\i\j) -- (a\ip\jp) ;
        }
    }
    \pgfmathtruncatemacro\k{2^\zm/2}
      \foreach \j in {1,...,\k}{
        \pgfmathtruncatemacro\jm{2 * \j - 1}
        \pgfmathtruncatemacro\jp{2 * \j}
        \draw[very thin,opacity=0.18] (a\zm\j) -- (a\z\jm) ;
        \draw[very thin,opacity=0.18] (a\zm\j) -- (a\z\jp) ;
    }
    
      \draw[red,very thick] (a\zm2) -- (a\z3) ;
      \draw[red,very thick] (a\zm2) -- (a\z3) ;  
    
    \foreach \i in {4,...,7}{
      \node[draw,circle] (b\i) at (1+0.8 * \i, -5.7) {} ;
    }
    \draw[red,very thick] (b7) -- (a\z11) ;
    \draw[red, very thick] (a21) to [bend left=8] (b4) ;
    \draw[red, very thick] (b4) -- (b5) ;
    \draw (b5) -- (b6) -- (b7) ;

    \foreach \i/\op/\b in {4/1/50,5/0.7/65,6/0.4/80,7/0.2/100}{
      \node[circle,fill={blue!\b!green},inner sep=0.1cm] at (1+0.8 * \i, -5.7) {} ;
    }
    \foreach \i/\j/\op/\b in {1/1/1/50,2/1/0.7/35,3/1/0.4/20,4/2/0.2/0,2/2/0.7/65,3/3/0.4/80,4/6/0.2/100}{
      \node[circle,fill={blue!\b!green},inner sep=0.1cm] at (a\i\j) {} ;
    }
    \end{tikzpicture}
    \caption{The picture after the first three contractions. The newly formed vertex has red degree~4.}
    \label{fig:subd2}
  \end{figure}

  We apply~\cref{lem:hanging-garland} for each edge of $H$ (or rather, subdivided edge in $G$).
  We are then left with a red full binary tree which admits a 3-sequence by~\cref{lem:tree}.
  Hence there is a 4-sequence for $G'$, and in particular, for $G$.
\end{proof}

We will only use the following consequence.
\begin{lemma}\label{lem:subd}
  Let $G$ be a trigraph obtained by subdividing at least $2 \lceil \log n \rceil - 1$ times each edge of an $n$-vertex graph $H$ of degree at most~4, and by turning red all its edges.
  Then $\tww(G) \leqslant 4$.
\end{lemma}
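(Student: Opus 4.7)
The plan is to derive this lemma as an immediate consequence of \cref{thm:subd}. To apply that theorem, I need to check that the hypotheses hold for our trigraph $G$.

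First, the subdivision requirement is literally the same: each edge of $H$ is subdivided at least $2\lceil \log n\rceil - 1$ times. Second, I need that the red degree of $G$ is at most $4$. Since $H$ has maximum degree $4$, every original vertex of $H$ has degree at most $4$ in $G$, and every subdivision vertex has degree exactly $2$ in $G$; turning all edges red therefore yields a trigraph of red degree at most $4$. Third, I need that no vertex with red degree $4$ has a black neighbor: since \emph{every} edge of $G$ has been turned red, there are no black edges at all, so this condition is vacuous.

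All three hypotheses of \cref{thm:subd} are satisfied, so $\tww(G) \leq 4$. There is no real obstacle: the lemma is the specialization of the theorem to the case where the input graph already has bounded degree (so no ``black reserve'' is needed) and every edge is colored red from the start.
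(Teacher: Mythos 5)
Your proof is correct and matches the paper's intent: the paper states Lemma~\ref{lem:subd} as an unproved consequence of Theorem~\ref{thm:subd}, and you have correctly verified the three hypotheses (same subdivision bound, red degree bounded by 4 because $H$ has degree at most 4 and subdivision vertices have degree 2, and the no-black-neighbor condition is vacuous since every edge is red).
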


\section{Hardness of determining if the twin-width is at most four}\label{sec:hardness}

Here we show the main result of the paper.
\begin{reptheorem}{thm:main}
  Deciding if a graph has twin-width at most~4 is NP-complete.
  Furthermore, no algorithm running in time $2^{o(n/\log n)}$ can decide if an $n$-vertex graph has twin-width at most~4, unless the ETH fails.
\end{reptheorem}
The membership to NP is ensured by the $d$-sequence: a polynomial-sized certificate that a graph has twin-width at most~$d$, checkable in polynomial time.   
We thus focus on the hardness part of the statement, and design a quasilinear reduction from~\textsc{$3$-SAT}.

\subsection{Encoding a trigraph by a graph}\label{subsec:trigraph-encoding}

In this subsection, we present a construction allowing to encode trigraphs into (plain) graphs.
Our objective is, given a trigraph $H$ with red degree at most $d$, to produce a graph $G$ such that $H$ admits a $2d$-sequence iff $G$ admits a $2d$-sequence.

Formally, we build a graph $G$ for which every $2d$-sequence $\mathcal S$ inevitably creates $H$ as an induced subtrigraph of a trigraph of $\mathcal S$.
By~\cref{obs:subtrigraph}, the existence of a $2d$-sequence for $G$ implies the existence of one for $H$. Moreover, this construction is such that there is a partial $2d$-sequence which, from $G$, exactly reaches $H$.
Therefore, any $2d$-sequence for $H$ can be augmented to become a $2d$-sequence for $G$.


At first sight, our construction works only for trigraphs $H$ with a connected red graph.
We could show the following.
\begin{lemma}
  For every trigraph $H$ such that $\mathcal R(H)$ is connected and of degree at most~$d$, there is a graph $G$ such that:
  \begin{compactitem}
  \item every $2d$-sequence of $G$ goes through a supertrigraph of $H$, and
  \item there is a partial $2d$-sequence from $G$ to $H$.
  \end{compactitem}
  \label{lem:connected-encoding}
\end{lemma}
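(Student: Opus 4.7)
The plan is to build $G$ by replacing each vertex $v \in V(H)$ with a small \emph{gadget} $\Gamma_v$ so that (i)~the red adjacencies of $H$ show up as non-homogeneous (partial) adjacencies between gadgets in the plain graph $G$, and (ii)~the internal structure of each $\Gamma_v$ forces any $2d$-sequence of $G$ to fully contract $\Gamma_v$ into a single vertex before any vertex of $\Gamma_v$ is merged with a vertex outside $\Gamma_v$. A natural candidate is to take $\Gamma_v = \{v^+, v^-\}$ and to encode a black edge $uv \in E(H)$ by putting all four edges between $\Gamma_u$ and $\Gamma_v$, a red edge $uv \in R(H)$ by the non-homogeneous pattern keeping only the two edges $v^+u^+$ and $v^-u^-$, and a non-edge of $H$ by no edges between the pairs. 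I would then augment each pair with a few auxiliary ``fence'' vertices internal to $\Gamma_v$ whose role is to make any cross-gadget move prohibitively costly in red degree until $\Gamma_v$ is internally contracted; the assumption that $\mathcal R(H)$ is connected will let me route this fencing coherently along a spanning tree $T$ of $\mathcal R(H)$.

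Next I would exhibit the partial $2d$-sequence from $G$ to $H$ by contracting the gadgets $\Gamma_v$, in some order, each into a single vertex $\hat v$. The main invariant to check at each step is that the freshly created $\hat v$ has red degree at most $2d$: each red-neighbor $u$ of $v$ in $H$ contributes red edges only to $u^+$ and $u^-$, the two vertices on which the adjacencies of $v^+$ and $v^-$ disagree, so the total is at most $2\,d_R(v) \leq 2d$, where $d_R(v)$ denotes the red degree of $v$ in $H$. Vertices in still-uncontracted gadgets receive at most one red edge per already-contracted red-neighbor, so their red degree stays at most $d$. After all gadgets have been contracted, the surviving trigraph is isomorphic to $H$, since the encoding produces a red edge for every $uv \in R(H)$, a black edge for every $uv \in E(H)$, and a non-edge otherwise.

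The hard part will be the other direction: showing that \emph{every} $2d$-sequence of $G$ passes through a supertrigraph of $H$. The plan is a confluence argument: starting from an arbitrary $2d$-sequence $\mathcal S$ of $G$, I would reorder its contractions so that all intra-gadget moves come first, while staying within the $2d$ budget. The crucial property to engineer into the construction is that any contraction merging a vertex of $\Gamma_u$ with a vertex of $\Gamma_v$, with $u \neq v$, performed while either gadget is still split across two parts, creates strictly more than $2d$ red edges. Granting this, every $2d$-sequence must reach the partition $\{\Gamma_v : v \in V(H)\}$ at some moment, and the associated trigraph is by construction isomorphic to $H$. The connectivity of $\mathcal R(H)$ enters precisely here: wiring the auxiliary vertices along the spanning tree $T$ propagates the forcing throughout the whole gadget network and prevents a degenerate $2d$-sequence that would collapse a red-connected block of gadgets independently of the others in a way that short-circuits $H$.

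The main obstacle is the simultaneous satisfaction of three conflicting-looking requirements on the auxiliary vertices. First, they must be absorbable into $\hat v$ during the partial sequence, which restricts their external adjacency to stay inside $\Gamma_v$. Second, they must render cross-gadget moves prohibitive in red degree while $\Gamma_v$ is not fully contracted, which requires enough symmetric ``twin bulk'' inside the gadget, so that breaking the symmetry immediately pushes the red degree above $2d$. Third, this bulk must not disturb the non-homogeneous encoding of red edges, that is, it must be symmetric with respect to the $v^+/v^-$ signal used to record red edges. Balancing these constraints, and in particular bounding the red degree of all intermediate trigraphs uniformly by $2d$, is the delicate combinatorial core of the proof; I expect that a few layers of twins per gadget, anchored symmetrically to $v^+$ and $v^-$ and linked through the red spanning tree $T$, will suffice.
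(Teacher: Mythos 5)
Your encoding pattern is essentially the paper's: replace each vertex of $H$ by a ``bulk'' set, make two bulks fully adjacent for a black edge, non-adjacent for a non-edge, and put a matching-like pattern for a red edge. The paper realises the bulk as a biclique $K_{t,t}$ with sides $A_i,B_i$, and a red edge $v_iv_{i'}$ becomes the black canonical matching $\{a_{i,j}a_{i',j},\,b_{i,j}b_{i',j}\}_{j\in[t]}$; these sides play the role of your $v^+/v^-$ signal. The ``soft'' direction (a partial $2d$-sequence from $G$ to $H$, contracting each side of each biclique layer by layer so that a red-neighbour $u$ contributes at most two red edges to the current vertex at any moment) is also the same as yours.

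The gap is in the size of the gadget and, consequently, in the ``hard'' direction. You propose constant-size gadgets (``a few auxiliary vertices,'' ``a few layers of twins per gadget''), and you then assert that any cross-gadget contraction performed while either gadget is still split creates strictly more than $2d$ red edges. That assertion fails for constant-size gadgets: an adversarial $2d$-sequence can first reduce a gadget to a small number of parts, then gradually absorb a small piece of a neighbouring gadget without ever exceeding the budget, and this leakage can propagate. The paper instead takes $t = 2(2d+2)(2d)^{|V(H)|-1}+1$, i.e., bicliques of size exponential in $|V(H)|$, precisely because its argument for the hard direction is not a confluence/reordering argument but a direct analysis of the first trigraph $G'$ in which some part covers at least $g(d,t)=(t-1)/(2d+2)$ vertices of the bulk. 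At that moment one shows (i) every part is contained either in a single side $A_i$ or $B_i$ of a single biclique or entirely outside the bulk, and (ii) starting from the large part and following a spanning structure of $\mathcal R(H)$, one can greedily choose one part per biclique, each containing a canonical subset of indices $J$ of the relevant side; crossing one red edge shrinks $|J|$ by a factor $2d$ (because only $2d$ parts can intersect the matched copy of $J$), and one needs $|J|\ge 2$ after $|V(H)|-1$ such steps to still see the matching encoding a red edge. This is exactly what forces $t$ to be exponential in $|V(H)|$, and it is the combinatorial core you identify as ``delicate'' but do not resolve.

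Two further, smaller corrections. First, your target conclusion ``every $2d$-sequence must reach the partition $\{\Gamma_v : v\in V(H)\}$'' is too strong and is not what the paper proves: the paper only exhibits, inside $G'$, a collection of parts (one per biclique, each a proper subset of a side in general) whose induced subtrigraph is isomorphic to $H$; the parts need not be exactly the gadgets. Second, you do not actually need to reorder a given $2d$-sequence so that intra-gadget moves come first; such reorderings are hard to justify within the red-degree budget and the paper does not attempt it. The correct move is the ``first large part'' pivot above. With those changes, and bicliques of the stated exponential size in place of your constant-size gadgets, your plan lines up with the paper's proof.
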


However, we will need a stronger version, allowing to simulate trigraphs with disconnected red graphs.
First we show how to turn the trigraph induced by one connected component of the red graph into a plain graph (this is~\cref{lem:trigraph-encoding}).
Later we loop through all connected components of the red graph with size at least~2, and apply~\cref{lem:trigraph-encoding} (that is done in~\cref{lem:reduction-trigraph}).


\begin{lemma}\label{lem:trigraph-encoding}
  Let $H$ be a trigraph, and $S \subseteq V(H)$ be the vertex set of a connected component of $\mathcal R(H)$ with degree at most~$d$.
  There is a trigraph $G$ and a set $T \subseteq V(G)$ satisfying the following statements:
  \begin{compactenum}
  \item $\card{T}$ is bounded by a function of $d$ and $\card{S}$ only,
  \item no red edge touches $T$ (in particular, $G[T]$ has no red edge),
  \item $G - T$ is isomorphic to $H - S$,
  \item there is a partial $2d$-sequence from $G$ to $H$, and 
  \item every $2d$-sequence of $G$ goes through a supertrigraph of some trigraph $\tilde{H}$, where $\tilde{H}$ can be obtained from $H$ by performing contractions \emph{not} involving vertices of $S$, nor creating red edges incident to $S$.
  \end{compactenum}
\end{lemma}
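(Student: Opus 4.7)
The plan is to build $G$ by blowing up each $v \in S$ into a ``twin gadget'' in $T$ and encoding the red/black edges of $H[S]$ through careful bipartite adjacencies between these gadgets. Concretely, for each $v \in S$ I would keep $v$ itself in $G$ (so that $G - T$ can be identified with $H - S$) and add a set $T_v \subseteq T$ of fresh ``twin'' copies of $v$, with $|T_v|$ chosen as a function of $d$ large enough that the forcing argument in~(5) works (as a first guess, on the order of $2d{+}1$). All vertices of $T_v \cup \{v\}$ would have the same external neighborhood as $v$ in $H$ (restricted to $V(H) \setminus S$, and necessarily black). Between two gadgets $T_u \cup \{u\}$ and $T_v \cup \{v\}$ with $u,v \in S$, I would use complete bipartite adjacency if $uv$ is a black edge in $H$, no edges if $uv$ is a non-edge, and a partial bipartite pattern if $uv$ is a red edge, chosen so that contracting each gadget to a single vertex produces a mixed (hence red) adjacency between them. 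Properties~(1)--(3) are then immediate.

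For property~(4), I would exhibit an explicit partial sequence that processes the gadgets in some order, contracting the vertices of each $T_v \cup \{v\}$ into a single part. The red-degree accounting uses the fact that each $v \in S$ has at most $d$ red neighbors in $H$; each such red neighbor $u$ of $v$ contributes at most two red edges during the intermediate states of the contraction (one for each ``side'' of $u$'s still-uncontracted gadget). Combined with the absence of red edges between $S$ and $V(H) \setminus S$ in $H$ (since $S$ is a red component), this caps the red degree at $2d$ throughout. Once every gadget has been merged, the resulting trigraph is isomorphic to $H$.

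Property~(5) is the core difficulty. I would argue by contradiction: assume some $2d$-sequence $\mathcal{S}$ of $G$ never passes through a trigraph in which, simultaneously for each $v \in S$, the part containing $v$ is exactly $T_v \cup \{v\}$ and is disjoint from any part containing a vertex of $V(H) \setminus S$. Consider the first contraction in $\mathcal{S}$ that merges a part wholly contained in some gadget $T_v \cup \{v\}$ with a part containing either a vertex of $V(H) \setminus S$ or a vertex of another gadget $T_u \cup \{u\}$, before $T_v \cup \{v\}$ itself has been fully merged. At that moment the remaining as-yet-uncontracted twins of $T_v$ still sit in other parts, and the twin structure guarantees that the newly-formed part is in red conflict with each of them. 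Since $|T_v|$ was chosen large, this forces the red degree of the new part to exceed $2d$, contradicting $\mathcal{S}$ being a $2d$-sequence. Hence $\mathcal{S}$ must reach a configuration in which every gadget is internally contracted and separated from the rest; at that moment, the induced subtrigraph on the gadget-parts and on the parts of $V(G) \setminus T$ is exactly some $\tilde{H}$ obtained from $H$ by contractions that do not involve $S$ and that create no red edge incident to $S$.

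The hard part is making the forcing argument of~(5) airtight: one must pick $|T_v|$ and the partial bipartite patterns for red edges precisely enough that the red-degree blow-up is unavoidable for every possible ``early mixing'' contraction, and then do a case analysis over how intra- and inter-gadget contractions can be interleaved. A delicate point is that, by the time the critical gadget-completion moment is reached, arbitrary contractions may have taken place outside $S$; this is precisely why the conclusion must allow $\tilde{H}$ to be any contraction of $H$ not touching $S$, rather than $H$ itself.
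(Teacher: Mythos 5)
Your overall direction---blowing up each $v \in S$ into many copies and encoding the three adjacency types of $H[S]$ by bipartite patterns---matches the paper's construction, but the argument for item~(5) has a genuine gap. You argue that the first contraction mixing a part inside a gadget $T_v \cup \{v\}$ with something outside, before that gadget has become a single part, would force red degree above $2d$, and you then conclude that the sequence must reach a trigraph in which \emph{every} gadget is simultaneously a clean single part. That conclusion does not follow: gadget $T_{v_1}$ can fully contract and leak into the rest of the graph \emph{before} $T_{v_2}$ is fully contracted, so there need be no single moment at which all $|S|$ gadgets are single, un-leaked parts. The forcing step itself is also shaky with a flat set of mutual twins: if $Q$ is fully non-adjacent to $T_v$ and the twins are pairwise non-adjacent, the remaining twins $t' \in T_v \setminus P$ are homogeneous to $P \cup Q$ and contribute no red conflict at all, so the blow-up you want is not automatic.

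The paper avoids both problems by replacing each $v_i$ with a biclique $K_{t,t}$ (the opposite side $B_i$ of the \emph{same} biclique is what detects illegal mixing and produces $\Omega(t)$ red conflicts), and by \emph{not} trying to force all blow-ups into single parts. Instead it inspects the first trigraph $G'$ in which some part covers at least $g(d,t)$ vertices of $T$, shows that parts of $G'$ respect both the biclique-side boundaries and the $T$ versus $V(G)\setminus T$ boundary, and then walks the red-connectivity tree of $H[S]$ to extract a \emph{witness part} $\rho(v_h)$ for each $v_h \in S$, maintaining that $\rho(v_h)(G)$ covers a common index block $A_{h,J}$ of the $A$-side. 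Because red edges of $H[S]$ are encoded by index-aligned matchings and all witness parts share the same index set $J$, the restriction of $G'$ to $\{\rho(v_h)\}_{v_h \in S}$ is isomorphic to $H[S]$. Each step of the walk shrinks $J$ by a factor $2d$; this is why $t$ must be of order $(2d)^{|S|}$, and why your guess $|T_v| \approx 2d{+}1$ is far too small to support any such tracing argument.
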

\begin{proof}
  We begin with the construction of $G$. The vertex set of $G$ can be split into two sets: on one hand $V(H)\setminus S$, and on the other hand $T$, which will be defined in function of $H[S]$.
  Let $\{v_1, v_2, \ldots, v_{\card{S}}\}$ be the set $S$.
  To build set $T$, we blow up every vertex $v_i$ of $S$ into a copy $L_i$ of the biclique $K_{t,t}$ with $t = 2(2d+2) \cdot (2d)^{\card{S}-1}+1$.
  The two \emph{sides} of each biclique $L_i$ are denoted by $A_i = \{a_{i,j}\}_{j \in [t]}$ and $B_i = \{b_{i,j}\}_{j \in [t]}$, and are respectively called \emph{$A$-side} and \emph{$B$-side}.
  The vertex set $T \subseteq V(G)$ is: $$T=\bigcup_{v_i \in S} V(L_i) = \bigcup_{v_i \in S, j \in [t]} \{a_{i,j}, b_{i,j}\}.$$

  In addition to the edges forming each biclique, we replace every black edge of $H[S]$ by a complete bipartite graph between the two corresponding bicliques. Furthermore, every red edge of $H[S]$ becomes a black canonical matching, with edges $a_{i,j}a_{i',j}$ and $b_{i,j}b_{i',j}$.
  Formally, $$E(G[T])=\bigcup_{v_i \in S} A_i \times B_i ~\cup \bigcup_{v_iv_{i'} \in E(H)} L_i \times L_{i'}~\cup~\bigcup_{\substack{v_iv_{i'} \in R(H)\\ j \in [t]}} \{a_{i,j}a_{i',j}, b_{i,j}b_{i',j}\}.$$
  
  We finish the construction by considering the black edges initially connecting $S$ and $V(H)\setminus S$ in the trigraph $H$.
  For any edge $v_iz \in E(H)$ with $v_i \in S$ and $z \in V(H)\setminus S$, we add a black edge between any vertex of the biclique $L_i$ and $z \in V(G)\setminus T$. In summary,
  $$E(G) = E(G[T]) \cup E(H - S) \cup ~\bigcup_{\substack{v_i \in S\\ z \in V(H)\setminus S\\ v_iz \in E(H)}} L_i \times \{z\}.$$
  
  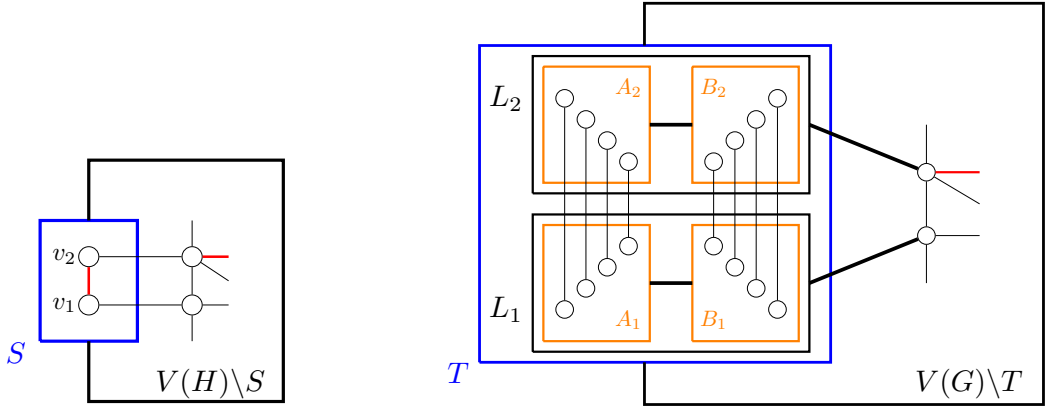
\begin{figure}[h]

\begin{subfigure}[b]{0.39\columnwidth}
\scalebox{0.8}{\begin{tikzpicture}
	\draw[line width = 1.5pt, color = blue] (1.0,2.0)--(2.6,2.0)--(2.6,4.0)--(1.0,4.0) -- (1.0,2.0);
	\draw[line width = 1.5pt] (1.8,2.0)--(1.8,1.0)--(5.0,1.0)--(5.0,5.0)--(1.8,5.0)--(1.8,4.0);
	\node[color = blue, scale = 1.4] at (0.6,1.8) {$S$};
	
	\node[draw,circle] (u) at (1.8,2.6) {};
	\node[draw,circle] (v) at (1.8,3.4) {};
        \node[scale = 1.2] (v1) at (1.4,2.6) {$v_1$};
	\node[scale = 1.2] (v2) at (1.4,3.4) {$v_2$};
	\draw[line width = 1.2pt, color = red] (u)--(v);
	
	\node[draw,circle] (x) at (3.5,2.6) {};
	\node[draw,circle] (y) at (3.5,3.4) {};
	\draw (u)--(x);
	\draw (v)--(y);
	\draw (x)--(y);
	\draw (x)--(3.5,2.0);
	\draw (x)--(4.1,2.6);
	\draw[line width = 1.2pt, color = red] (y)--(4.1,3.4);
	\draw (y)--(4.1,3.0);
	\draw (y)--(3.5,4.0);
	
	\node[scale = 1.4] at (3.8,1.3) {$V(H)\backslash S$};	
\end{tikzpicture}}
\caption{An example of trigraph $H$. Set $S$ consists of two vertices $v_1$ and $v_2$, linked by a red edge. Vertices of $H - S$ adjacent to $S$ are also represented.}
\end{subfigure}
~
\begin{subfigure}[b]{0.56\columnwidth}
\scalebox{0.7}{\begin{tikzpicture}
	\draw[line width = 1.5pt, color = blue] (0.4,1.0)--(7.0,1.0)--(7.0,7.0)--(0.4,7.0) -- (0.4,1.0);
	\draw[line width = 1.5pt] (3.5,7.0)--(3.5,7.8)--(11.0,7.8) -- (11.0,0.2) -- (3.5,0.2) -- (3.5,1.0);
	
	\draw[line width = 1.2pt] (1.4,1.2)--(6.6,1.2)--(6.6,3.8)--(1.4,3.8) -- (1.4,1.2);
	\draw[line width = 1.2pt] (1.4,4.2)--(6.6,4.2)--(6.6,6.8)--(1.4,6.8) -- (1.4,4.2);
	\draw[line width = 1.2pt, color = orange] (1.6,1.4)--(3.6,1.4)--(3.6,3.6)--(1.6,3.6) -- (1.6,1.4);
	\draw[line width = 1.2pt, color = orange] (1.6,6.6)--(3.6,6.6)--(3.6,4.4)--(1.6,4.4) -- (1.6,6.6);
	\draw[line width = 1.2pt, color = orange] (4.4,1.4)--(6.4,1.4)--(6.4,3.6)--(4.4,3.6) -- (4.4,1.4);
	\draw[line width = 1.2pt, color = orange] (4.4,6.6)--(6.4,6.6)--(6.4,4.4)--(4.4,4.4) -- (4.4,6.6);
	\draw[line width = 2pt] (3.6,5.5)--(4.4,5.5);
	\draw[line width = 2pt] (3.6,2.5)--(4.4,2.5);
	
	\foreach \i in {0,...,3}{
      \node[draw,circle] (a1\i) at (2.0+0.4*\i,6.0-0.4*\i) {};
      \node[draw,circle] (b1\i) at (6.0-0.4*\i,6.0-0.4*\i) {};
      \node[draw,circle] (a2\i) at (2.0+0.4*\i,2.0+0.4*\i) {};
      \node[draw,circle] (b2\i) at (6.0-0.4*\i,2.0+0.4*\i) {};
      \draw (a1\i)--(a2\i);
      \draw (b1\i)--(b2\i);
    }
	
	\node[draw,circle] (x) at (8.8,3.4) {};
	\node[draw,circle] (y) at (8.8,4.6) {};
	\draw (x) -- (y) ;
	\draw (x)--(8.8,2.5);
	\draw (x)--(9.8,3.4);
	\draw[line width = 1.2pt, color = red] (y)--(9.8,4.6);
	\draw (y)--(9.8,4.0);
	\draw (y)--(8.8,5.5);	
	
	\draw[line width = 2pt] (6.6,5.5)--(y);
	\draw[line width = 2pt] (6.6,2.5)--(x);

	\node[color = blue, scale = 1.6] at (0.0,0.8) {$T$};
	\node[scale = 1.6] at (0.9,2.0) {$L_1$};
	\node[scale = 1.6] at (0.9,6.0) {$L_2$};
	\node[scale = 1.2, color = orange] at (3.2,1.8) {$A_1$};
	\node[scale = 1.2, color = orange] at (4.8,1.8) {$B_1$};
	\node[scale = 1.2, color = orange] at (3.2,6.2) {$A_2$};
	\node[scale = 1.2, color = orange] at (4.8,6.2) {$B_2$};
	\node[scale = 1.6] at (9.6,0.6) {$V(G)\backslash T$};
\end{tikzpicture}}
\caption{Graph $G$ obtained from $H$. Thick black edges link every vertex of one endpoint to every vertex of the other endpoint. Set $T$ contains two bicliques $L_1$ and $L_2$, inherited from $S = \{v_1,v_2\}$.}
\end{subfigure}

\caption{An example of encoding: induced trigraph $S$ (left) and the plain graph $T$ (right).}
\label{fig:encoding}
\end{figure}
  
  \textbf{Statements 1-3.} The first three statements are satisfied by this construction.
  In particular the size of $T$ is $2t\card{S}$, and $t$ was defined as a function of $d$ and $\card{S}$.
  Moreover, no vertex of $V(H)\setminus S$ was modified by this construction, so $G - T$ is isomorphic to $H - S$.
  
  \textbf{Statement 4.} We focus now on the fourth statement: We exhibit a partial $2d$-sequence~$\mathcal{S}^*$ from $G$ to $H$.
  Let us begin with a short description of $\mathcal{S}^*$. This particular sequence never contracts two vertices of $T$ lying in different bicliques.
Moreover, after a contraction of two vertices belonging to the same biclique, it forces the same contraction in all other bicliques of $T$. We now describe $\mathcal{S}^*$ in detail.

We start by contracting $a_{i,1}$ and $a_{i,2}$ for each $v_i \in S$.
Then, we contract $\{a_{i,1},a_{i,2}\}$ with $a_{i,3}$ for each $v_i \in S$, and so on.
When the $A$-side of each biclique $L_i$ is contracted into a single vertex $\{a_{i,1},a_{i,2},\ldots,a_{i,t}\}$, we proceed similarly with the $B$-side.
Finally, each biclique~$L_i$ contains exactly two contracted vertices, which are respectively $A_i$ and $B_i$.
We contract all these pairs in each $L_i$: every biclique is now contracted into a single vertex.

The obtained trigraph is isomorphic to $H$.
Indeed, if $v_iv_{i'}$ is a black edge (resp. a~non-edge) in $H[S]$, then $L_i$ and $L_{i'}$ are homogeneous and, as contracted vertices, they are connected with a black edge (resp. a non-edge).
However, if $v_iv_{i'}$ is a red edge in $H[S]$, then there is a semi-induced matching connecting bicliques $L_i$ and $L_{i'}$ in $G[T]$, so the contracted vertices $L_i$ and $L_{i'}$ are linked by a red edge.
Finally, the black edges between $S$ and $V(H)\setminus S$ are preserved because the contractions of $\mathcal{S}^*$ occur inside bicliques of $T$ which are homogeneous (i.e., fully adjacent or fully non-adjacent) with every vertex $z \in V(H)\setminus S$.

Let us check that the partial sequence $\mathcal{S}^*$ only goes through trigraphs with red degree at most $2d$.
As we only contract vertices within the same side of each biclique first, no red edge appears between two vertices of the same biclique.
Moreover, if  $v_iv_{i'} \notin R(H)$, then bicliques $L_i$ and $L_{i'}$ are homogeneous, so no red edge appears between two vertices belonging respectively to $L_i$ and $L_{i'}$.
Thus, the red edges may only bridge two different bicliques $L_i, L_{i'}$ such that $v_iv_{i'} \in R(H)$.
The way we progressively contract the matching between $L_i$ and $L_{i'}$, a contracted vertex within $L_i$ is the endpoint of at most two red edges towards $L_{i'}$.
Assume without loss of generality that $u(G) = \{a_{i,1},a_{i,2},\ldots,a_{i,j+1}\}$ is contracted in $A_i$ while only $u'(G) = \{a_{i',1},a_{i',2},\ldots,a_{i',j}\}$ is contracted in $A_{i'}$ for some $j\in [t-1]$.
There are two red edges between bicliques $L_i$ and $L_{i'}$ which are $uu'$ and $uu''$, where $u''(G) = \{a_{i',j+1}\}$.
As the red degree of $H[S]$ is at most $d$, we indeed have that $\mathcal{S}^*$ is a partial $2d$-sequence.

\textbf{Statement 5.} We terminate by showing that any $2d$-sequence $\mathcal{S}$ from graph $G$ necessarily produces at some moment a supertrigraph of $\tilde{H}$, where $\tilde{H}$ is obtained from $H$ with contractions not involving set $S$.
As this part of the proof is more intricate than the previous ones, two intermediate claims are stated.
We fix $g(d,t) = (t-1)/(2d+2) = 2(2d)^{\card{S}-1}$.

\begin{claim}\label{clm:ctr-biclique}
  Let $G_h$ be any trigraph of $\mathcal S = G,\ldots,G_h,\ldots,K_1$.
  If a vertex $u$ of $G_h$ is such that $u(G)$ intersects two distinct bicliques $L_i$ and $L_{i'}$, then there is a vertex $w$ (possibly~$u$) of $G_h$ intersecting one side of $L_i$ in at least $g(d,t)$ elements: $|w(G) \cap A_i| \geqslant g(d,t)$ or $|w(G) \cap B_i| \geqslant g(d,t)$.
\end{claim}

\begin{proof}\renewcommand{\qedsymbol}{$\lhd$ }
  Let $a \in u(G) \cap V(L_i)$ and $z \in u(G) \cap V(L_{i'})$.
  We assume w.l.o.g. that the partite set of the biclique $L_i$ in which $a$ lies is $A_i$.
  We identify a set~$X \subset V(G)$ of size at least $t-1$ such that, for any vertex $u' \in V(G_h) \setminus \{u\}$ such that $u'(G) \cap X \neq \emptyset$, there is a red edge $uu'$.
  The composition of set $X$ depends on the adjacency between $v_i$ and $v_{i'}$ in $H$.
  \begin{compactitem}
  \item If $v_iv_{i'}$ is a non-edge in $H$, then we fix $X = B_i$.
    Indeed, any element of $B_i$ is adjacent to~$a$ but not to~$z$, so there is a red edge between $u$ and any contracted vertex $u'$ intersecting $B_i$.
    Moreover, $\card{B_i} = t$.
  \item If $v_iv_{i'}$ is a black edge in $H$, then we fix $X = A_i \setminus \{a\}$. Any element of $A_i \setminus \{a\}$ is not adjacent to $a$ but adjacent to $z$. Thus, any contracted vertex $u$ intersecting $A_i \setminus \{a\}$ is connected to $u$ in red. Moreover, $\card{A_i \setminus \{a\}} = t-1$.
  \item If $v_iv_{i'}$ is a red edge in $H$, then we fix $X = B_i \setminus \{z'\}$, where $z'$ is the matching neighbor of $z$ in $L_i$. Observe that $z'$ may belong to $A_i$ and, in this case, $X = B_i$.
    Any element of~$X$ is adjacent to $a$ but not to $z$. Moreover, $\card{X} \in \{t-1,t\}$.
  \end{compactitem}
  In summary, there is a set $X$ of at least $t-1$ vertices such that any contracted vertex $u' \in G_h$ intersecting $X$ is linked by a red edge to~$u$.
  Consequently, there cannot be more than $2d+1$ contracted vertices $w \in G_h$ intersecting $X$, otherwise the red degree of $u$ is necessarily larger than $2d$. For this reason, at least one vertex $w \in G_h$ which intersects $X$ has size at least $\lceil(t-1)/(2d+1)\rceil \ge g(d,t)$. In each case, either $X \subsetneq A_i$ or $X \subseteq B_i$, so either $\card{w(G) \cap A_i} \geqslant g(d,t)$ or $\card{w(G) \cap B_i} \geqslant g(d,t)$.
\end{proof}

Let $G'$ be the first trigraph of sequence $\mathcal S$ having a vertex $u$ such that $|u(G) \cap T| \geqslant g(d,t)$. We denote by $G''$ the trigraph preceding $G'$ in $\mathcal S$.
We show that every contracted vertex $w \in G'$ is either fully contained in some biclique $L_i$ or contained in $V(G)\setminus T$.

\begin{claim}\label{clm:ctr-inside-T}
  Every contracted vertex $u \in V(G')$ satisfies:
  \begin{compactitem}
  \item either $u(G) \subseteq A_i$ or $u(G) \subseteq B_i$ for some $L_i$ of $T$,
  \item or $u(G) \subseteq V(G)\setminus T$.
  \end{compactitem}
  Furthermore, there is no red edge between a~contracted vertex inside a biclique of $T$ and a~contracted vertex in $G - T$.
\end{claim}
\begin{proof}\renewcommand{\qedsymbol}{$\lhd$ }
We begin by proving that if a~contracted vertex $w \in V(G')$ intersects $T$, then it intersects at most one of its bicliques, that is, $w(G) \cap T \subseteq L_i$ for some index $i$.
According to~\cref{clm:ctr-biclique}, all the contracted vertices of $G''$ intersect $T$ within a single biclique.
This holds in particular for the two vertices, say $x,y \in V(G'')$, whose contraction yields~$u$.
Assume that $x$ and $y$ are both contained within different bicliques, say, $L_i$ and $L_{i'}$, respectively.
By~\cref{clm:ctr-biclique}, there is a vertex of $G'$ intersecting $L_i$ in at least $g(d,t)$ elements. By definition of trigraph $G'$, this vertex is necessarily $u$. This yields a contradiction, as the cardinality of $x(G) = u(G) \cap V(L_i)$ would be at least $g(d,t)$. Therefore, vertex $u$ intersects at most one biclique.
In brief, from now on, every set $w(G)$ with $w \in V(G')$ can be written as the union of elements of some biclique $L_i$ with vertices of $G - T$.

Each side, $A_i$ or $B_i$, of a biclique $L_i$ satisfies the following property: There are at least $2d+2$ contracted vertices of $G'$ intersecting every $A_i$ (or $B_i$).
Any contracted vertex in $G''$ covers less than $g(d,t)$ elements of $T$, by definition.
So, at least $2d+3$ contracted vertices of $G''$ must intersect $A_i$ as $\card{A_i} = t$.
In $G'$, this property holds except for vertex $u$, which is the contraction of two vertices $x,y \in V(G'')$.
Therefore, as announced, at least $2d+2$ contracted vertices $w$ of $G'$ satisfy $w(G) \cap A_i \neq \emptyset$ (resp.~$w(G) \cap B_i \neq \emptyset$) for every side $A_i$ (resp.~$B_i$).

Based on this property, we show that, for any $w \in V(G')$, set $w(G)$ cannot contain both vertices of $T$ and of $V(G)\setminus T$.
Assume it is the case and w.l.o.g. that $w(G) \cap A_i \neq \emptyset$. All vertices of $L_i$ play the same role regarding the adjacencies between $T$ and $G - T$: either they are all connected to some $z \in G - T$ (if $v_iz \in E(H)$) or none of them is connected to $z$ (if $v_iz \notin E(H)$). Consequently, we can distinguish only two cases. First, assume that at least one vertex of $w(G)\setminus A_i$ is adjacent (black or red) to $L_i$ in $G$ (Figure~\ref{subfig:inside_T_case1}).
In that case, $w$~admits at least $2d+1$ red neighbors in $G'$ which are the other contracted vertices intersecting $A_i$.
Indeed, these vertices contain elements of $A_i$ which are adjacent to $w(G)\setminus A_i$ but not to $w(G) \cap A_i$. Second, assume that no edge of $G$ connects $w(G)\setminus A_i$ with $L_i$ (Figure~\ref{subfig:inside_T_case2}). We can identify $2d+1$ red neighbors of $w$: among the at least $2d+2$ contracted vertices of $G'$ intersecting $B_i$, at least $2d+1$ of them are different from $w$. These contracted vertices contain elements of $B_i$ which are adjacent to $w(G) \cap A_i$ but not to $w(G)\setminus A_i$. In summary, for any $w \in G'$, either $w(G) \subseteq L_i$ or $w(G) \subseteq V(G)\setminus T$.  

\begin{figure}[h]
\begin{subfigure}[b]{0.49\columnwidth}
\scalebox{0.6}{\begin{tikzpicture}
	\draw[line width = 1.5pt, color = blue] (8.5,1.0)--(14.5,1.0)--(14.5,7.0)--(8.5,7.0) -- (8.5,1.0);
	\draw[line width = 1.5pt] (11.5,7.0)--(11.5,7.8)--(4.0,7.8) -- (4.0,0.2) -- (11.5,0.2) -- (11.5,1.0);
	
	\draw[line width = 1.2pt] (9.5,1.2)--(12.1,1.2)--(12.1,6.4)--(9.5,6.4) -- (9.5,1.2);
	\draw[line width = 1.2pt, color = orange] (9.7,1.4)--(11.9,1.4)--(11.9,3.4)--(9.7,3.4) -- (9.7,1.4);
	\draw[line width = 1.2pt, color = orange] (9.7,4.2)--(11.9,4.2)--(11.9,6.2)--(9.7,6.2) -- (9.7,4.2);
	\draw[line width = 2pt] (10.8,3.4)--(10.8,4.2);
	
	\foreach \i in {0,...,3}{
      \node[draw,circle] (a2\i) at (10.2+0.4*\i,1.8+0.4*\i) {};
      \node[draw,circle] (b2\i) at (10.2+0.4*\i,5.8-0.4*\i) {};
    }
	
	
	
	\draw[color = green] (10.4,1.65)--(10.4,1.95)--(5.0,1.95)--(5.0,0.7) -- (8.0,0.7) -- (8.0,1.65) -- (10.4,1.65);
	\draw[color = green] (10.8,2.05)--(10.8,2.35)--(5.5,2.35)--(5.5,2.05) -- (10.8,2.05);
	\draw[color = black!40!green, line width = 1.4pt] (11.6,2.45)--(11.6,3.25)--(6.2,3.25)--(6.2,2.45) -- (11.6,2.45);
	
	\draw[color = green] (11.7,4.45)--(11.7,4.75)--(11.1,4.75)--(11.1,4.45) -- (11.7,4.45);
	\draw[color = green] (11.2,4.85)--(11.2,5.55)--(5.8,5.55)--(5.8,4.2) -- (6.8,4.2) -- (6.8,4.85) -- (11.2,4.85);
	\draw[color = green] (10.4,5.65)--(10.4,6.05)--(5.0,6.05)--(5.0,5.65) -- (10.4,5.65);
	\node[color = black!40!green, scale = 1.4] at (7.25,3.5) {$w(G)$};
	\node[draw,circle] (w) at (7.0,2.75) {};
	\draw[line width = 2pt] (9.5,3.7)--(w);
	
	\draw[line width = 2pt, color = red] (6.2,2.8)--(5.7,2.35);
	\draw[line width = 2pt, color = red,rounded corners] (6.2,2.9)--(5.2,2.9)--(5.2,1.95);
	
	\node[color = blue, scale = 1.6] at (14.9,0.8) {$T$};
	\node[scale = 1.6] at (12.5,3.8) {$L_i$};
	\node[scale = 1.2, color = orange] at (11.4,1.8) {$A_i$};
	\node[scale = 1.2, color = orange] at (11.4,5.8) {$B_i$};
	\node[scale = 1.6] at (5.4,7.4) {$V(G)\backslash T$};
\end{tikzpicture}}
\caption{Case when $v_i$ is adjacent to some vertex of $w(G)\setminus A_i$ in $H$.}
\label{subfig:inside_T_case1}
\end{subfigure}
~
\begin{subfigure}[b]{0.49\columnwidth}
\scalebox{0.6}{\begin{tikzpicture}
	\draw[line width = 1.5pt, color = blue] (8.5,1.0)--(14.5,1.0)--(14.5,7.0)--(8.5,7.0) -- (8.5,1.0);
	\draw[line width = 1.5pt] (11.5,7.0)--(11.5,7.8)--(4.0,7.8) -- (4.0,0.2) -- (11.5,0.2) -- (11.5,1.0);
	
	\draw[line width = 1.2pt] (9.5,1.2)--(12.1,1.2)--(12.1,6.4)--(9.5,6.4) -- (9.5,1.2);
	\draw[line width = 1.2pt, color = orange] (9.7,1.4)--(11.9,1.4)--(11.9,3.4)--(9.7,3.4) -- (9.7,1.4);
	\draw[line width = 1.2pt, color = orange] (9.7,4.2)--(11.9,4.2)--(11.9,6.2)--(9.7,6.2) -- (9.7,4.2);
	\draw[line width = 2pt] (10.8,3.4)--(10.8,4.2);
	
	\foreach \i in {0,...,3}{
      \node[draw,circle] (a2\i) at (10.2+0.4*\i,1.8+0.4*\i) {};
      \node[draw,circle] (b2\i) at (10.2+0.4*\i,5.8-0.4*\i) {};
    }
	
	
	
	\draw[color = green] (10.4,1.65)--(10.4,1.95)--(5.0,1.95)--(5.0,0.7) -- (8.0,0.7) -- (8.0,1.65) -- (10.4,1.65);
	\draw[color = green] (10.8,2.05)--(10.8,2.35)--(5.5,2.35)--(5.5,2.05) -- (10.8,2.05);
	\draw[color = black!40!green, line width = 1.4pt] (11.6,2.45)--(11.6,3.25)--(6.2,3.25)--(6.2,2.45) -- (11.6,2.45);
	
	\draw[color = green] (11.7,4.45)--(11.7,4.75)--(11.1,4.75)--(11.1,4.45) -- (11.7,4.45);
	\draw[color = green] (11.2,4.85)--(11.2,5.55)--(5.8,5.55)--(5.8,4.2) -- (6.8,4.2) -- (6.8,4.85) -- (11.2,4.85);
	\draw[color = green] (10.4,5.65)--(10.4,6.05)--(5.0,6.05)--(5.0,5.65) -- (10.4,5.65);
	\node[color = black!40!green, scale = 1.4] at (7.25,3.5) {$w(G)$};
	
	\draw[line width = 2pt, color = red, rounded corners] (9.0,3.25)--(9.0,4.6) -- (11.1,4.6);
	\draw[line width = 2pt, color = red] (6.5,3.25)--(6.5,4.2);
	\draw[line width = 2pt, color = red, rounded corners] (6.2,2.8)--(5.2,2.8)--(5.2,5.65);
	
	\node[color = blue, scale = 1.6] at (14.9,0.8) {$T$};
	\node[scale = 1.6] at (12.5,3.8) {$L_i$};
	\node[scale = 1.2, color = orange] at (11.4,1.8) {$A_i$};
	\node[scale = 1.2, color = orange] at (11.4,5.8) {$B_i$};
	\node[scale = 1.6] at (5.4,7.4) {$V(G)\backslash T$};
\end{tikzpicture}}
\caption{Case when $v_i$ is not adjacent to any vertex of $w(G)\setminus A_i$ in $H$}
\label{subfig:inside_T_case2}
\end{subfigure}

\caption{How contractions intersecting both $T$ and $V(G)\setminus T$ would imply large red degree in $G'$. The green boxes represent contracted vertices in $G'$.}
\label{fig:contractions_inside_T}
\end{figure}
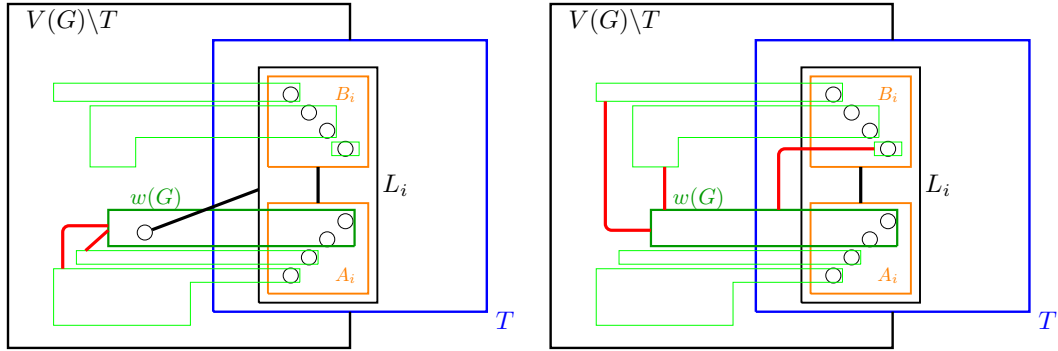

We now prove that there is no red edge $ww' \in R(G')$ such that $w(G) \subseteq L_i$ and $w'(G) \subseteq V(G)\setminus  T$. Suppose by way of contradiction that such an edge exists. As all vertices of $L_i$ play the same role in the adjacencies between $T$ and $G - T$, there is a red edge between $w'$ and any vertex of $G'$ intersecting $L_i$. As at least $2d+2$ of these vertices intersect $A_i$, the red degree of $w'$ is at least $2d+2$, a contradiction.

To end the proof, we show that any contracted vertex $w(G) \subseteq L_i$ verifies either $w(G) \subseteq A_i$ or $w(G) \subseteq B_i$.
Consider the at least $2d+1$ vertices of $G'$ intersecting $A_i$ different from $w$.
They contain elements of $A_i$ which are adjacent to $w(G) \cap A_i$ but not to $w(G) \cap B_i$.
Therefore, the red degree of $w$ is at least $2d+1$, a contradiction.
\end{proof}

In brief, the partial sequence from $G$ to $G'$ does not use contractions mixing vertices of~$T$ and vertices of $V(G)\setminus T$.
As a consequence, $G'$ can be split in two parts: its induced subtrigraph $G_T'$ obtained from contractions on the bicliques and its induced subtrigraph $G_{G - T}'$ obtained from the contractions on $G - T$.
Our objective is to prove that $H$ appears in the first part.

We now focus more specifically on the vertex $u$.
W.l.o.g., we assume that $u(G)$ intersects $A_i \subset L_i$.
We denote by $J_0$ a set of exactly $g(d,t)$ distinct indices $j$ such that $a_{i,j} \in u(G)$.
For every $J \subseteq [t]$, we denote by $A_{i,J}$ the set $\{a_{i,j} : j \in J\}$.
It holds that $|J_0|=|A_{i,J_0}|=g(d,t)$.

\begin{claim}\label{clm:H-appears}
  $H[S]$ is an induced subtrigraph of $G_T'$.
\end{claim}

\begin{proof}\renewcommand{\qedsymbol}{$\lhd$ }
  We initialize a subset $Y \subseteq S$ to $\{v_i\}$, an injective mapping $\rho: Y \hookrightarrow V(G')$ to $v_i \mapsto u$, and a subset $J \subseteq [t]$ to $J_0$.
  We maintain the following invariant:
  \[(\ast)~A_{h,J} \subseteq \rho(v_h)(G)~\text{for every}~v_h \in Y,~\text{and}~2 \leqslant |J| = \frac{t-1}{(2d+2) \cdot (2d)^{|Y|-1}}.\]
  
The invariant initially holds by construction of $J_0$.
While $Y \neq S$, we pick a pair $v_h \in Y, v_{h'} \in S \setminus Y$ such that $v_hv_{h'} \in R(H)$.
Since $v_h(G) \supseteq A_{h,J}$ and there is an induced matching between $A_{h,J}$ and $A_{h',J}$, the set $A_{h',J}$ can be spanned by at most $2d$ parts of $\mathcal P(G')$.
Thus we select of vertex $w_{h'} \in V(G')$ and a subset $J' \subseteq J$ of size $\frac{|J|}{2d}$ such that $w_{h'}(G) \supseteq A_{h',J'}$.
We set the new $J$ to $J'$, and we augment $\rho$ with $v_{h'} \mapsto w_{h'}$.
Finally, we add $v_{h'}$ to $Y$.

The invariant $(\ast)$ is preserved by construction.
Since the graph $\mathcal R(H)$ is connected, this process ends when $Y=S$ and $|J|=\frac{g(d,t)}{(2d)^{|S|-1}}=2$.
We claim that trigraph~$G'[\rho(S)]$, which is an induced subtrigraph of $G_T'$, is isomorphic to~$H[S]$.
Indeed, let $v_h, v_{h'} \in S$ be any pair of vertices, and let $w_h=\rho(v_h)$ and $w_{h'}=\rho(v_{h'})$.
Assume first that $v_hv_{h'} \in R(H)$. 
As $w_h(G) \supseteq A_{h,J}$, $w_{h'}(G) \supseteq A_{h',J}$, $|J| = 2$, and there is an induced matching between $A_{h,J}$ and $A_{h',J}$, there is a red edge in $G'$ between $w_h$ and $w_{h'}$.
If $v_hv_{h'}$ is a black edge in $H[S]$, then any pair of vertices in $w_h(G) \times w_{h'}(G)$ is a black edge in $G$. Hence, sets $w_h(G)$ and $w_{h'}(G)$ are homogeneous and $w_hw_{h'}$ is a black edge in $G'$. Eventually, if $v_hv_{h'}$ is a non-edge in $H[S]$, both $w_h(G)$ and $w_{h'}(G)$ are homogeneous in the sense that they are not adjacent at all. So, $w_hw_{h'}$ is a non-edge in $G'$. In brief, $G'[\rho(S)]$ is isomorphic to $H[S]$.
\end{proof}

This concludes the proof of~\cref{lem:trigraph-encoding}.
Indeed, consider the subtrigraph of $G'$ induced on vertices of both sets $\rho(S)$  and $G_{G - T}'$.
The subtrigraph $G'[\rho(S)]$ is isomorphic to $H[S]$ (\cref{clm:H-appears}) and the second part $G_{G - T}'$ is obtained from contractions on $G - T$ which do not make red edges appear towards $S$ (\cref{clm:ctr-inside-T}).
\end{proof}

If the red graph $\mathcal{R}(H)$ is not connected, one can apply~\cref{lem:trigraph-encoding} for each of its connected components.

\begin{lemma}\label{lem:reduction-trigraph}
  Given any trigraph $H$ whose red graph has degree at most~$d$ and connected components of size at most~$h$, one can compute in time $O_{d,h}(|V(H)|)$ a graph $G$ on $O_{d,h}(|V(H)|)$ vertices such that $H$ has a $2d$-sequence if and only if $G$ has a $2d$-sequence. 
\end{lemma}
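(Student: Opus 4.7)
The plan is to apply~\cref{lem:trigraph-encoding} iteratively, once per non-trivial connected component of $\mathcal{R}(H)$. Enumerate these components as $S_1,\ldots,S_k$, set $H_0 := H$, and let $H_i$ be the trigraph produced by~\cref{lem:trigraph-encoding} applied to $H_{i-1}$ with component $S_i$. This iteration is well-defined because Statement~2 of~\cref{lem:trigraph-encoding} ensures that the gadget replacing $S_i$ in $H_i$ introduces no red edge, so $S_{i+1},\ldots,S_k$ remain connected components of $\mathcal{R}(H_i)$. After $k$ steps, $H_k$ carries no red edges and is the desired plain graph $G$. Since each $|S_i|\leq h$, Statement~1 bounds the size of every gadget by a function of $d$ and $h$ only, whence $|V(G)| = O_{d,h}(|V(H)|)$, and the construction clearly runs in time $O_{d,h}(|V(H)|)$.

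For the equivalence, I prove by induction on $i$ that $H_{i-1}$ has a $2d$-sequence if and only if $H_i$ does. The forward direction is immediate: Statement~4 provides a partial $2d$-sequence from $H_i$ to $H_{i-1}$; concatenating it with any $2d$-sequence of $H_{i-1}$ yields a $2d$-sequence of $H_i$.

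The reverse direction is the substance of the proof. Starting from a $2d$-sequence $\mathcal{S}_i$ of $H_i$, Statement~5 supplies a trigraph $\tilde{H}_{i-1}$, obtained from $H_{i-1}$ by contractions that avoid $S_i$ and create no red edge incident to $S_i$, such that a supertrigraph of $\tilde{H}_{i-1}$ appears along $\mathcal{S}_i$. By~\cref{obs:subtrigraph}, $\tww(\tilde{H}_{i-1}) \leq 2d$. The plan is then to build a $2d$-sequence of $H_{i-1}$ by first performing on $H_{i-1}$ the contractions that turn it into $\tilde{H}_{i-1}$, ordered in the same way as their counterparts on $V(H_i)\setminus T_i$ appear in the prefix of $\mathcal{S}_i$ up to the supertrigraph moment, and then appending a $2d$-sequence of $\tilde{H}_{i-1}$.

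The main obstacle is verifying the $2d$-safety of this first phase. The red degree of a contracted part $P\subseteq V(H_{i-1})\setminus S_i$ is easily controlled: every red edge from $P$ to some $v\in S_i$ in $H_{i-1}$ is matched, in the parallel state of $\mathcal{S}_i$, by at least one red edge from $P$ to a part of the biclique $L_v\subseteq T_i$, so the red degree of $P$ in $H_{i-1}$ is dominated by its red degree in $H_i$. The delicate case is a vertex $v\in S_i$ itself: its $2d$ red-degree budget in $H_{i-1}$ must accommodate up to $d$ red edges within $S_i$ together with the non-homogeneities coming from parts of $V(H_{i-1})\setminus S_i$. Bounding the latter requires a careful accounting of how the biclique $L_v$, together with its matchings to the bicliques $L_{v'}$ for the red-neighbors $v'$ of $v$ in $S_i$, already absorbs part of the $2d$-red-degree budget in $\mathcal{S}_i$, leaving only $d$ units for non-homogeneities; this is where the proof is most technical.
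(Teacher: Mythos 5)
Your proposal follows the paper's route — iterate \cref{lem:trigraph-encoding} component by component and prove equivalence by induction — and the framing, complexity accounting, and forward direction all match. The gap is in the reverse direction, precisely at the step you flag as ``most technical'': you never actually close the $v\in S_i$ case, and the accounting you sketch (``leaving only $d$ units for non-homogeneities'') points in the wrong direction.

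The resolution is simpler than you suggest, and it is already contained in Statement~5 of~\cref{lem:trigraph-encoding} that you cite but then do not fully exploit: the contractions turning $H_{i-1}$ into $\tilde H_{i-1}$ create \emph{no} red edges incident to $S_i$ at any moment. That makes the $v$-case immediate — a vertex $v\in S_i$ keeps red degree at most $d$ throughout the first phase, all within $S_i$. If you prefer to re-derive this rather than invoke it, sharpen your own observation for the $P$-case: since every vertex of $L_v$ has the same (black) neighborhood in $V(H_i)\setminus T_i$ as $v$ has in $V(H_{i-1})\setminus S_i$, if a part $P\subseteq V(H_{i-1})\setminus S_i$ is non-homogeneous to $\{v\}$, then in the parallel $\mathcal S_i$-trigraph $P$ is non-homogeneous to \emph{every} part intersecting $A_v$. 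By the counting used in the proofs of~\cref{clm:ctr-biclique,clm:ctr-inside-T}, at every moment up to the supertrigraph moment $G'$ there are at least $2d+2$ such parts, so $P$ would have red degree $>2d$ in $\mathcal S_i$ — impossible. Hence there are \emph{zero} non-homogeneities between $S_i$ and $V(H_{i-1})\setminus S_i$ throughout the first phase, not ``at most $d$''. As written, with the $v$-case left as a hand-wave, the reverse implication is not established; you need either the full strength of Statement~5 or the $\geq 2d+2$ counting above.
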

\begin{proof}
  Let $S_1,\ldots,S_r$ be the connected components of $\mathcal{R}(H)$ of size at least~2.
  We define a collection of trigraphs $H_0=H,H_1,\ldots,H_r$, where $H_r$ is a plain graph isomorphic to $G$.
  Each trigraph $H_{\ell+1}$ will be built from $H_{\ell}$ by applying~\cref{lem:trigraph-encoding} on $S_{\ell+1}$.
  We proceed by induction and prove, for any $0\le \ell \le r$, that:
\begin{compactitem}
\item $\card{V(H_{\ell})}$ is linear in $\card{V(H)}$ if $d,h = O(1)$: $\card{V(H_{\ell})} = f_{\ell}(d,h)\card{V(H)}$,
\item the connected components of $\mathcal R(H_{\ell})$ of size at least~2 have vertex sets $S_{\ell+1}, \ldots, S_r$,
\item $H_{\ell}$ has a $2d$-sequence iff $H$ has a $2d$-sequence.
\end{compactitem}

The base case is trivial, as $H_0 = H$.
Let us assume, for the induction step, that $H_{\ell}$ satisfies these three properties.
We consider the red component $S_{\ell+1}$ in $H_{\ell}$ to build trigraph $H_{\ell+1}$.
We apply~\cref{lem:trigraph-encoding} with trigraph $H_{\ell}$ and vertex set $S_{\ell+1}$. 
The trigraph $H_{\ell+1}$ substitutes $H_{\ell}[S_{\ell+1}]$ with a plain graph $H_{\ell+1}[T]$ not touched by a red edge.
Thus the vertex sets of connected components of $\mathcal R(H_{\ell+1})$ of size at least~2 are those of $\mathcal R(H_\ell)$ minus $S_{\ell+1}$.

Moreover, the size of $T$ is bounded by $d$ and $|S_{\ell+1}|$ only.
As a consequence, 
 $$\card{V(H_{\ell+1})} = \card{T} + \card{V(H_{\ell})\setminus S_{\ell+1}} \le h(d,\card{S_{\ell+1}}) + f_{\ell}(d,h)\card{V(H)} \le f_{\ell+1}(d,h)\card{V(H)},$$
where $h$ is the function defined accordingly to~\cref{lem:trigraph-encoding}.

We finally show that $H_{\ell+1}$ has a $2d$-sequence iff $H_{\ell}$ has a $2d$-sequence.
This is a direct consequence of~\cref{lem:trigraph-encoding}.
On one hand, there is a $2d$-partial sequence from $H_{\ell+1}$ to $H_{\ell}$, so if $H_{\ell}$ has a $2d$-sequence, then $H_{\ell+1}$ has one, too.
On the other hand, if $H_{\ell+1}$ has a $2d$-sequence, then one trigraph of the sequence is a supertrigraph of $\tilde{H}_{\ell}$, where $\tilde{H}_{\ell}$ is a trigraph which can be obtained from $H_{\ell}$ by performing contractions preserving $S_{\ell}$ as a red connected component of the trigraph.
One can deduce from this statement a $2d$-sequence of $H_{\ell}$: first contract $H_{\ell}$ to obtain $\tilde{H}_{\ell}$, then conclude by~\cref{obs:subtrigraph}.
 
As a conclusion, trigraph $G=H_r$ satisfies the induction hypotheses: its red graph is edgeless and it has a $2d$-sequence if and only if $H$ admits one, too.
\end{proof}

In the sequel, we will use~\cref{lem:reduction-trigraph} with $d=2$ and $h=12$.
In particular, the size of the encoding graph will be linear in the trigraph.
As paths are connected graphs of degree at most~2, we will invoke this scaled-down version.

\begin{lemma}\label{lem:reduction-trigraph-sd}
  Given any trigraph $H$ whose red graph is a disjoint union of 12-vertex paths and isolated vertices, one can compute in polynomial time a graph $G$ on $O(|V(H)|)$ vertices such that $H$ has twin-width at most~4 if and only if $G$ has twin-width at most~4. 
\end{lemma}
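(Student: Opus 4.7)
The plan is to invoke \cref{lem:reduction-trigraph} as a black box with the parameters $d=2$ and $h=12$. First I would verify that the hypotheses match: since the red graph $\mathcal R(H)$ is a disjoint union of paths on (at most) $12$ vertices and isolated vertices, every vertex has red degree at most $2$, and every connected component of $\mathcal R(H)$ has size at most $12$. These are precisely the input requirements of \cref{lem:reduction-trigraph} with $d=2$ and $h=12$.

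Second, I would plug these values into the conclusion of \cref{lem:reduction-trigraph}. The construction produces in time $O_{2,12}(|V(H)|)$ a structure $G$ on $O_{2,12}(|V(H)|)$ vertices, and since here $d$ and $h$ are absolute constants, both bounds simplify to polynomial time and $O(|V(H)|)$ vertices, as required. \cref{lem:reduction-trigraph} guarantees that $H$ has a $2d$-sequence if and only if $G$ has one; with $d=2$ this is a $4$-sequence, which by definition of twin-width is equivalent to saying that twin-width is at most $4$. Finally, inspecting the inductive construction inside the proof of \cref{lem:reduction-trigraph}, the last trigraph $H_r$ obtained after processing every red connected component has an edgeless red graph, so $G$ is indeed a plain graph rather than a trigraph.

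There is essentially no obstacle here, and no separate argument to make: the statement is a direct specialization of the earlier lemma, isolated only because it is the specific ``scaled-down'' form that will be used repeatedly throughout the hardness construction of \cref{sec:hardness} (where every trigraph to be encoded is built so that its red graph decomposes into short paths and isolated vertices, exactly to satisfy the hypothesis of this lemma).
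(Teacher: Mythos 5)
Your proposal is exactly the paper's argument: the lemma is stated as a direct specialization of \cref{lem:reduction-trigraph} with $d=2$ and $h=12$, observing that disjoint unions of 12-vertex paths and isolated vertices have red degree at most~2 and red components of size at most~12. Nothing further is required.
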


\subsection{Foreword to the reduction}\label{subsec:foreword}

Our task is now slightly simpler.
Given a \textsc{$3$-SAT} instance $I$, we may design a \emph{trigraph} satisfying the requirements of~\cref{lem:reduction-trigraph-sd} with twin-width at most 4 if and only if $I$ is satisfiable.

As already mentioned, given an instance $I$ of \textsc{$3$-SAT} we will create an equivalent instance of the problem: \emph{is trigraph $G$ of twin-width at most~4?}  
We now present the various gadgets used in our reduction.
The only building block featuring red edges is the fence gadget.
Its red graph is a 12-vertex path that is a connected component in the red graph of the overall construction $G$.
Hence $G$ can be turned into a graph with only a constant multiplicative blow-up, by~\cref{lem:reduction-trigraph-sd}.

The correctness of the reduction naturally splits into two implications:
\begin{compactitem}
\item($i$) If $G$ admits a 4-sequence, then $I$ is satisfiable. 
\item($ii$) If $I$ is satisfiable, then $G$ admits a 4-sequence.
\end{compactitem}
We will motivate all the gadgets along the way, by exhibiting key properties that they impose on a potential 4-sequence.
These properties readily leads to a satisfying assignment for $I$.
So the proof of $(i)$ will mostly consist of aggregating lemmas specific to individual gadgets.

We also describe partial 4-sequences to reduce most of the gadgets.
However some preconditions (specifying the context in which a particular gadget stands) tend to be technical, and make more sense after the construction of $G$.
In those cases, to avoid unnecessarily lengthy lemmas, we only give an informal strategy, and postpone the adequate contraction sequence to the final proof of $(ii)$.   

\subsection{Fence gadget} \label{subsection:fg}

We now design a gadget $F$ partitioned into two vertex sets $A, B$.
The gadget is \emph{attached} to a non-empty subset $S \subseteq V(G) \setminus (A \cup B)$ by making $A$ and $S$ fully adjacent, and $B$ and $S$ fully non-adjacent.
Our intent is that, in a 4-sequence, a vertex of $F$ can be contracted with another vertex of the graph only when $S$ has been contracted into a single vertex.
We will ensure that every vertex outside $S \cup V(F)$ is fully adjacent to $B$ or fully non-adjacent to $A$.
As $|A|=|B|=6>4$ will hold, the effect is that vertices of $S$ have to be contracted to a single vertex before any vertex of $S$ (or subset of vertices within $S$) can be contracted with a vertex or subset of vertices in $G - S$.
To summarize, $F$ encloses $S$ into an ``unbreakable unit:'' the inside of $S$ cannot be contracted with the outside as long as $S$ is not a single vertex.  
Hence we call $F$ a \emph{fence gadget}.

The fence gadget is defined as a trigraph whose red graph is a simple path on 12 vertices, in particular, a connected graph of maximum degree~2.
In the overall trigraph $G$, no red edge will link a vertex in the gadget to a vertex outside of it.
Thus the fence gadget can be replaced by a graph by~\cref{lem:reduction-trigraph-sd}.

We now define the fence gadget.
Its vertex set is $A \cup B$ with $A=\{a_1,a_2,a_3,a_4,a_5,a_6\}$ and $B=\{b_1,b_2,b_3,b_4,b_5,b_6\}$.
Its black edge set consists of 13 edges: the cycles $a_1a_2a_3a_4a_5a_6a_1$ and $b_1b_2b_3b_4b_5b_6b_1$, plus the edge $b_1a_6$.
Its red edge set consists of 11 edges: $a_ib_i$ for each $i \in [6]$, and $a_ib_{i+1}$ for each $i \in [5]$.
Finally $A$ is made fully adjacent to $S$.
See~\cref{fig:fence-gadget} for an illustration.

\begin{figure}[ht!]
  \centering
  \begin{tikzpicture}
    \foreach \i in {1,...,6}{
      \node[draw,circle,inner sep=0.03cm] (b\i) at (360 * \i/6:1) {$b_\i$} ;
      \node[draw,circle,inner sep=0.03cm] (a\i) at (360 * \i/6:2) {$a_\i$} ;
      \draw[very thick,red] (a\i) -- (b\i) ;
    }
    \foreach \i [count = \ip from 2] in {1,...,5}{
      \draw (a\i) -- (a\ip) ;
      \draw (b\i) -- (b\ip) ;
      \draw[very thick,red] (a\i) -- (b\ip) ;
    }
    \draw (a1) -- (a6) ;
    \draw (b1) -- (b6) ;
    \draw[thick] (a6) to node[above]{$e$} (b1) ;
    
  \end{tikzpicture}
  \caption{The fence gadget $F$, with $A=\{a_i~|~1 \leqslant i \leqslant 6\}$ and $B=\{B_i~|~1 \leqslant i \leqslant 6\}$.}
  \label{fig:fence-gadget}
\end{figure}

We will later nest fence gadgets.
Thus we have to tolerate that $F$ has other neighbors than $S$ in $G$.
Actually we even allow $V(F)$ to have neighbors outside of $S$ and the fence gadgets surrounding $F$.
We however always observe the following rule.

\begin{definition}[Attachment rule]\label{def:attachment-rule}
  A fence gadget $F$ with vertex bipartition $(A,B)$, and attached to $S$, satisfies the \emph{attachment rule} in a trigraph $H$ if $F$ is a connected component of~$\mathcal R(H)$, and there is a set $X \subseteq V(H) \setminus (A \cup B \cup S)$ such that:
  \begin{compactitem}
  \item $\forall x \in A$, $N(x) \setminus V(F) = X \cup S$,
  \item $\forall x \in B$, $N(x) \setminus V(F) = X$, and
  \item $\forall x \in X$, $S \subset N(x)$.
  \end{compactitem}
\end{definition}
Initially in $G$, we make sure that all the fence gadgets satisfy the attachment rule.
This will remain so until we decide to contract them.

We denote by $Y$ the set $V(G) \setminus (V(F) \cup S \cup X)$, when $X$ is defined in the attachment rule.
We make the following observations on the three possible neighborhoods that vertices outside of $F$ have within $V(F)$. 

\begin{observation}\label{obs:fg5} The fence gadget definition and the attachment rule implies:
  \begin{compactitem}
  \item $\forall x \in S$, it holds $N(x) \cap V(F) = A$,
  \item $\forall x \in X$, it holds $N(x) \cap V(F) = V(F) = A \cup B$, and
  \item $\forall x \in Y$, it holds $N(x) \cap V(F) = \emptyset$.
  \end{compactitem}
\end{observation}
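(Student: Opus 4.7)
The plan is to verify each of the three bullets by a direct contrapositive reading of \cref{def:attachment-rule}: the attachment rule phrases everything from the perspective of vertices inside $F$, and I simply translate those constraints into what they say about each vertex $x$ lying outside $V(F)$, using the symmetry of adjacency. Note that $S$, $X$, and $Y$ form a partition of $V(G)\setminus V(F)$ by the definition $Y=V(G)\setminus(V(F)\cup S\cup X)$ together with the inclusion $X\subseteq V(H)\setminus(A\cup B\cup S)$ built into the attachment rule; this pairwise disjointness is the one point on which the argument rests.

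For $x\in S$, I would show membership and non-membership separately. Since every $a\in A$ satisfies $S\subseteq N(a)\setminus V(F)$, we get $a\in N(x)$, so $A\subseteq N(x)\cap V(F)$. Conversely, every $b\in B$ satisfies $N(b)\setminus V(F)=X$, and since $X\cap S=\emptyset$, we have $x\notin N(b)$, so $B\cap N(x)=\emptyset$. For $x\in X$, the rule directly gives $X\subseteq N(a)\setminus V(F)$ for every $a\in A$ and $X=N(b)\setminus V(F)$ for every $b\in B$, hence $A\cup B\subseteq N(x)$, which is the claimed equality. For $x\in Y$, the definition of $Y$ gives $x\notin S\cup X$; combined with $N(a)\setminus V(F)\subseteq S\cup X$ for $a\in A$ and $N(b)\setminus V(F)=X$ for $b\in B$, this rules out $x$ being adjacent to any vertex of $V(F)$.

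There is no genuine obstacle: the statement is pure bookkeeping on top of the attachment rule. I would present it as three short one-line verifications, highlighting only the disjointness $S\cap X=\emptyset$ (needed for the first bullet) and the fact that $x\in Y$ implies $x\notin S\cup X$ (needed for the third). No property of the internal structure of $F$, nor of the red edges, is used.
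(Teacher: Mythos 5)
Your proof is correct, and it is the straightforward unwinding of the attachment rule that the paper leaves implicit by labelling the statement an Observation without a written proof. You correctly identify the one fact that does actual work, namely that $S$, $X$, and $Y$ are pairwise disjoint and disjoint from $V(F)$ — $S\cap V(F)=\emptyset$ since $S\subseteq V(G)\setminus(A\cup B)$ by the fence construction, $X$ is disjoint from $A\cup B\cup S$ by \cref{def:attachment-rule}, and $Y$ is defined as the complement — and then each bullet follows by reading the three adjacency conditions of the attachment rule through the symmetry of the neighbourhood relation. No gap.
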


Henceforth we will represent every fence gadget as a brown rectangle surrounding the set~$S$ it is attached to.
The vertices of $X$ are linked to the brown rectangle, as they are fully adjacent to $S \cup V(F)$.
See~\cref{fig:fence-ar} for an illustration of the attachment rule, and a compact representation of fence gadgets, and~\cref{fig:fence-nesting} for two nested fence gadgets and how the attachment rule is satisfied for both. 
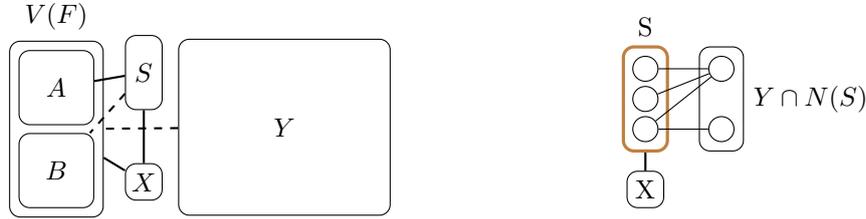
\begin{figure}[h!]
  \centering
  \begin{tikzpicture}
    \foreach \l/\i/\j in {a1/0/0,a2/0.5/0.5,b1/0/-0.6,b2/0.5/-1.1, s1/1.4/0.2,s2/1.4/0.7, x/1.4/-1, y1/2.1/-1.2,y2/4.4/0.65}{
      \node (\l) at (\i,\j) {} ;
    }
    \foreach \i/\j/\w in {{(a1) (a2)}/A/A,{(b1) (b2)}/B/B,{(A) (B)}/F/, {(s1) (s2)}/S/S, (x)/X/X, {(y1) (y2)}/Y/Y}{
      \node[draw,rounded corners,fit=\i,label=center:$\w$] (\j) {} ;
    }
    \node at (0.25,1.2) {$V(F)$} ;

    \foreach \i/\j/\k in {S/A/, S/X/, X/F/, Y/F/dashed, S/B/dashed}{
      \draw[thick,\k] (\i) -- (\j) ;
    }

    \begin{scope}[xshift=8cm, yshift=-0.5cm]
    \foreach \l/\i/\j/\o in {s1/0/0.2,s2/0/0.6,s3/0/1, y1/1/0.2,y2/1/1}{
      \node[draw,circle] (\l) at (\i,\j) {} ;
    }
    \node (x) at (0,-0.6) {} ;
    \node[draw,rounded corners,fit=(x),label=center:X] (X) {} ;
    \node[draw,rounded corners,fit=(y1) (y2),label=right:$Y \cap N(S)$] (Y) {} ;
    \node[draw,very thick,brown,rounded corners,fit=(s1) (s2) (s3),label=S] (S) {} ;
    \draw[thick] (X) -- (S) ;
    \foreach \i/\j in {s1/y2,s2/y2,s3/y2,s1/y1}{
      \draw (\i) -- (\j) ;
    }
    \end{scope}
  \end{tikzpicture}
  \caption{Left: The forced adjacencies (solid lines, all edges between the two sets) and non-adjacencies (dashed lines, no edge between the two sets), as specified by the attachment rule.
  Right: Symbolic representation of the fence gadget attached to $S$ by a brown rectangle; the vertices in $X$ are linked to the brown box, while the potential neighbors of $S$ in $Y$ are only linked individually to their neighbors in $S$, and are fully non-adjacent to the vertices of the fence $V(F)$. Possible edges between $X$ and $Y$ are \emph{not} represented.}
  \label{fig:fence-ar}
  
\end{figure}  
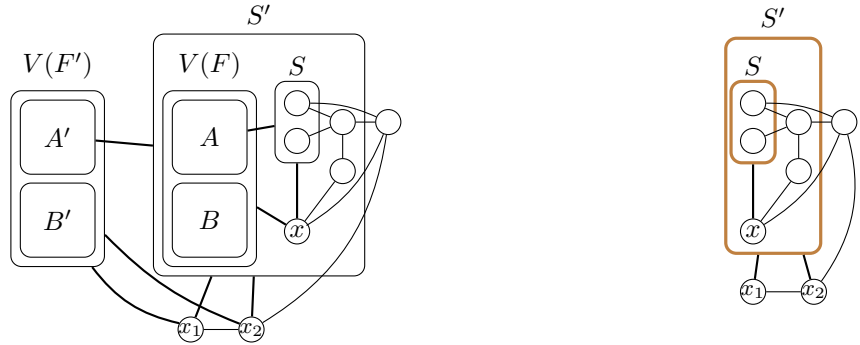
\begin{figure}[h!]
  \centering
  \begin{tikzpicture}
    \foreach \l/\i/\j in {a1/0/0,a2/0.5/0.5,b1/0/-0.6,b2/0.5/-1.1, ap1/-2/0,ap2/-1.5/0.5,bp1/-2/-0.6,bp2/-1.5/-1.1, x/1.4/-1, y1/2.1/-1.2,y2/4.4/0.65}{
      \node (\l) at (\i,\j) {} ;
    }
    \foreach \l/\i/\j in {s1/1.4/0.2,s2/1.4/0.7, s3/1.4/-1, t/2/0.45, u/2/-0.2, v/2.6/0.45, x1/0/-2.3, x2/0.8/-2.3}{
      \node[draw,circle] (\l) at (\i,\j) {} ;
    }
    \node at (1.4,-1) {$x$} ;
    \node at (0,-2.3) {\footnotesize{$x_1$}} ;
    \node at (0.8,-2.3) {\footnotesize{$x_2$}} ;
    \foreach \i/\j/\w in {{(a1) (a2)}/A/A,{(b1) (b2)}/B/B,{(A) (B)}/F/, {(ap1) (ap2)}/Ap/A',{(bp1) (bp2)}/Bp/B',{(Ap) (Bp)}/Fp/,{(s1) (s2)}/S/}{
      \node[draw,rounded corners,fit=\i,label=center:$\w$] (\j) {} ;
    }
    \node (tf) at (0.25,1.2) {$V(F)$} ;
    \node (ts) at (1.4,1.2) {$S$} ;
    \node[draw,rounded corners,fit=(tf) (ts) (S) (F) (u) (t) (s3),label=$S'$] (Sp) {} ;
    \node (tfp) at (-1.75,1.2) {$V(F')$} ;
    
    \foreach \i/\j/\k in {S/A/0,s3/S/0,s3/F/0,Sp/Ap/0,x1/Sp/0,x1/Fp/20,x2/Sp/0,x2/Fp/12}{
      \draw[thick] (\i) to [bend left=\k] (\j) ;
    }
    \foreach \i/\j/\k in {t/s1/0,t/s2/0,u/s3/0,u/t/0,x1/x2/0,v/s2/-12,v/t/0,v/x/18,v/x2/25}{
      \draw (\i) to [bend left=\k] (\j) ;
    }

    \begin{scope}[xshift=6cm, yshift=0cm]
    \foreach \l/\i/\j/\o in {s1/1.4/0.2,s2/1.4/0.7, x/1.4/-1, t/2/0.45, u/2/-0.2, v/2.6/0.45,x1/1.4/-1.8, x2/2.2/-1.8}{
      \node[draw,circle] (\l) at (\i,\j) {} ;
    }
    \node at (1.4,-1) {$x$} ;
    \node (tS) at (1.4,1.2) {$S$} ;
    \node at (1.4,-1.8) {\footnotesize{$x_1$}} ;
    \node at (2.2,-1.8) {\footnotesize{$x_2$}} ;
    \node[draw,very thick,brown,rounded corners,fit=(s1) (s2)] (S) {} ;
    \node[draw,very thick,brown,rounded corners,fit=(tS) (s1) (s2) (t) (u) (x),label=$S'$] (Sp) {} ;
    \foreach \i/\j/\k in {t/s1/0,t/s2/0,u/x/0,u/t/0,x1/x2/0,v/s2/-12,v/t/0,v/x/18,v/x2/25}{
      \draw (\i) to [bend left=\k] (\j) ;
    }
    \foreach \i/\j in {x/S,x1/Sp,x2/Sp}{
      \draw[thick] (\i) -- (\j) ;
    }
    \end{scope}
  \end{tikzpicture}
  \caption{Left: Two nested fences, where the set $X$ of the innermost fence is $\{x,x_1,x_2\} \cup V(F')$, while the set $X'$ of the outermost fence is $\{x_1,x_2\}$.
  Right: Its compact representation.}
  \label{fig:fence-nesting}
\end{figure} 

\medskip

\textbf{Constraints of the fence gadget on a 4-sequence.} 
The following lemmas are preparatory steps for the milestone that no part in a 4-sequence of $G$ can overlap $S$ (that is, intersects $S$ without containing it).

\begin{lemma}\label{lem:fg1}
  The first contraction involving two vertices of $F$ results in a vertex of red degree at least~5, except if it is a contraction of some $a_i \in A$ with some $b_j \in B$.
\end{lemma}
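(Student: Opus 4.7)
The plan is a direct case analysis on the contracted pair $(u,v)\in V(F)^2$, split into two sub-cases: $u,v\in A$ and $u,v\in B$.

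First I would argue that the merged vertex $w$ has no red edge to anything outside $V(F)$. By the attachment rule and~\cref{obs:fg5}, each $z\notin V(F)$ meets $V(F)$ in exactly $A$, $V(F)$, or $\emptyset$ (according to whether $z\in S$, $z\in X$, or $z\in Y$). Consequently, when $u,v$ belong to the same side ($A$ or $B$), both $uz$ and $vz$ have the same status: either both black or both non-edges. So $wz$ stays black or a non-edge, and the red degree of $w$ must be counted entirely within $V(F)\setminus\{u,v\}$. This reduces the problem to a finite check that exploits the explicit fence structure: the black edges restricted to $V(F)$ are the two 6-cycles on $A$ and $B$ plus the single bridge $a_6b_1$, and the red edges are $\{a_ib_i:i\in[6]\}\cup\{a_ib_{i+1}:i\in[5]\}$.

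Assume now $u=a_i$, $v=a_j$ with $i<j$. For a vertex $b_k\in B$, $wb_k$ is red iff $(a_ib_k,a_jb_k)$ is neither both non-edges nor both black. When $6\notin\{i,j\}$, the only edges from $\{a_i,a_j\}$ to $B$ are red, so the red $B$-neighbors of $w$ are exactly $\{b_i,b_{i+1},b_j,b_{j+1}\}$, a set of size at least $3$ (equal to $3$ only when $j=i+1$). When $j=6$, the black edge $a_6b_1$ promotes $b_1$ to a red neighbor, and the union $\{b_i,b_{i+1},b_6,b_1\}$ still has size at least $3$. For red $A$-neighbors, writing $N_i=\{a_{i-1},a_{i+1}\}$ (indices mod $6$), the vertex $a_k$ is a red neighbor of $w$ iff $a_k\in (N_i\triangle N_j)\setminus\{a_i,a_j\}$; a direct verification gives cardinality $2$ when $j-i\not\equiv 3\pmod 6$ and $4$ otherwise. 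In every configuration the total is at least $3+2=5$.

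The case $u=b_i$, $v=b_j$ is treated by the symmetric bookkeeping with $A$ and $B$ swapped; the only asymmetry, the black edge $a_6b_1$, merely adds $a_6$ to the red $A$-neighbors of $w$ whenever $1\in\{i,j\}$, which actually helps the lower bound. The analogous counts give at least $3$ red $A$-neighbors and at least $2$ red $B$-neighbors, so again at least $5$. The conclusion follows for all same-side pairs.

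The main obstacle is mostly cosmetic: carrying the case analysis in a compact form despite the two boundary phenomena, namely the cyclic indexing in the 6-cycles and the exceptional black bridge $a_6b_1$. A natural way to keep the writeup short is to split the pairs $(i,j)$ into adjacent-in-cycle versus non-adjacent, and to handle separately the four pairs whose indices touch $\{1,6\}$, since those are precisely where $a_6b_1$ intervenes.
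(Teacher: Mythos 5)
Your proof is correct; it differs from the paper's in method, not substance. You carry out an explicit index-by-index count over the two $6$-cycles, the red matching $\{a_ib_i\}$, the red near-matching $\{a_ib_{i+1}\}$, and the black bridge $a_6b_1$, splitting the same-side pair $(i,j)$ according to cycle-adjacency and whether it touches the $\{1,6\}$ boundary. The paper instead argues via private neighbors in the total graph $\mathcal T(F)$: if $u,v$ are non-adjacent in $F$, each has at least three private neighbors, giving red degree at least $6$ after the merge; if adjacent, each has at least two private neighbors and moreover they share a neighbor joined in red to at least one of them, giving red degree at least $5$. The private-neighbor framing buys brevity and sidesteps the modular-index bookkeeping; your version is more explicit but arrives at the same minima for the same reasons.

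Two remarks on precision. Both proofs tacitly use that the red neighbors identified inside $V(F)$ lie in pairwise distinct parts, which holds because this is the \emph{first} contraction involving two vertices of $F$; the paper states this explicitly, you leave it implicit. Also, your opening claim that $w$ ``has no red edge to anything outside $V(F)$'' is stronger than what you justify: a prior contraction not involving two $F$-vertices could a priori have created a red edge incident to $V(F)$, and the parts containing $u$ and $v$ need not be singletons. What you actually show, and all the lower bound requires, is that the contraction of $u$ with $v$ itself creates no new red edge leaving $V(F)$, so the count inside $V(F)$ is a valid lower bound on the red degree of $w$.
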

\begin{proof}
  We consider a pair of distinct vertices $u, v$ of $A \times A$ or $B \times B$.
  If $u$ and $v$ are non-adjacent in $F$, they both have at least three private neighbors in the total graph $\mathcal T(F)$.
  Thus their contraction results in a vertex of red degree at least~6, since by assumption these at least 6 vertices lie in distinct parts.

  If $u$ and $v$ are adjacent in $F$, they both have at least two private neighbors in $\mathcal T(F)$, and a common neighbor $w$ such that at least one of $uw$ and $vw$ is red in $F$.
  Hence their contraction results in a vertex with red degree at least~5.
\end{proof}

\begin{lemma}\label{lem:fg2}
  If the first contraction involving two vertices of $F$ is of some $a_i \in A$ with some $b_j \in B$, the red degree within $F$ of the created vertex is at least~3.
\end{lemma}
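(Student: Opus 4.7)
The plan is to compute explicitly the red-within-$F$ degree of the new vertex $u$ obtained by contracting $a_i$ and $b_j$, and then bound the relevant quantities. Since by hypothesis this is the \emph{first} contraction touching $F$, every vertex of $V(F) \setminus \{a_i, b_j\}$ is still a single-vertex part. Thus a vertex $v \in V(F) \setminus \{a_i, b_j\}$ becomes a red neighbor of $u$ if and only if $v$ is a neighbor of $a_i$ or of $b_j$ in $F$, but not both via black edges. Writing $c := |N_F(a_i) \cap N_F(b_j)|$ and $c_b$ for the number of common black neighbors in $V(F) \setminus \{a_i, b_j\}$, the red degree of $u$ within $F$ is therefore
\[
|N_F(a_i) \cup N_F(b_j) \setminus \{a_i, b_j\}| - c_b.
\]

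The first step is to simplify this expression. A direct verification of the adjacency list shows that \emph{every} vertex of $F$ has total degree exactly $4$ in $F$. Hence by inclusion-exclusion $|N_F(a_i) \cup N_F(b_j)| = 8 - c$; moreover $a_i$ (resp.\ $b_j$) lies in this union if and only if $a_i b_j$ is an edge of $F$, and both lie there simultaneously. Subtracting $\{a_i, b_j\}$ gives
\[
\text{red-within-}F\text{-degree of } u \;=\; 8 - c - c_b - 2 \cdot \mathbb{1}[a_ib_j \in E(F)].
\]
So the lemma reduces to proving $c + c_b + 2\cdot \mathbb{1}[a_i b_j \in E(F)] \le 5$.

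The second step is to bound $c \le 2$ and $c_b \le 1$. For $c$, I would split the intersection into its $A$- and $B$-parts. On the $A$-side, $N_F(a_i) \cap A = \{a_{i-1}, a_{i+1}\}$ (indices cyclic modulo~$6$), which is a pair of elements at cyclic distance $2$; on the other hand $N_F(b_j) \cap A$ is $\{a_{j-1}, a_j\}$ for $j \ge 2$ or $\{a_1, a_6\}$ for $j = 1$, a pair of elements at cyclic distance $1$. Two such pairs share at most one element, so $|N_F(a_i) \cap N_F(b_j) \cap A| \le 1$; symmetrically $|\dots \cap B| \le 1$, giving $c \le 2$. For $c_b$, observe that the only black edge of $F$ between $A$ and $B$ is $a_6 b_1$; therefore a common black neighbor of $a_i$ and $b_j$ can only be $a_6$ or $b_1$. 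But $a_6 \in N_b(b_j)$ forces $j = 1$, while $b_1 \in N_b(a_i)$ forces $i = 6$, and the remaining conditions $a_6 \in N_b(a_i)$ and $b_1 \in N_b(b_j)$ are then incompatible with each other (as $i \in \{1,5\}$ is incompatible with $i = 6$). Hence at most one of $a_6, b_1$ is a common black neighbor, so $c_b \le 1$.

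Plugging $c \le 2$, $c_b \le 1$, and $\mathbb{1}[a_ib_j \in E(F)] \le 1$ into the formula yields red-within-$F$ degree at least $8 - 2 - 1 - 2 = 3$, which concludes the proof. There is no real obstacle beyond the routine bookkeeping on $F$; the only mildly delicate point is the bound $c_b \le 1$, which rests on noticing that the single bipartite black edge $a_6 b_1$ allows at most one of $\{a_6, b_1\}$ to serve as a common black neighbor. (Equality red degree $=3$ occurs precisely at the two ``endpoint'' contractions $(a_1, b_1)$ and $(a_6, b_6)$, which correspond to the two ends of the red path constituting $\mathcal R(F)$.)
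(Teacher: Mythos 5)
Your proof is correct and arrives at the stated bound, but it organizes the bookkeeping differently from the paper. The paper's own proof argues via a dichotomy on private neighbors in $\mathcal T(F)$: either $a_i$ and $b_j$ each have two private neighbors, or they each have one private neighbor plus a common neighbor joined to $a_i$ or $b_j$ by a red edge; either way one gets $\ge 3$ red edges within $F$. Your approach instead writes the red-within-$F$ degree as an explicit inclusion–exclusion expression $8 - c - c_b - 2\cdot\mathbb{1}[a_ib_j\ \text{adjacent}]$ and bounds each non-constant term ($c\le 2$, $c_b\le 1$, indicator $\le 1$). Your version is more systematic and is arguably easier to audit, since the three bounds are local and independently checkable, whereas the paper's dichotomy has to be verified to be exhaustive by looking at cases. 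Your extra observation pinpointing equality at the two end contractions $(a_1,b_1)$, $(a_6,b_6)$ is also correct.

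One small imprecision worth fixing: you write that since this is the first contraction touching $F$, ``every vertex of $V(F)\setminus\{a_i,b_j\}$ is still a single-vertex part.'' The hypothesis of the lemma is only that this is the first contraction involving \emph{two} vertices of $F$; prior contractions may have merged individual $F$-vertices with vertices outside $F$, so they need not be singletons. What is true, and all your argument actually needs, is that the vertices of $V(F)\setminus\{a_i,b_j\}$ lie in pairwise distinct parts. With that phrasing, the formula you derive should also be read as a lower bound on the red-within-$F$ degree rather than an exact value (common black neighbors of $a_i,b_j$ may still end up red-adjacent to $u$ if the enclosing parts are not homogeneous), but that only strengthens the conclusion and does not affect the lemma.
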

\begin{proof}
Let $u \in A$ and $v \in B$.
Then, in the total graph $\mathcal T(F)$, either $u$ and $v$ both have at least two private neighbors, or they both have at least one private neighbor and a common neighbor linked to $u$ or $v$ in red in the trigraph $F$. 
By assumption, these at least 4~neighbors or at least 3~neighbors are in distinct parts. 
Hence, in both cases, the contraction of $u$ and $v$ results in a vertex of red degree at least 3 within $F$.
\end{proof}

The last preparatory step is this easy lemma.
\begin{lemma}\label{lem:fg6}
Before a contraction involves two vertices of $V(F)$, the following holds in a partial 4-sequence:
  \begin{compactitem}
  \item no part intersects both $X$ and $S$,
  \item no part intersects both $Y$ and $S$, and
  \item no part intersects both $X$ and $Y$.
  \end{compactitem}
\end{lemma}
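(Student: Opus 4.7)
\medskip

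\noindent\textbf{Proof plan for~\cref{lem:fg6}.}
The plan is to derive a contradiction on the red degree of any part that overlaps two of $S, X, Y$. The key enabling observation is that as long as no contraction has involved two vertices of $V(F)$, the $12$ vertices of $V(F)$ lie in $12$ pairwise distinct parts of the current partition, and any single part contains at most one vertex of $V(F)$.

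So, assume we are at a trigraph of a partial $4$-sequence preceding the first contraction that involves two vertices of $V(F)$, and suppose toward a contradiction that some part $P$ intersects two of the sets $S, X, Y$. The strategy is, for each of the three cases, to exhibit a side of the fence (either $A$ or $B$) whose $6$ vertices all sit in parts in red conflict with $P$. By~\cref{obs:fg5}, the neighborhoods of $S$, $X$, $Y$ inside $V(F)$ are respectively $A$, $V(F)$, and $\emptyset$, which pairwise differ on either $A$ or $B$: if $P$ meets both $S$ and $X$ take witnesses $s \in P \cap S$ and $x \in P \cap X$ together with the side $B$ (on which $s$ is non-adjacent and $x$ is adjacent); if $P$ meets both $S$ and $Y$, take witnesses in $S$ and $Y$ together with the side $A$; and if $P$ meets both $X$ and $Y$, take witnesses in $X$ and $Y$ together with either side.

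In each case, for every $v$ in the chosen side, the pair $(P, Q_v)$ (where $Q_v$ denotes the part containing $v$) is non-homogeneous: the two witnesses inside $P$ disagree on their adjacency with $v$. Hence $P Q_v$ is a red edge of the current trigraph whenever $Q_v \neq P$. The $6$ parts $Q_v$ are pairwise distinct by the enabling observation, and at most one of them can equal $P$, since $P$ contains at most one vertex of $V(F)$. Therefore $P$ has red degree at least $5$, contradicting the fact that we are inside a partial $4$-sequence.

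There is no serious obstacle: the only point to watch is the enabling observation that the vertices of $V(F)$ occupy pairwise distinct parts, which is precisely the content of the hypothesis ``before a contraction involves two vertices of $V(F)$'' and ensures that we really do obtain $5$ distinct red-neighbors of $P$, not just formally $6$ parts some of which might collapse.
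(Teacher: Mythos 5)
Your proof is correct and takes essentially the same approach as the paper: use \cref{obs:fg5} to find a side of the fence on which the two witnesses disagree, and count red conflicts towards the parts holding that side's six vertices. You are in fact slightly more careful than the paper's one-line proof, which asserts red degree exactly $|B|=6$, $|A|=6$, $|A\cup B|=12$ without addressing the possibility that $P$ itself contains one vertex of $V(F)$ (allowed, since only contractions of \emph{two} fence vertices are excluded); your lower bound of $5$ handles that case cleanly and still contradicts red degree at most~$4$.
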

\begin{proof}
By~\cref{obs:fg5}, such a part would have red degree $|B|=6$, $|A|=6$, and $|A \cup B|=12$, respectively. 
\end{proof}

As a consequence we obtain the following.

\begin{lemma}\label{lem:fg3}
  In a partial 4-sequence of $G$, the first contraction involving a vertex in $V(F)$ and a vertex in $V(F) \cup S$ has to be done after $S$ is contracted into a single vertex. 
\end{lemma}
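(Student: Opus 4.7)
The plan is to argue by contradiction: suppose that in a partial $4$-sequence of $G$ the first contraction involving a vertex of $V(F)$ with a vertex of $V(F)\cup S$ occurs while $S$ still spans at least two parts, and let this contraction merge $P_1\ni v_1$ and $P_2\ni v_2$ with $v_1\in V(F)$ and $v_2\in V(F)\cup S$.

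The first step is to pin down the shape of $P_1$ and $P_2$ just before the contraction. Because no earlier contraction involved $V(F)$ with $V(F)\cup S$, each vertex of $V(F)$ still lies in a distinct part and no part meets both $V(F)$ and $S$; because no $V(F)$-$V(F)$ merger has happened yet, \cref{lem:fg6} applies and rules out parts overlapping two of $S, X, Y$. So $P_1=\{v_1\}\cup Z_1$ with $Z_1\subseteq X$, $Z_1\subseteq Y$, or $Z_1=\emptyset$ (analogously for $P_2$ when $v_2\in V(F)$; $P_2\subseteq S$ when $v_2\in S$). The possibility $Z_1\subseteq X$ with $Z_1\neq\emptyset$ is ruled out immediately: each $v\in V(F)\setminus\{v_1\}$ sits in its own part, is black-adjacent to every vertex of $X$, but related to $v_1$ non-uniformly unless $vv_1$ is a black edge of $F$; since $v_1$ has at most three black neighbors in $F$, at least $11-3=8$ such parts are non-homogeneous with $P_1$, giving it red degree $\geq 8>4$.

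I then split on $v_2$. In \textbf{Case 1} ($v_2\in V(F)$), \cref{lem:fg1} forces $\{v_1,v_2\}=\{a_i,b_j\}$, and the proof of \cref{lem:fg2} goes through verbatim (the $Y$-extensions, if any, are non-adjacent to $V(F)$) to give at least three red edges from $w=P_1\cup P_2$ to parts meeting $V(F)\setminus\{v_1,v_2\}$. Moreover, $a_i$ is fully adjacent to $S$ while $b_j$ is fully non-adjacent to $S$, so for every $S$-part $Q$ the merged vertex $w$ is non-homogeneous with $Q$; with at least two $S$-parts this adds at least two more red edges, reaching total red degree $\geq 5$, a contradiction. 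In \textbf{Case 2} ($v_2\in S$, hence $P_2\subseteq S$), I count red edges from $w$ to the parts $P_v$ containing the remaining $V(F)$-vertices $v$. The key facts are: $v$ is fully adjacent to $P_2$ when $v\in A$ and fully non-adjacent when $v\in B$; any $Y$-content of $P_v$ or of $P_1$ is non-adjacent to $v_1$; and each $V(F)$-vertex has exactly four neighbors (and seven non-neighbors) inside $V(F)$. A direct case analysis then yields at least $3+2=5$ red edges when $v_1\in A$ (three from $A$-side parts that are not black-adjacent to $v_1$ in $F$, two from the two $B$-neighbors of $v_1$ in $F$) and at least $5+2=7$ red edges when $v_1\in B$, again a contradiction.

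The main obstacle is the case analysis in Case 2, since $V(F)$-parts possibly enlarged by $Y$-vertices force homogeneity checks against two different pieces. The argument stays short thanks to the two structural reductions above: no surviving part can absorb an $X$-vertex together with a $V(F)$-vertex, and any $Y$-extension of a $V(F)$-part is automatically non-adjacent to every vertex of $V(F)$. Together with the tightly controlled black neighborhoods in $F$, this makes the red-degree overflow unavoidable whenever $S$ is not yet contracted.
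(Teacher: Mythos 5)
Your proof is correct and follows essentially the same strategy as the paper's: it splits on whether $v_2$ lies in $V(F)$ or in $S$, reduces the former to an $a_i$--$b_j$ contraction via \cref{lem:fg1} and \cref{lem:fg2} and then charges two additional red edges to the $\geq 2$ parts inside $S$, and counts red edges directly in the latter case. Your explicit preliminary step ruling out $X$-extensions of $V(F)$-parts is a correct (and harmless) clean-up, though the adjacency mix producing each counted red edge survives any $X$- or $Y$-enlargement anyway, so the paper's shorter count also suffices.
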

\begin{proof}
  We consider the first time a vertex $u \in V(F)$ is involved in a contraction with a~vertex of $V(F) \cup S$.
  Either (case 1) the part of $u$, $P_u$, is contracted with a part $P_v$ containing $v \in V(F)$, or (case 2) $P_u$ is contracted with a part $P$ intersecting $S$ but not $V(F)$.

  In case 1, by~\cref{lem:fg1}, $u$ and $v$ hit both $A$ and $B$.
  Thus, by~\cref{lem:fg2}, the red degree within $F$ of the resulting vertex $z$ is at least~3.
  Moreover $z$ is linked by a red edge to every part within $S$, since $S$ is fully adjacent to $A$, and fully non-adjacent to $B$.
  Thus $S$~should at this point consist of a single part.

  We now argue that case 2 is impossible in a partial 4-sequence.
  By \cref{lem:fg6}, part~$P$ cannot intersect $X \cup Y$ (nor $V(F)$, by construction).
  Thus $P \subseteq S$.
  If $u \in A$, then the contraction of $P_u$ and $P$ has incident red edges toward at least 5~vertices: three vertices of~$A$ non-adjacent to~$u$ and two private neighbors of~$u$ (in the total graph) within~$B$.
  If instead $u \in B$, the red degree of the contracted part is at least~6, as witnessed by two neighbors of $u$ in $B$, and four non-neighbors of $u$ in $A$. 
\end{proof}

We can now establish the main lemma on how a fence gadget constrains a 4-sequence.
\cref{lem:fg3,lem:fg6} have the following announced consequence: While $S$ is not contracted into a single vertex, no part within $S$ can be contracted with a part outside of $S$, and similarly vertices of $X$ cannot be contracted with vertices of $Y$.

\begin{lemma}\label{cor:fg4}
  In a partial 4-sequence, while $S$ is not contracted to a single vertex, 
  \begin{compactitem}
    \item(i) no part intersects both $S$ and $V(G) \setminus S$, nor
    \item(ii) both $X$ and $Y$.
  \end{compactitem}
\end{lemma}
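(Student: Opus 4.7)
The plan is to combine Lemmas~\ref{lem:fg3} and~\ref{lem:fg6} directly. The main observation is that, while $S$ is not yet contracted into a single vertex, Lemma~\ref{lem:fg3} forbids any contraction so far having involved a vertex of $V(F)$ together with a vertex of $V(F) \cup S$; in particular no contraction has yet involved two vertices of $V(F)$, which is precisely the hypothesis of Lemma~\ref{lem:fg6}. Once we have Lemma~\ref{lem:fg6} available, both items fall out quickly.

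For item (ii), Lemma~\ref{lem:fg6} states directly that no part intersects both $X$ and $Y$, and we are done.

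For item (i), I would argue by contradiction. Suppose some part $P$ intersects both $S$ and $V(G) \setminus S = X \cup Y \cup V(F)$. The cases $P \cap X \neq \emptyset$ and $P \cap Y \neq \emptyset$ are ruled out immediately by Lemma~\ref{lem:fg6} (combined with the preceding observation that its hypothesis holds). Hence we may assume $P$ intersects $V(F)$. But then, going back in the sequence, there was a step at which two distinct parts, one containing a vertex of $S$ and the other containing a vertex of $V(F)$, were merged; this is a contraction involving a vertex of $V(F)$ with a vertex of $V(F) \cup S$. By Lemma~\ref{lem:fg3}, such a contraction can only occur once $S$ has already been contracted into a single vertex. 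Since $S$ stays a singleton once it becomes one, this contradicts the assumption that $S$ is still not a singleton.

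No step here is truly delicate: the entire argument is a case distinction on where in $X \cup Y \cup V(F)$ the ``outside'' part of $P$ lies, and the only subtlety is the bootstrap that lets us invoke Lemma~\ref{lem:fg6}, namely noting that Lemma~\ref{lem:fg3} rules out any earlier $V(F)$--$V(F)$ contraction as long as $S$ is not a singleton.
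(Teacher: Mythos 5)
Your proof is correct and follows essentially the same route as the paper: the paper also invokes Lemma~\ref{lem:fg3} to conclude that no pair in $V(F) \times (V(F)\cup S)$ lies in the same part (which both handles the $V(F)$ case of item~(i) and supplies the hypothesis of Lemma~\ref{lem:fg6}), and then reads off items~(i) and~(ii) from Lemma~\ref{lem:fg6}. Your version merely makes explicit the case distinction over $X$, $Y$, and $V(F)$ that the paper leaves implicit.
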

\begin{proof}
  Let $H$ be a trigraph obtained by a partial 4-sequence from $G$, such that $S$ is not contained in a part of $\mathcal P(H)$.
  By~\cref{lem:fg3}, no pair of vertices in $V(F) \times (S \cup V(F))$ are in the same part of $\mathcal P(H)$ (since $S$ is not contracted to a single vertex).
  Thus we conclude by~\cref{lem:fg6}. 
\end{proof}

\medskip

\textbf{Contracting the fence gadget.}
The previous lemmas establish some constraints that the fence gadget imposes on a supposed (partial) 4-sequence.
We now see how a partial 4-sequence actually contracts a fence gadget.

Every time we are about to contract a fence gadget $F$ attached to $S$, we will ensure that the following properties hold:
\begin{compactitem}
\item no prior contraction has involved a vertex of $V(F)$,
\item no red edge has one endpoint in $V(F)$ and one endpoint outside $V(F)$, and
\item $S$ is contracted into a single vertex with red degree at most~3.
\end{compactitem}
In particular, the fence gadget $F$ still satisfies the attachment rule.

\begin{lemma}\label{lem:fg5}
  Let $H$ be a trigraph containing a fence gadget $F$ attached to a single vertex $s$ of red degree at most~3.
  We assume that $F$ respects the attachment rule in $H$.
  
  Then there is a partial 4-sequence from $H$ to $H'$, where $H'$ is the trigraph obtained from~$H$ by contracting $V(F)$ into a single vertex.
\end{lemma}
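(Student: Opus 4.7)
The plan is to exhibit an explicit partial 4-sequence of 11 contractions, working entirely inside $V(F)$. The key bookkeeping comes from \cref{obs:fg5}: every vertex outside $V(F) \cup \{s\}$ is either fully black-adjacent (if it lies in $X$) or fully non-adjacent (if it lies in $Y$) to $V(F)$, so contractions inside $V(F)$ leave these vertices' red degrees unchanged and never create red edges to them. Consequently, the only external vertex whose red degree may increase is~$s$, which gains a new red edge precisely when a contraction produces a \emph{mixed} part -- a part containing vertices of both $A$ and $B$. Since $s$ starts at red degree at most~3 and must remain at most~4, at most one mixed part may coexist with the unmixed parts at any time during the sequence.

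Guided by this constraint, the sequence I would give proceeds in three phases. \emph{Phase~I} is the single contraction $a_6 b_6 \to w$ that creates the unique mixed part. A direct computation shows that, among the 36 pairs in $A \times B$, only $\{a_1,b_1\}$ and $\{a_6,b_6\}$ yield red degree exactly~4 (the other pairs yielding at least~5, consistently with \cref{lem:fg2}); the chosen one gives $w$ the red neighbors $a_1, a_5, b_5, s$. \emph{Phase~II} grows two unmixed parts $U \subseteq A \setminus \{a_6\}$ and $V \subseteq B \setminus \{b_6\}$ by alternating eight contractions from the middle outward along the red path of~$F$: contract $a_4 a_5$, then $b_4 b_5$, then absorb $a_3$ and $b_3$, then $a_2$ and $b_2$, and finally $a_1$ and $b_1$, producing $U = \{a_1, \ldots, a_5\}$ and $V = \{b_1, \ldots, b_5\}$. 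Since each such step contracts two parts from the same partite set, no second mixed part is ever created, so $s$ gains no new red neighbor, and a step-by-step check confirms that the newly formed part has red degree at most~4 at each step. \emph{Phase~III} contracts $U$ with $w$ (keeping exactly one mixed part, now of red degree~2) and then absorbs $V$, reaching the target trigraph $H'$ with $V(F)$ collapsed to a single vertex red-adjacent only to~$s$.

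The main obstacle is simply verifying that every intermediate trigraph has red degree at most~4, which reduces to a finite, mechanical check using the explicit edge list of~$F$: at each of the eleven contractions one enumerates the adjacencies (black / red / non-edge) of the newly formed part to every remaining part, and updates the red degrees of its previous neighbors. A subtle point worth emphasising in the verification is that the ordering within Phase~II matters: attempting, for instance, $a_3 u$ before $b_4 b_5$ would push the merged part to red degree~5, because $u = \{a_4,a_5\}$ is still red-adjacent to both $b_4$ and~$b_5$. The strict alternation between the $A$- and $B$-sides, together with progressing outward from the central index~$4$ along the red path, is what keeps every step within the budget of~$4$.
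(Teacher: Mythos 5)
Your proof is correct and takes essentially the same approach as the paper: both give an explicit 11-step partial 4-sequence consisting of a single cross-side contraction that creates the unique mixed part and raises $s$'s red degree by one, followed by alternating same-side contractions that grow two unmixed parts, and ending with two consolidation merges. The only difference is cosmetic — you start the mixed part at $\{a_6,b_6\}$ and grow the unmixed parts from indices $4,5$ down to $1$, whereas the paper starts at $\{a_1,b_1\}$ and grows from $2,3$ up to $6$ — but the governing constraints (exactly one mixed part at a time, and strict $A$/$B$ alternation to keep each newly formed part at red degree $\leq 4$) are identical.
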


\begin{proof}
  As $F$ respects the attachment rule in $H$, every vertex of $F$ has the same (fully black) neighborhood in $V(H) \setminus (V(F) \cup \{s\})$.
  Thus, contractions within $V(F)$ will only create red edges within $V(F) \cup \{s\}$.
  We can therefore focus on the trigraph induced by $V(F) \cup \{s\}$.

  Recall the vertex labels of \cref{fig:fence-gadget}.
  We first contract $a_1$ and $b_1$.
  This creates a vertex~$c_1$ of red degree~4, and in particular it adds one to the red degree of $s$, which has now red degree at most~4 (see left-hand side of~\cref{fig:fencecontraction}).
  Now we will contract within $A$, and within $B$, not to increase more the red degree of $s$.

\begin{figure}[ht!]
  \centering
  \resizebox{400pt}{!}{
  \begin{tikzpicture}
    \foreach \i in {2,...,6}{
      \node[draw,circle,inner sep=0.03cm] (b\i) at (360 * \i/6:1) {$b_\i$} ;
      \node[draw,circle,inner sep=0.03cm] (a\i) at (360 * \i/6:2) {$a_\i$} ;
      \draw[very thick,red] (a\i) -- (b\i) ;
    }
    \node[draw,circle,inner sep=0.03cm] (a1) at (360 * 1/6:1.8) {$c_1$};
    \node[draw,circle,inner sep=0.08cm] (s) at (360 * 1/6:3.3) {$s$};
    \foreach \i [count = \ip from 3] in {2,...,5}{
      \draw (a\i) -- (a\ip) ;
      \draw (b\i) -- (b\ip) ;
      \draw[very thick,red] (a\i) -- (b\ip) ;
    }
    \draw[very thick,red] (a1) -- (a2) ;
    \draw[very thick,red] (a1) -- (b2) ;
    \draw[very thick,red] (a1) -- (b6) ;
    \draw[very thick,red] (a1) -- (s) ;
    \draw[thick] (a6) to node[above]{$e$} (a1) ;

    \foreach \i/\k in {2/0,6/0,3/-50,5/50}{
      \draw (s) to [bend left=\k] (a\i) ;
    }
    \draw (s) to [bend left=-42] (-2.1,2.1) to [bend left=-42] (a4) ;

    \begin{scope}[xshift=6cm]
      \foreach \i in {2,...,6}{
      \node[draw,circle,inner sep=0.03cm] (a\i) at (360 * \i/6:2) {$a_\i$} ;
      }
      \foreach \i in {4,...,6}{
        \node[draw,circle,inner sep=0.03cm] (b\i) at (360 * \i/6:1) {$b_\i$} ;
        \draw[very thick,red] (a\i) -- (b\i) ;
      }
    \node[draw,circle,inner sep=0.025cm] (b3) at (150:1) {\small{$b_{23}$}};
    \node[draw,circle,inner sep=0.03cm] (a1) at (360 * 1/6:1.8) {$c_1$};
    \node[draw,circle,inner sep=0.08cm] (s) at (360 * 1/6:3.3) {$s$};
    \foreach \i [count = \ip from 4] in {3,...,5}{
      \draw (a\i) -- (a\ip) ;
      \draw (b\i) -- (b\ip) ;
      \draw[very thick,red] (a\i) -- (b\ip) ;
    }
    \draw (a2) -- (a3) -- (a4) ;
    \foreach \i/\j in {a1/a2,a1/b3,a1/b6,a1/s,b3/b4,a2/b3,a3/b3}{
    \draw[very thick,red] (\i) -- (\j) ;
    }
    \draw[thick] (a6) to node[above]{$e$} (a1) ;

    \foreach \i/\k in {2/0,6/0,3/-50,5/50}{
      \draw (s) to [bend left=\k] (a\i) ;
    }
    \draw (s) to [bend left=-42] (-2.1,2.1) to [bend left=-42] (a4) ;
    \end{scope}

    \begin{scope}[xshift=12cm]
      \foreach \i in {4,...,6}{
        \node[draw,circle,inner sep=0.03cm] (a\i) at (360 * \i/6:2) {$a_\i$} ;
        \node[draw,circle,inner sep=0.03cm] (b\i) at (360 * \i/6:1) {$b_\i$} ;
        \draw[very thick,red] (a\i) -- (b\i) ;
      }
    \node[draw,circle,inner sep=0.025cm] (a3) at (150:2) {\small{$a_{23}$}};
    \node[draw,circle,inner sep=0.025cm] (b3) at (150:1) {\small{$b_{23}$}};
    \node[draw,circle,inner sep=0.03cm] (a1) at (360 * 1/6:1.8) {$c_1$};
    \node[draw,circle,inner sep=0.08cm] (s) at (360 * 1/6:3.3) {$s$};
    \foreach \i [count = \ip from 4] in {3,...,5}{
      \draw (a\i) -- (a\ip) ;
      \draw (b\i) -- (b\ip) ;
      \draw[very thick,red] (a\i) -- (b\ip) ;
    }
    \foreach \i/\j in {a1/a3,a1/b3,a1/b6,a1/s,b3/b4,a3/b3,a3/a4}{
    \draw[very thick,red] (\i) -- (\j) ;
    }
    \draw[thick] (a6) to node[above]{$e$} (a1) ;

    \foreach \i/\k in {6/0,3/-50,5/50}{
      \draw (s) to [bend left=\k] (a\i) ;
    }
    \draw (s) to [bend left=-42] (-2.1,2.1) to [bend left=-42] (a4) ;
    \end{scope}
  \end{tikzpicture}
  }
  \caption{The first three contractions of the fence gadget.}
  \label{fig:fencecontraction}
\end{figure}
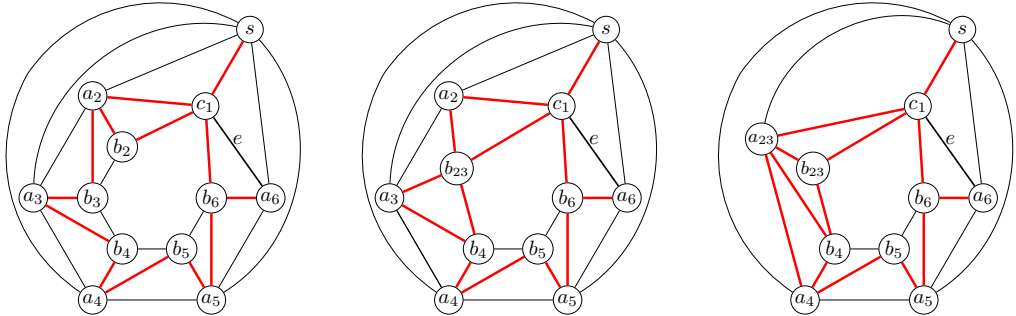

Vertices $b_2$ and $b_3$ now have only 4 neighbors in total in the fence gadget.
Thus we can contract them into $b_{23}$ and keep red degree at most~4 (see middle of~\cref{fig:fencecontraction}).
In turn, $a_2$ and $a_3$ can be contracted into $a_{23}$ for the same reason (see right of~\cref{fig:fencecontraction}).
Then we contract $b_4$ with $b_{23}$, forming $b_{234}$, and $a_4$ with $a_{23}$, forming $a_{234}$.
We contract $b_5$ and $b_{234}$ into $b_{2345}$, and $a_5$ and $a_{234}$ into $a_{2345}$.
We finally contract $b_6$ and $b_{2345}$ into $b$, and $a_6$ and $a_{2345}$ into $a$.
We contract $a$ and $c_1$, then we contract the resulting vertex and $b$.

Crucially the edge $e$ stays black until the contraction of $a_6$ and $a_{2345}$, so $c_1$ maintains a~red degree of at most~4.
Also importantly, the only contractions involving a vertex of $A$ and a vertex of $B$ are the first and last two contractions. 
Thus $s$ is incident in red to at most one vertex of the fence gadget. 
\end{proof}

\subsection{Propagation, wire, and long chain}\label{sec:wire}

A \emph{vertical set} $V$ consists of two vertices $x, y$ combined with a fence gadget $F$ attached to $\{x,y\}$.
Thus $V=\{x,y\} \cup V(F)$.
We call \emph{vertical pair} the vertices $x$ and $y$.
We will usually add a superscript to identify the different copies of vertical sets: Every vertex whose label is of the form $u^j$ belongs to the vertical set $V^j$. 

The \emph{propagation gadget} from $V^j$ to $V^{j'}$ puts all the edges between $V^j$ and $x^{j'}$ (and no other edge).
We call these edges an \emph{arc from $V^j$ to $V^{j'}$}.
We also say that the vertical set $V^{j'}$ is \emph{guarded} by $V^j$.
The pair $V^j,V^{j'}$ is said \emph{adjacent}.
Here, singleton $\{x_j'\}$ plays the role of $X$ (Definition~\ref{def:attachment-rule}) for the attachment rule of vertical set $V_j$.

The \emph{propagation digraph of $G$}, denoted by $\mathcal D(G)$, has one vertex per vertical set and an arc between two vertical sets linked by an arc (in the previous sense).
A (maximal) \emph{wire} $W$ is an induced subgraph of $G$ corresponding in $\mathcal D(G)$ to a (maximal) out-tree on at least two vertices.
See~\cref{fig:wire} for an illustration of a wire made by simply concatenating propagation gadgets.
Eventually $\mathcal D(G)$ will have out-degree at most~2, in-degree at most~2, and total degree at most~3.

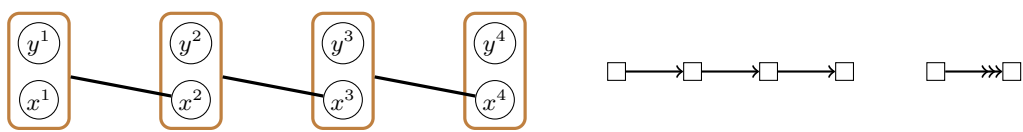
\begin{figure}[ht!]
  \centering
  \begin{tikzpicture}
    \def\k{4}
    \def\hs{2}
    \pgfmathsetmacro\xs{\k * \hs + 0.6}
    \def\vs{0.75}
    \foreach \i in {1,...,\k}{
      \node[draw,circle,inner sep=0.03cm] (a\i1) at (\i * \hs, 1 * \vs) {\small{$x^\i$}} ;
      \node[draw,circle,inner sep=0.03cm] (a\i2) at (\i * \hs, 2 * \vs) {\small{$y^\i$}} ;
      \node[draw,very thick,brown,rounded corners,fit=(a\i1) (a\i2)] (b\i2) {} ;
    }
    \foreach \i [count = \im from 1] in {2,...,\k}{
      \draw[very thick] (a\i1) -- (b\im2) ;
    }

    \begin{scope}[xshift= \xs cm]
    \foreach \i in {1,...,\k}{
      \node[draw] (v\i) at (0.5 * \i * \hs, 1.5 * \vs) {} ;
    }
    \foreach \i [count = \im from 1] in {2,...,\k}{
      \draw[thick, ->] (v\im) -- (v\i) ;
    }

    \node[draw] (y) at (0.5 * 5.2 * \hs, 1.5 * \vs) {} ;
    \node[draw] (z) at (0.5 * 6.2 * \hs, 1.5 * \vs) {} ;
    \draw[thick, ->>>] (y) -- (z) ;
    \end{scope}
    
  \end{tikzpicture}
  \caption{Left: A non-branching wire made of 4 vertical sets and 3 propagation gadgets. Every vertical set is guarded by the vertical set just to its left. Center: A more compact representation, which corresponds to the propagation digraph. Right: Symbolic representation of the long chain, that is, of the represented wire if $L=4$.}
  \label{fig:wire}
\end{figure}

The \emph{children} of a vertical set $V$ are the vertical sets that $V$ guards.
The \emph{root of wire $W$} is the unique vertical set of in-degree 0 in $\mathcal D(W)$.
The \emph{leaves of wire $W$} are the vertical sets of out-degree 0 in $\mathcal D(W)$. 
A wire is said \emph{primed} when the vertical pair of its root has been contracted.

A wire $W$ is \emph{non-branching} if every vertex of $\mathcal D(W)$ has out-degree at most 1; hence, $\mathcal D(W)$ is a directed path.
A \emph{long chain} is a wire $W$ such that $\mathcal D(W)$ is a directed path on $L$ vertices, where integer $L$ will be specified later (and can be thought as logarithmic in the total number of fences which do not belong to vertical sets). 
Otherwise, if $\mathcal D(W)$ has at least one vertex with out-degree at least 2, wire $W$ is said~\emph{branching}.
A~vertical set with two children is also said~\emph{branching}.
See~\cref{fig:branching-wire} for an example of a branching wire with exactly one branching vertical set.

\medskip

\textbf{Constraints of the propagation gadget on a 4-sequence.}
We provide the proof that a~contraction in a vertical set $V$ is only possible when the vertical pair of all the vertical sets $V'$ with a directed path to $V$ in $\mathcal D(G)$ has been contracted. 

\begin{lemma}\label{lem:propagation}
  Let $V^j$ and $V^{j'}$ be two vertical sets with an arc from $V^j$ to $V^{j'}$.
  In a partial 4-sequence from $G$, any contraction involving two vertices of $V^{j'}$ has to be preceded by the contraction of $x^j$ and $y^j$. 
\end{lemma}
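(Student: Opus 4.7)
I argue by contradiction. Suppose some partial 4-sequence from $G$ contains a contraction $c$ that involves two vertices of $V^{j'}$ while $x^j$ and $y^j$ still lie in distinct parts, and take $c$ to be the first such contraction. The idea is to use the two fence gadgets $F^j$ and $F^{j'}$ in concert: $F^j$ will force $x^{j'}$ to stay isolated from $y^{j'}$ and $V(F^{j'})$, while $F^{j'}$ will forbid any contraction reaching inside $V(F^{j'})$ until its own vertical pair is merged.

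First I would identify the sets of the attachment rule (\cref{def:attachment-rule}) for $F^j$. By construction of the propagation gadget, $x^{j'}$ is fully adjacent to $V^j=\{x^j,y^j\}\cup V(F^j)$, so $x^{j'}$ belongs to the set $X$ for $F^j$; conversely, $y^{j'}$ and every vertex of $V(F^{j'})$ have no edge to $V^j$ and therefore lie in the complementary set $Y$. Since $\{x^j,y^j\}$ is not yet a single part, \cref{cor:fg4}(ii) applied to $F^j$ tells us that no part either just before $c$ or immediately after $c$ can intersect both $X$ and $Y$. In particular $x^{j'}$ and $y^{j'}$ are in distinct parts just before $c$; equivalently $S':=\{x^{j'},y^{j'}\}$ has not yet been contracted.

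Next I would do a short case analysis on $c$. If one of the two parts merged by $c$ contains $x^{j'}$, then the other part intersects $V^{j'}\setminus\{x^{j'}\}\subseteq Y$, so the merged part intersects both $X$ and $Y$, violating \cref{cor:fg4}(ii) for $F^j$. Otherwise both parts merged by $c$ lie entirely in $\{y^{j'}\}\cup V(F^{j'})$; two distinct parts cannot both equal $\{y^{j'}\}$, so at least one of them meets $V(F^{j'})$, and then $c$ involves a vertex of $V(F^{j'})$ together with a vertex of $V(F^{j'})\cup S'$. By \cref{lem:fg3} applied to $F^{j'}$ this requires $S'$ to already have been contracted to a single vertex, contradicting the previous paragraph.

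The main subtlety is coordinating the two fences: $F^j$ is invoked globally (through \cref{cor:fg4}) to pin $x^{j'}$ away from the rest of $V^{j'}$, while $F^{j'}$ is invoked locally (through \cref{lem:fg3}) to keep $V(F^{j'})$ inert until $x^{j'}$ and $y^{j'}$ are fused. Once both constraints are in place, the case analysis is immediate and every remaining configuration is ruled out by the red-degree bound built into the fence.
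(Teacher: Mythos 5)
Your proof is correct and follows essentially the same route as the paper's: you apply \cref{cor:fg4}(ii) to the fence of $V^j$ after noting $x^{j'}\in X$ and $y^{j'},V(F^{j'})\subseteq Y$, and you apply \cref{lem:fg3} (and \cref{cor:fg4}) to the fence of $V^{j'}$ to rule out contractions touching $V(F^{j'})$ before $x^{j'},y^{j'}$ are merged; the paper phrases this as a direct two-step ``preceded by'' chain while you set it up as a minimal counterexample with a short case split, which is only a presentational difference. One small imprecision: in your second case you assert that both merged parts ``lie entirely in $\{y^{j'}\}\cup V(F^{j'})$,'' which is more than you can justify (those parts might contain vertices outside $V^{j'}$); what you actually need, and do have, is that the two \emph{witness} vertices of $V^{j'}$ involved by $c$ both lie in $\{y^{j'}\}\cup V(F^{j'})$, and since they are distinct at least one lies in $V(F^{j'})$, after which \cref{lem:fg3} applies as you say.
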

\begin{proof}
  We recall the notations of~\cref{subsection:fg}. 
  Let $F$ be the fence gadget attached to $S=\{x^j,y^j\}$, $X$ the neighborhood of $V(F)$ outside of $S \cup V(F)$, and $Y$ the vertices that are not in $S \cup V(F) \cup X$.
  (We always assume that the attachment rule is satisfied.)
  We have $x^{j'} \in X$ and $y^{j'} \in Y$, therefore by the second item of~\cref{cor:fg4}, their contraction has to be preceded by the contraction of $x^j$ and $y^j$.
  Now applying the first item of~\cref{cor:fg4} to the fence gadget $F'$ attached to $S'=\{x^{j'},y^{j'}\}$, and~\cref{lem:fg3}, any contraction involving a pair of $V^{j'}$ distinct from $S'$ has to be preceded by the contraction of $x^{j'}$ and $y^{j'}$.
\end{proof}

\begin{figure}[ht!]
  \centering
  \begin{tikzpicture}
    \def\k{4}
    \def\hs{2}
    \pgfmathsetmacro\xs{\k * \hs + 1}
    \def\vs{0.75}
    \foreach \i in {1,...,\k}{
      \node[draw,circle,inner sep=0.03cm] (a\i1) at (\i * \hs, 1 * \vs) {\small{$x^\i$}} ;
      \node[draw,circle,inner sep=0.03cm] (a\i2) at (\i * \hs, 2 * \vs) {\small{$y^\i$}} ;
      \node[draw,very thick,brown,rounded corners,fit=(a\i1) (a\i2)] (b\i2) {} ;
    }
    \foreach \i [count = \im from 1] in {2,...,\k}{
      \draw[very thick] (a\i1) -- (b\im2) ;
    }
    \foreach \i/\l in {3/5,4/6}{
      \node[draw,circle,inner sep=0.03cm] (a\l1) at (\i * \hs, 3.4 * \vs) {\small{$x^\l$}} ;
      \node[draw,circle,inner sep=0.03cm] (a\l2) at (\i * \hs, 4.4 * \vs) {\small{$y^\l$}} ;
      \node[draw,very thick,brown,rounded corners,fit=(a\l1) (a\l2)] (b\l2) {} ;
    }
    \draw[very thick] (a61) -- (b52) ;
    \draw[very thick] (a51) -- (b22) ;

    \begin{scope}[xshift= \xs cm]
    \foreach \i in {1,...,\k}{
      \node[draw] (v\i) at (0.5 * \i * \hs, 2.2 * \vs) {} ;
    }
    \node[draw] (v5) at (0.5 * 3 * \hs, 3.2 * \vs) {} ;
    \node[draw] (v6) at (0.5 * 4 * \hs, 3.2 * \vs) {} ;
    \foreach \i [count = \im from 1] in {2,...,\k}{
      \draw[thick, ->] (v\im) -- (v\i) ;
    }
    \draw[thick, ->] (v2) -- (v5) ;
    \draw[thick, ->] (v5) -- (v6) ;
    \end{scope} 
  \end{tikzpicture}
  \caption{An example of a branching wire with its corresponding propagation digraph.}
  \label{fig:branching-wire}
\end{figure}
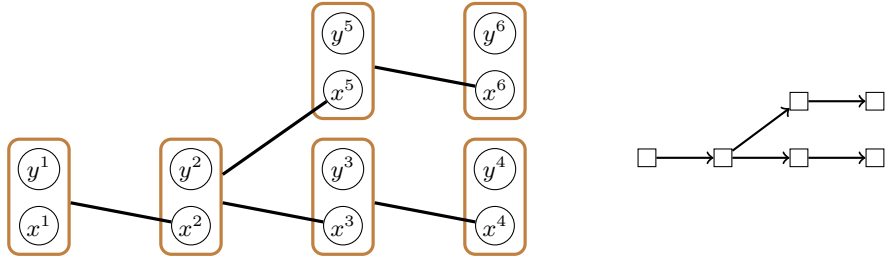 

Henceforth, when we say that a contraction is \emph{preceded} by another contraction, it includes the case that the two contractions are in fact the same.
By a straightforward induction, we obtain the following from~\cref{lem:propagation} (and~\cref{lem:fg3}). 

\begin{lemma}\label{cor:wire1}
  In a partial 4-sequence from $G$, any contraction involving a pair of vertices in a~vertical set $V$ has to be preceded by the contraction of the vertical pair of every vertical set $V'$ such that there is a directed path from $V'$ to $V$ in $\mathcal D(G)$.
\end{lemma}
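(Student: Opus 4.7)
The plan is to establish \cref{cor:wire1} by a straightforward induction on the length $k$ of a directed path $V' = V_0 \to V_1 \to \cdots \to V_k = V$ in $\mathcal D(G)$, using \cref{lem:propagation} as the single-step engine of propagation.

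For the base case $k = 1$, there is a single arc from $V'$ to $V$ in $\mathcal D(G)$, so \cref{lem:propagation} directly gives the conclusion: any contraction involving two vertices of $V$ must be preceded by the contraction of the vertical pair of $V'$. For the inductive step, assume the claim holds for all directed paths of length at most $k - 1$, and consider a path $V_0 \to V_1 \to \cdots \to V_k$ of length $k \ge 2$. Fix a contraction $\tau$ involving a pair of vertices of $V = V_k$. Applying the inductive hypothesis to the sub-path $V_1 \to \cdots \to V_k$, we obtain that $\tau$ is preceded by the contraction $\tau_1$ of the vertical pair of $V_1$. Since $\tau_1$ is itself a contraction involving two vertices of the vertical set $V_1$, the base case applied to the arc $V_0 \to V_1$ ensures that $\tau_1$ is preceded by the contraction of the vertical pair of $V_0 = V'$. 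Chaining these two precedences (using the reflexive convention introduced just before the statement) yields that $\tau$ is preceded by the contraction of the vertical pair of $V'$, as required. Since the quantifier ``every $V'$ with a directed path to $V$'' is applied pointwise, it suffices to argue for a single such $V'$ at a time, so no extra case-analysis is needed across paths.

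The degenerate case $V' = V$ (a zero-length path, if one reads the statement that way) deserves a brief mention: either $\tau$ is the contraction of the vertical pair of $V$ itself and the conclusion holds by reflexivity, or $\tau$ involves at least one vertex of the fence gadget attached to $\{x,y\}$, in which case \cref{lem:fg3} forces $\{x,y\}$ to have been contracted beforehand. There is no real obstacle in this argument; the only place where work had to be done is in \cref{lem:propagation}, whose proof in turn rested on the enclosing behavior of the fence gadget established in \cref{cor:fg4,lem:fg3}. The present lemma is essentially the transitive closure of \cref{lem:propagation} along the propagation digraph.
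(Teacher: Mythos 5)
Your proof is correct and takes essentially the same approach as the paper, which states the argument only as ``by a straightforward induction, we obtain the following from \cref{lem:propagation} (and \cref{lem:fg3}).'' Your spelled-out induction on path length, with \cref{lem:propagation} as the single-arc step, the chaining of the reflexive ``preceded by'' relation, and \cref{lem:fg3} covering the degenerate case where the contraction involves the fence of $V$ itself, is precisely what the paper leaves implicit.
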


\textbf{Contracting wires.}
As the roots and leaves of wires will be connected to other gadgets, we postpone the description of how to contract wires until after building the overall construction~$G$.
Intuitively though, contracting a wire (in the vacuum) consists of contracting the vertical pair of its root, then its fence gadget by applying~\cref{lem:fg5}, and finally recursively contracting the subtrees rooted at its children.
Since~$\mathcal D(G)$ has total degree at most~3, every vertex has red degree at most~4 (for the at most~3 adjacent vertical sets, plus the pendant vertex of the fence gadget).

\subsection{Binary AND gate}

The \emph{binary} AND \emph{gate} (AND gadget, for short) simply consists of three vertical sets $V^1, V^2, V^3$ with an arc from $V^1$ to $V^3$, and an arc from $V^2$ to $V^3$.
As usual, the vertical pairs of $V^1$, $V^2$, and $V^3$, are $\{x^1,y^1\}$, $\{x^2,y^2\}$, and $\{x^3,y^3\}$, respectively.
We call the vertical sets $V^1, V^2$ the \emph{inputs} of the AND gadget, and the vertical set $V^3$ the \emph{output} of the AND gadget.
See~\cref{fig:and-gadget} for an illustration.

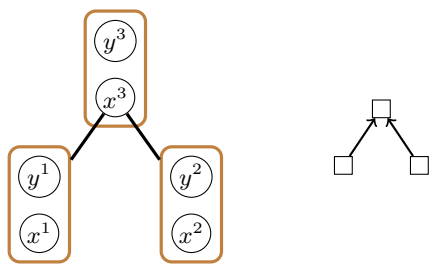
\begin{figure}[ht!]
  \centering
  \begin{tikzpicture}
    \def\hs{2}
    \pgfmathsetmacro\xs{2 * \hs + 1}
    \def\vs{0.75}
    \foreach \i in {1,2}{
      \node[draw,circle,inner sep=0.03cm] (a\i1) at (\i * \hs, 1 * \vs) {\small{$x^\i$}} ;
      \node[draw,circle,inner sep=0.03cm] (a\i2) at (\i * \hs, 2 * \vs) {\small{$y^\i$}} ;
      \node[draw,very thick,brown,rounded corners,fit=(a\i1) (a\i2)] (b\i) {} ;
    }

    \node[draw,circle,inner sep=0.03cm] (x) at (1.5 * \hs, 3.4 * \vs) {\small{$x^3$}} ;
    \node[draw,circle,inner sep=0.03cm] (y) at (1.5 * \hs, 4.4 * \vs) {\small{$y^3$}} ;
    \node[draw,very thick,brown,rounded corners,fit=(x) (y)] (b) {} ;
    
    \draw[very thick] (b1) -- (x) ;
    \draw[very thick] (b2) -- (x) ;

    \begin{scope}[xshift= \xs cm]
    \foreach \i in {1,2}{
      \node[draw] (v\i) at (0.5 * \i * \hs, 2.2 * \vs) {} ;
    }
    \node[draw] (v) at (0.5 * 1.5 * \hs, 3.2 * \vs) {} ;
    
    \draw[thick, ->] (v1) -- (v) ;
    \draw[thick, ->] (v2) -- (v) ;
    \end{scope}
  \end{tikzpicture}
  \caption{An AND gadget, and the corresponding propagation digraph.}
  \label{fig:and-gadget}
\end{figure} 

\medskip

\textbf{Constraint of the AND gadget on a 4-sequence.}
By~\cref{lem:propagation}, we readily derive:

\begin{lemma}\label{cor:and1}
  Assume $G$ contains an AND gadget with inputs $V^1, V^2$, and output $V^3$. 
  In a~partial 4-sequence from $G$, any contraction involving two vertices of $V^3$ has to be preceded by the contraction of $x^1$ and $y^1$, and the contraction of $x^2$ and $y^2$. 
\end{lemma}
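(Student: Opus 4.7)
The plan is to observe that this lemma is an immediate corollary of~\cref{lem:propagation} applied twice, once for each input arc of the AND gadget. Since the AND gadget is defined so that there is an arc from $V^1$ to $V^3$ and an arc from $V^2$ to $V^3$ in the propagation digraph $\mathcal D(G)$, the two input vertical sets each independently guard the output.

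Concretely, I would consider an arbitrary partial 4-sequence from $G$ and focus on the first contraction involving two vertices of $V^3$. Applying~\cref{lem:propagation} to the pair $(V^1, V^3)$ directly tells us that this contraction must be preceded by the contraction of $x^1$ and $y^1$. Applying~\cref{lem:propagation} a second time to the pair $(V^2, V^3)$ yields, in the same way, that it must also be preceded by the contraction of $x^2$ and $y^2$. The conjunction of these two constraints is exactly the claim.

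There is no real obstacle here: the only thing to notice is that~\cref{lem:propagation} is stated for a single guarding arc, but its hypothesis depends only on the existence of that arc and the attachment rule satisfied by the corresponding fence gadget, both of which hold for each input of the AND gadget by construction. In particular, adding further arcs incident to $V^3$ (as happens in the AND gadget) does not interfere with the argument, since~\cref{lem:propagation} is proved using only~\cref{cor:fg4} and~\cref{lem:fg3} applied to the fence gadgets of $V^1, V^3$ (respectively $V^2, V^3$), which are unchanged. Hence the proof reduces to a single sentence invoking~\cref{lem:propagation} twice.
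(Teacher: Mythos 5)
Your proposal is correct and takes exactly the same approach as the paper, which introduces the lemma with the phrase ``By~\cref{lem:propagation}, we readily derive'' and offers no further proof—precisely because the statement is a two-fold application of~\cref{lem:propagation} to the arcs $(V^1,V^3)$ and $(V^2,V^3)$. Your additional remark that the second arc into $V^3$ does not disturb the hypotheses of~\cref{lem:propagation} (since $x^3$ is the sole element of $X$ for each input fence and $y^3$ still lies in the corresponding $Y$) is the right thing to check and is accurately justified.
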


\textbf{Contraction of an AND gadget.}
Once $V^1$ and $V^2$ are contracted into single vertices, one can contract the vertical pair $x^3,y^3$.
This results in a vertex of red degree~2.
Thus one can contract the fence gadget of $V^3$ by applying~\cref{lem:fg5}.

\subsection{Binary OR gate}\label{sec:or-gate}

The \emph{binary} OR \emph{gate} (OR \emph{gadget}) is connected to three vertical sets: two \emph{inputs} $V^1, V^2$, and one \emph{output} $V^3$.
We start by building two vertical sets $V, V'$ whose vertical pairs are $\{a,b\}$ and $\{c,d\}$, respectively.
The edges $ac$ and $bd$ are added, as well as a vertex~$e$ adjacent to $a$ and to $c$.
Finally a fence is attached to $\{e\} \cup V \cup V'$.

The OR gadget is connected to its inputs and output, in the following way.
Vertex $a$ is made adjacent to $x^1$ and to $y^1$ (but not to their fence gadget).
Similarly vertex $b$ is linked to $x^2$ and $y^2$.
Finally $x^3$ is adjacent to all the vertices of the OR gadget, that is, $\{e\} \cup V \cup V'$ plus the vertices of the outermost fence.
See~\cref{fig:OR} for a representation of the OR gadget. 

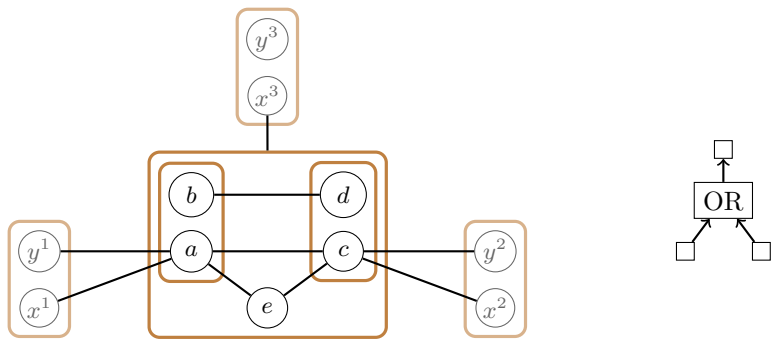
\begin{figure}[ht!]
  \centering
  \begin{tikzpicture}
    \def\hs{2}
    \def\vs{0.75}
    \def\op{0.6}
  
  \node[draw,circle,inner sep=0.03cm,opacity=\op] (v11) at (0, -\vs) {\small{$x^1$}};
  \node[draw,circle,inner sep=0.03cm,opacity=\op] (v21) at (0,0) {\small{$y^1$}};
  \node[draw,circle] (u11) at (\hs , 0) {\small{$a$}};
  \node[draw,circle] (u21) at (\hs ,\vs) {\small{$b$}};
  \node[draw,circle,inner sep=0.03cm,opacity=\op] (v12) at (3 * \hs , -\vs) {\small{$x^2$}};
  \node[draw,circle,inner sep=0.03cm,opacity=\op] (v22) at (3 * \hs ,0) {\small{$y^2$}};
  \node[draw,circle] (u12) at (2 * \hs , 0) {\small{$c$}};
  \node[draw,circle] (u22) at (2 * \hs ,\vs) {\small{$d$}};
  \node[draw,circle] (t) at (1.5 * \hs ,-\vs) {\small{$e$}};

  \node[draw,circle,inner sep=0.03cm,opacity=\op] (v13) at (1.5 * \hs, 2.75 * \vs) {\small{$x^3$}};
  \node[draw,circle,inner sep=0.03cm,opacity=\op] (v23) at (1.5 * \hs, 3.75 * \vs) {\small{$y^3$}};
  
  \node[draw,very thick,brown,rounded corners,opacity=\op, fit= (v11) (v21) ] (V1) {} ;
  \node[draw,very thick,brown,rounded corners,fit= (u11) (u21) ] (b) {} ;
  \node[draw,very thick,brown,rounded corners,opacity=\op, fit= (v12) (v22) ] (V2) {} ;
  \node[draw,very thick,brown,rounded corners,fit= (u12) (u22) ] (c) {} ;
  \node[draw,very thick,brown,rounded corners,fit= (u21) (u22) (t) (b) (c) ] (o) {} ;
  \node[draw,very thick,brown,rounded corners,opacity=\op, fit= (v13) (v23) ] (V3) {} ;
  
  \foreach \j in {1,...,2}{
    \draw[thick] (v\j1) -- (u11) ;
    \draw[thick] (v\j2) -- (u12) ;
    \draw[thick] (u\j2) -- (u\j1) ;
  }
  \draw[thick] (u12) -- (t) ;
  \draw[thick] (t) -- (u11) ;
  \draw[thick] (o) -- (v13) ;

  \begin{scope}[xshift=8.5cm]
    \foreach \l/\i/\j in {a/0/0,b/2/0,c/1/1.8}{
      \node[draw] (\l) at (0.25 * \i * \hs, \j * \vs) {} ;
    }
    \node[draw] (og) at (0.25 * \hs, 0.9 * \vs) {OR} ;
    \foreach \i/\j in {a/og,b/og,og/c}{
      \draw[thick,->] (\i) -- (\j) ;
    }
  \end{scope}
  
  \end{tikzpicture}
  \caption{An OR gadget attached to inputs $V^1, V^2$ and output $V^3$, and its symbolic representation.
    These vertical sets are technically not part of the OR gate, so we represent them slightly dimmer.   
    The outermost fence of the OR gadget can only be contracted when at least one the pairs $x^1, y^1$ and $x^2, y^2$ have been contracted.
    Only after that, can $x^3$ and $y^3$ be contracted together.}
    \label{fig:OR}
\end{figure}  

\textbf{Constraint of the OR gadget on a 4-sequence.}
By design, one can only start contracting the OR gadget after the vertical pair of at least one of its two inputs has been contracted.
This implies that no contraction can involve $V^3$ before at least one the vertical pairs $\{x^1,y^1\}$ and $\{x^2,y^2\}$ is contracted.

\begin{lemma}\label{lem:or1prep}
  Assume $G$ contains an OR gadget attached to inputs $V^1$ and $V^2$.
  In a partial 4-sequence from $G$, the contractions of $a, b$ and of $c, d$ have to be preceded by the contraction of $x^1, y^1$ or the contraction of $x^2, y^2$.
\end{lemma}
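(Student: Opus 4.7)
The plan is to argue by contradiction. I would suppose there is a partial 4-sequence of $G$ in which some contraction from $\{(a,b), (c,d)\}$ is performed with neither $(x^1,y^1)$ nor $(x^2,y^2)$ having been contracted earlier, and consider the first such offending contraction. There are two cases depending on whether $(a,b)$ or $(c,d)$ comes first, and they are handled by analogous red-degree counts that interchange the roles of $V^1$ and $V^2$, as well as of $(a,b)$ and $(c,d)$; so I will only describe the case where $(a,b)$ is first, meaning that at the relevant moment none of the four contractions $(x^1, y^1)$, $(x^2, y^2)$, $(a, b)$, $(c, d)$ has taken place.

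I would then apply \cref{cor:fg4}(i) to each of the four fence gadgets $F_{V^1}$, $F_{V^2}$, $F_V$, $F_{V'}$, whose attached sets $\{x^1, y^1\}$, $\{x^2, y^2\}$, $\{a, b\}$, $\{c, d\}$ are all \emph{not} contained in a single part of the current partition. For each such fence, every part that meets its attached set is contained in it. Consequently $\{x^1\}$, $\{y^1\}$, $\{c\}$, $\{d\}$, $\{a\}$ and $\{b\}$ are singleton parts of the current partition, and the part $P_e$ containing $e$ is disjoint from $\{a, b, c, d, x^1, y^1\}$ since $e$ belongs to none of the four attached sets while the confinement just invoked keeps each of those six vertices inside its own.

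Next I would tally the red edges created by contracting $\{a\}$ with $\{b\}$. The resulting part $\{a, b\}$ is not homogeneous with any of the five parts $\{x^1\}$, $\{y^1\}$, $\{c\}$, $\{d\}$, $P_e$: the pairs $(a, x^1), (a, y^1), (a, c), (a, e)$ are edges of $G$ and the pairs $(b, x^1), (b, y^1), (b, c), (b, e)$ are not; symmetrically $(b, d)$ is an edge while $(a, d)$ is not. By the previous paragraph these five parts are pairwise distinct, so the contraction would produce a vertex of red degree at least~$5$, contradicting the 4-sequence assumption.

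The subtle point I expect to be the main obstacle is verifying that $P_e$ really is a fifth distinct part: because $e$ lies inside the set attached to the outer fence of the OR gadget, nothing a priori prevents it from having been merged with other vertices of that same set. The confinement supplied by the four inner fences is exactly what keeps $P_e$ away from $a, b, c, d, x^1, y^1, x^2, y^2$; no further control over $P_e$ is needed, since the edge $(a, e)$ together with the non-edge $(b, e)$ already witnesses the non-homogeneity of $\{a, b\}$ and $P_e$, regardless of what else $P_e$ might contain.
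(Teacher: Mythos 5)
Your proof is correct and follows essentially the same approach as the paper: invoke~\cref{cor:fg4}(i) on the four inner fences to confine the relevant parts, then count the red edges that would be created by contracting $a$ with $b$. The paper's version is terser, simply asserting that after applying~\cref{cor:fg4} the nine vertices $x^1,y^1,a,b,c,d,x^2,y^2,e$ lie in nine distinct (singleton) parts and then listing the five red neighbors $x^1,y^1,c,d,e$ (and, symmetrically, $x^2,y^2,a,b,e$ for the $(c,d)$ contraction). One small difference in your favor: you explicitly note that the part $P_e$ need not be a singleton and that this does not hurt the count, since $ae\in E(G)$ and $be\notin E(G)$ already force non-homogeneity with whatever $P_e$ contains; the paper glosses over this by parenthetically calling all nine parts ``singleton,'' which is the only place your write-up is actually more careful than the original.
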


\begin{proof}
  Assume none of the pairs $\{a,b\}, \{c,d\}, \{x^1,y^1\}, \{x^2,y^2\}$ have been contracted.
  Because of the fences, by~\cref{cor:fg4}, all the vertices $x^1, y^1, a, b, c, d, x^2, y^2$ and $e$ are in distinct parts.
  Therefore contracting $a$ and $b$ would create a vertex of red degree at least $5$, considering the (singleton) parts of $x^1, y^1, c, d, e$.
  Symmetrically contracting $c$ and $d$ yields at least five red neighbors, considering the (singleton) parts of $x^2, y^2, a, b, e$. 
\end{proof}

From \cref{lem:or1prep} we get the following.
\begin{lemma}\label{lem:or1}
  Assume $G$ contains an OR gadget attached to inputs $V^1$ and $V^2$.
  In a partial 4-sequence from $G$, no contraction involving a vertex of $V^3$ can happen before either the pair $x^1, y^1$ or the pair $x^2, y^2$ is contracted.
\end{lemma}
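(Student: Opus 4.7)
The plan is to combine two fence constraints with \cref{lem:or1prep}. Write $S_{\mathrm{OR}} := \{e\} \cup V \cup V'$ for the set enclosed by the outermost fence $F_{\mathrm{OR}}$ of the OR gadget. By construction $x^3$ is fully adjacent to $S_{\mathrm{OR}} \cup V(F_{\mathrm{OR}})$, so it plays the role of a vertex of the set $X$ in the attachment rule for $F_{\mathrm{OR}}$; no other vertex of $G$ will be adjacent to all of $S_{\mathrm{OR}}$, so $X = \{x^3\}$ for $F_{\mathrm{OR}}$, while $y^3$ and the vertices of the inner fence $F_{V^3}$ lie in the corresponding set $Y$.

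I would first show that no contraction can involve $V^3$ while $S_{\mathrm{OR}}$ is split into more than one part. For mergers involving $x^3$, \cref{cor:fg4}(i) applied to $F_{\mathrm{OR}}$ forbids putting $x^3$ in a part with any vertex of $S_{\mathrm{OR}}$, and \cref{cor:fg4}(ii) forbids putting $x^3$ (in $X$) in a part with any vertex of $Y$, which covers $y^3$, $V(F_{V^3})$, and everything outside the OR gadget. For mergers involving $y^3$ or a vertex of $V(F_{V^3})$ without involving $x^3$, \cref{cor:fg4} and \cref{lem:fg3} applied to the inner fence $F_{V^3}$ (attached to $\{x^3,y^3\}$) imply that these cannot occur before $x^3$ and $y^3$ have been merged into a single part, and that last merger we have just ruled out.

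Next I would observe that for $S_{\mathrm{OR}}$ to be reduced to a single part, the pair $\{a,b\}$ must have been contracted at some earlier step. Indeed $a,b \in V \subseteq S_{\mathrm{OR}}$, so at the moment $S_{\mathrm{OR}}$ becomes a single part, $a$ and $b$ lie in a common part; since they start in distinct singleton parts, they must have been directly contracted earlier. Finally, \cref{lem:or1prep} tells us that the contraction of $a$ and $b$ was itself preceded by the contraction of $\{x^1,y^1\}$ or $\{x^2,y^2\}$. Chaining these three observations gives the lemma.

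The main obstacle I anticipate is keeping the case analysis in the first step tight: a contraction ``involving a vertex of $V^3$'' can merge $x^3$, $y^3$, or any vertex of $V(F_{V^3})$ with an arbitrary partner, and for each case one must apply the correct one of the two nested fences ($F_{\mathrm{OR}}$ or $F_{V^3}$) to derive a contradiction, paying particular attention to the fact that $X = \{x^3\}$ for $F_{\mathrm{OR}}$. Once this bookkeeping is organized, the remaining two steps follow essentially immediately from the cited lemmas.
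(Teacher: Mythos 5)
Your route is genuinely different from the paper's, and both are essentially valid. The paper argues forward: assuming neither input pair is contracted, \cref{lem:or1prep} pins $\{a,b\}$ and $\{c,d\}$, then \cref{cor:fg4}(i) (applied to the three relevant fences) keeps $a,b,c,d,e$ in five distinct parts; since $x^3$ is adjacent to all five and $y^3$ to none, contracting $\{x^3,y^3\}$ would create red degree at least $5$, and \cref{lem:fg3} then forbids touching $V^3$ at all. You instead prove a cleaner intermediate claim — no contraction touches $V^3$ while $S_{\mathrm{OR}}$ is not a single part — and then chain through the observations that collapsing $S_{\mathrm{OR}}$ forces the contraction of $\{a,b\}$, which by \cref{lem:or1prep} forces one of the input pairs. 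Your version isolates a reusable fact about the outer fence and avoids counting red neighbors directly; the paper's is shorter but tied to the specific number $5$. Both are fine.

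There is, however, a gap in your case analysis for ``mergers involving $x^3$.'' You rule out mergers of $x^3$ with $S_{\mathrm{OR}}$ via \cref{cor:fg4}(i) and with $Y$ via \cref{cor:fg4}(ii), but a merger of $x^3$ with a vertex of $V(F_{\mathrm{OR}})$ is covered by neither: both endpoints lie outside $S_{\mathrm{OR}}$, so (i) is silent, and $V(F_{\mathrm{OR}})$ is disjoint from both $X$ and $Y$, so (ii) is silent as well. The clean fix is to use the inner fence $F_{V^3}$ for this case: by \cref{cor:fg4}(i) applied to $F_{V^3}$ (with $S = \{x^3,y^3\}$), while $\{x^3,y^3\}$ is not a single part, $x^3$ cannot merge with anything outside $\{x^3,y^3\}$ — in particular not with $V(F_{\mathrm{OR}})$. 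And $\{x^3,y^3\}$ cannot become a single part while $S_{\mathrm{OR}}$ is split, by the \cref{cor:fg4}(ii) argument you already gave. So in the first step you should let $F_{V^3}$ do all the work for mergers of $x^3$ with anything other than $y^3$, and invoke $F_{\mathrm{OR}}$ only to forbid the one merger $\{x^3,y^3\}$ that $F_{V^3}$ cannot forbid. With that adjustment the argument closes.
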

\begin{proof}
  Suppose neither $x^1, y^1$ nor $x^2, y^2$ is contracted.
  By~\cref{lem:or1prep}, the pairs $\{a,b\}$ and $\{c,d\}$ cannot be contracted.
  By the first item of~\cref{cor:fg4}, no contraction can involve a~vertex of $\{a,b,c,d,e\}$.
  As $x^3$ is adjacent to all these vertices but not $y^3$, one cannot contract the vertical pair $\{x^3,y^3\}$.
  Hence by~\cref{lem:fg3} no contraction can involve a vertex of $V^3$.
\end{proof}

\textbf{Contraction of the OR gadget.}
We now show how to contract the OR gate when the vertical pair of one of its inputs has been contracted.

\begin{lemma} \label{lem:or2}
  Assume that $x^1$ and $y^1$ have been contracted into $z^1$, and that $z^1$, $x^2$, and $y^2$ all have red degree at most 3.
  Then there is a partial 4-sequence that contracts the whole OR gadget to a single vertex with only three red neighbors: $z^1$, $x^2$, and $y^2$.
  (The same holds symmetrically if $x^2$ and $y^2$ have been contracted into a single vertex.)
\end{lemma}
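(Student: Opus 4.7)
The plan is to contract the OR gadget from the innermost fences outward, carefully interleaving the merges of the two vertical pairs, the inner fences, and the middle vertex $e$, so that every fence contraction is applied to an attachment vertex of red degree at most~$3$, as \cref{lem:fg5} requires. All contractions will stay inside $S_o$, the inside of the outer fence, so that $F_o$'s attachment rule is preserved until the very end.

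The sequence I have in mind is: $(1)$~contract $\{a,b\}$ into $ab$; $(2)$~contract $\{c,d\}$ into $cd$; $(3)$~contract the inner fence $F_V$ via~\cref{lem:fg5} (attached to $ab$) into a vertex $f_V$; $(4)$~merge $ab$ with $f_V$; $(5)$~merge the result with $e$; $(6)$~contract the inner fence $F_{V'}$ via~\cref{lem:fg5} (now attached to $cd$) into $f_{V'}$; $(7)$~merge $cd$ with $f_{V'}$; $(8)$~merge the two halves into a single vertex representing $S_o$; $(9)$~contract the outer fence $F_o$ via~\cref{lem:fg5}; and $(10)$~merge the resulting fence vertex with the $S_o$-vertex. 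The end product is a single vertex whose only red neighbors are $z^1, x^2, y^2$; its only black neighbor is $x^3$, by the attachment rule of $F_o$.

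The red-degree bookkeeping is what drives this ordering. Contracting $\{a,b\}$ creates red edges exactly to $z^1, c, d, e$ (degree $4$), because the remaining neighborhoods ($x^3$ and the $A$-sides of the various fences) are homogeneous on $\{a,b\}$, and $x^1, y^1$ are already merged into $z^1$. Contracting $\{c,d\}$ symmetrically yields red neighbors $ab, e, x^2, y^2$, and simultaneously drops the red degree of $ab$ to $3$ (its red neighbors $c, d$ are merged), unlocking~\cref{lem:fg5} for $F_V$. Merging the freshly produced $f_V$ back into $ab$ brings its red degree back down to $3$; merging that with $e$ then also drops $cd$'s red degree to $3$ (since $e$ was a shared red neighbor), unlocking~\cref{lem:fg5} for $F_{V'}$. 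The same trick of merging the fence vertex back in then produces a vertex of red degree $3$ on each side, which can be merged into one vertex of red degree $3$, paving the way for the final fence contraction and cleanup.

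The main obstacle is precisely this red-degree tightness: the ``obvious'' orders---for instance contracting both vertical pairs and then both inner fences in sequence---fail because they leave some attachment vertex stuck at red degree~$4$, violating the hypothesis of~\cref{lem:fg5}. Beyond that, one has to check that $(a)$ throughout the sequence, the fences not yet contracted continue to satisfy the attachment rule (no red edge is ever created crossing their respective boundaries, since all merges lie inside the corresponding $S$-sets and every non-$s$ neighborhood is homogeneous), and $(b)$ the external vertices $z^1, x^2, y^2$ each gain at most one extra red edge (to the growing OR vertex) and therefore remain within red degree~$4$, given that they began with red degree at most~$3$.
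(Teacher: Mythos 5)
Your sequence is essentially the paper's proof verbatim: contract $\{a,b\}$ then $\{c,d\}$ (dropping $ab$'s red degree to~3), contract $F_V$, absorb it into $ab$ and then $e$ (dropping $cd$'s red degree to~3), contract and absorb $F_{V'}$, merge the two halves, and finish with the outer fence. The red-degree bookkeeping and the observation that $z^1, x^2, y^2$ gain at most one red edge match the paper as well.
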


\begin{proof}
  First contract $a$ and $b$ into vertex $\alpha$ of red degree 4.
  At this point the fence of $\{a,b\}$ cannot be contracted yet, as this would make the red degree of $\alpha$ go above 4.
  Hence we next contract $c$ and $d$ into $\gamma$, decreasing the red degree of $\alpha$ to 3.
  Note that $\gamma$ has only 4 red neighbors: $\alpha, e, x^2, y^2$.
  
  By~\cref{lem:fg5}, we can now contract the fence gadget of $\{a,b\}$ to a single vertex.
  Next we contract this latter vertex with $\alpha$, and the resulting vertex with $t$; we call $\alpha'$ the obtained vertex.
  Now $\gamma$ has only red degree~3, so we can contract the fence gadget of $\{c,d\}$ to a single vertex that we further contract with $\gamma$; we call $\gamma'$ the obtained vertex.
  We contract $\alpha'$ and $\gamma'$ in a vertex $\varepsilon$ of red degree 3; its three red neighbors are $z^1$, $x^2$, and $y^2$.
  Again by~\cref{lem:fg5}, the outermost fence of the OR gadget can be contracted into a single vertex, that we finally contract with $\varepsilon$.
  This results in a vertex with three red neighbors: $z^1$, $x^2$, and $y^2$.
 
  Throughout this process the red degree of $z^1$, $x^2$, and $y^2$ never goes above 4.
  Indeed the red degree of these vertices is initially at most 3, while they have exactly one black neighbor in the entire OR gadget (so at most one part to be in conflict with).
\end{proof} 

\subsection{Variable gadget}

We first describe the variable gadget for, say, a variable $x$.
We start by attaching a fence on a set formed by three vertices: $x, \top, \bot$.
We add two disjoint copies of an OR gadget, with vertices $a^\top,b^\top,c^\top,d^\top,e^\top$ (resp.~$\{a^\bot,b^\bot,c^\bot,d^\bot,e^\bot\}$) plus the vertices in the fence gadgets; see~\cref{sec:or-gate,fig:OR}.
We call $T'$, respectively~$U'$, the vertex sets of the two OR gadgets.
We link $a^\top$ to~$x$ and to~$\top$, and we link $a^\bot$ to~$x$ and to~$\bot$.
We then add a vertex $f^\top$ (resp.~$f^\bot$), make it adjacent to~$c^\top$ (resp.~$c^\bot$), and add a fence $F'^\top$ (resp.~$F'^\bot$) attached to $T' \cup \{f^\top\}$ (resp.~$U' \cup \{f^\bot\}$).
We add another vertex $g^\top$ (resp.~$g^\bot$), make it adjacent to~$c^\top$ (resp.~$c^\bot$), and add a fence $F^\top$ (resp.~$F^\bot$) attached to $T = T' \cup V(F'^\top) \cup \{f^\top, g^\top\}$ (resp.~$U = U' \cup V(F'^\bot) \cup \{f^\bot, g^\bot\}$).

The variable gadget is \emph{half-guarded} by two vertical sets $V^1, V^2$ (with vertical pairs $x^1, y^1$ and $x^2, y^2$):
We make $\top$ adjacent to $x^1$ and $y^1$, and we make $\bot$ adjacent to $x^2$ and $y^2$.
Finally $T$ guards a vertical set $V^3$ (with vertical pair $x^3, y^3$), and $U$ guards a vertical set $V^4$ (with vertical pair $x^4, y^4$), that is, $x^3$ is fully adjacent to $T$, and $x^4$ is fully adjacent to $U$.
Vertical set $V^3$ is the root of what we call the \emph{wire of literal $+x$}, which is the maximal wire containing $V^3$.
We add the $+$ to differentiate the \emph{literal} from the \emph{variable} $x$.
Similarly, $V^4$ is the root of the \emph{wire of literal $\neg x$}, that is, the maximal wire containing $V^4$.

This finishes the construction of the variable gadget; see~\cref{fig:VG} for an illustration.

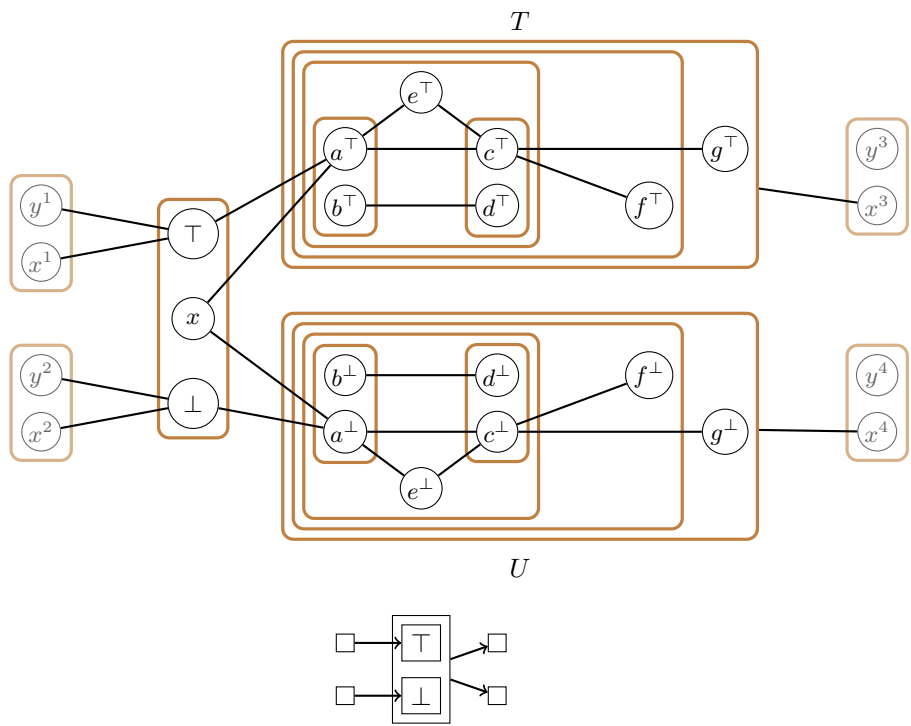
\begin{figure}[ht!]
  \centering
  \begin{tikzpicture}
    \def\hs{2}
    \def\vs{0.75}
    \def\op{0.6}
  
  \node[draw,circle,opacity=\op,inner sep=0.03cm] (v11) at (0, \vs) {\small{$x^1$}};
  \node[draw,circle,opacity=\op,inner sep=0.03cm] (v21) at (0, 2* \vs) {\small{$y^1$}};
  \node[draw,circle,opacity=\op,inner sep=0.03cm] (v12) at (0 , -2 *\vs) {\small{$x^2$}};
  \node[draw,circle,opacity=\op,inner sep=0.03cm] (v22) at (0 ,-\vs) {\small{$y^2$}};
  \node[draw,circle] (false) at (1 * \hs , - 1.5 * \vs) {\small{$\bot$}};
  \node[draw,circle] (x) at (1 * \hs ,0) {\small{$x$}};
  \node[draw,circle] (true) at (1 * \hs ,1.5 *\vs) {\small{$\top$}};
  \node[draw,circle,inner sep=0.03cm] (u11) at (2 * \hs , -2 * \vs) {\small{$a^\bot$}};
  \node[draw,circle,inner sep=0.03cm] (u21) at (2 * \hs ,-\vs) {\small{$b^\bot$}};
  \node[draw,circle,inner sep=0.03cm] (u5) at (4.5 * \hs , -2 *\vs) {\small{$g^\bot$}};
  \node[draw,circle,inner sep=0.03cm] (u4) at (4 * \hs ,-\vs) {\small{$f^\bot$}};
  \node[draw,circle,inner sep=0.03cm] (u12) at (3 * \hs , -2 *\vs) {\small{$c^\bot$}};
  \node[draw,circle,inner sep=0.03cm] (u22) at (3 * \hs ,-\vs) {\small{$d^\bot$}};
  \node[draw,circle,inner sep=0.03cm] (u3) at (2.5 * \hs ,-3 *\vs) {\small{$e^\bot$}};
  \node[draw,circle,inner sep=0.03cm] (t11) at (2 * \hs , 3 * \vs) {\small{$a^\top$}};
  \node[draw,circle,inner sep=0.03cm] (t21) at (2 * \hs , 2 *\vs) {\small{$b^\top$}};
  \node[draw,circle,inner sep=0.03cm] (t5) at (4.5 * \hs , 3 *\vs) {\small{$g^\top$}};
  \node[draw,circle,inner sep=0.03cm] (t4) at (4 * \hs ,2*\vs) {\small{$f^\top$}};
  \node[draw,circle,inner sep=0.03cm] (t12) at (3 * \hs , 3 *\vs) {\small{$c^\top$}};
  \node[draw,circle,inner sep=0.03cm] (t22) at (3 * \hs ,2 *\vs) {\small{$d^\top$}};
  \node[draw,circle,inner sep=0.03cm] (t3) at (2.5 * \hs ,4 *\vs) {\small{$e^\top$}};

  \node[draw,circle,opacity=\op,inner sep=0.03cm] (v13) at (5.5 * \hs, 2 * \vs) {\small{$x^3$}};
  \node[draw,circle,opacity=\op,inner sep=0.03cm] (v23) at (5.5 * \hs, 3 * \vs) {\small{$y^3$}};
  \node[draw,circle,opacity=\op,inner sep=0.03cm] (v14) at (5.5 * \hs, -2 * \vs) {\small{$x^4$}};
  \node[draw,circle,opacity=\op,inner sep=0.03cm] (v24) at (5.5 * \hs, -1 * \vs) {\small{$y^4$}};
  \node[draw,very thick,brown,rounded corners,opacity=\op,fit= (v13) (v23) ] (V3) {} ;
  \node[draw,very thick,brown,rounded corners,opacity=\op,fit= (v14) (v24) ] (V4) {} ;
  
  \node[draw,very thick,brown,rounded corners,fit= (u11) (u21) ] (b) {} ;
  \node[draw,very thick,brown,rounded corners,fit= (u12) (u22) ] (c) {} ;
  \node[draw,very thick,brown,rounded corners,fit= (u21) (u22) (u3) (b) (c) ] (g) {} ;
  \node[draw,very thick,brown,rounded corners,fit= (u21) (u22) (g) (u4) ] (h) {} ;
  \node[draw,very thick,brown,rounded corners,fit= (u21) (u22) (h) (u5) ] (U) {} ;
  \node[yshift=-2.5* \vs cm] at (U) {$U$};
  \node[draw,very thick,brown,rounded corners,fit= (t11) (t21) ] (bb) {} ;
  \node[draw,very thick,brown,rounded corners,fit= (t12) (t22) ] (cc) {} ;
  \node[draw,very thick,brown,rounded corners,fit= (t21) (t22) (t3) (bb) (cc) ] (gg) {} ;
  \node[draw,very thick,brown,rounded corners,fit= (t21) (t22) (gg) (t4) ] (hh) {} ;
  \node[draw,very thick,brown,rounded corners,fit= (t21) (t22) (hh) (t5),label=$T$ ] (T) {} ; 
  \node[draw,very thick,brown,rounded corners,opacity=\op,fit= (v11) (v21) ] {} ;
  \node[draw,very thick,brown,rounded corners,opacity=\op,fit= (v12) (v22) ]  {} ;
  \node[draw,very thick,brown,rounded corners,fit= (true) (false) ]  {} ;

    \foreach \j in {1,...,2}{
      \draw[thick] (u\j2) -- (u\j1) ;
    }
    \draw[thick] (u12) -- (u3) ;
    \draw[thick] (u3) -- (u11) ;
    \draw[thick] (u4) -- (u12) ;
    \draw[thick] (u5) -- (u12) ;

    \draw[thick] (T) -- (v13) ;
    \draw[thick] (U) -- (v14) ;
    
    \foreach \j in {1,...,2}{
    \draw[thick] (t\j2) -- (t\j1) ;
    }
    \draw[thick] (t12) -- (t3) ;
    \draw[thick] (t3) -- (t11) ;
    \draw[thick] (t4) -- (t12) ;
    \draw[thick] (t5) -- (t12) ;
  
    \draw[thick] (v11) -- (true) ;
    \draw[thick] (v21) -- (true) ;
    \draw[thick] (v22) -- (false) ;
    \draw[thick] (v12) -- (false) ;
    \draw[thick] (t11) -- (true) ;
    \draw[thick] (u11) -- (false) ;
    \draw[thick] (t11) -- (x) ;
    \draw[thick] (u11) -- (x) ;

    \begin{scope}[yshift=-5cm, xshift=4cm]
      \foreach \l/\i/\j/\w in {a/0/0.7/, b/0/0/, c/2/0.7/, d/2/0/, t/1/0.7/{$\top$}, f/1/0/{$\bot$}}{
        \node[draw] (\l) at (\i,\j) {\w} ;
      }
      \node[draw,fit=(t) (f)] (vg) {} ;
      \foreach \i/\j in {a/t, b/f, vg/c, vg/d}{
        \draw[thick,->] (\i) -- (\j) ;
      }
    \end{scope}
    
  \end{tikzpicture}
  \caption{Top: A variable gadget half-guarded by $V^1$, $V^2$, and with outputs $V^3, V^4$.
  Bottom: Its compact representation.} 
    \label{fig:VG}
\end{figure}  

\medskip

\textbf{Constraints of the variable gadget on a 4-sequence.}
Because of the fence gadget attached to $\{\top, x, \bot\}$, one has at some point to contract $\top$ and $\bot$ (be it with or without $x$).
A first observation is that this has to wait that the vertical pair of $V^1$ or $V^2$ is contracted.

\begin{lemma}\label{lem:VG1}
  Assume $G$ has a variable gadget half-guarded by vertical sets $V^1$ and $V^2$.
  In a~partial 4-sequence from $G$, the contraction of $\top$ and $\bot$ has to be preceded by the contraction of the pair $x^1, y^1$ or of the pair $x^2, y^2$.
\end{lemma}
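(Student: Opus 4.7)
The plan is to prove the contrapositive: assuming that neither the vertical pair $\{x^1, y^1\}$ nor $\{x^2, y^2\}$ has been contracted, any contraction merging $\top$ with $\bot$ would produce a vertex of red degree at least $6$, contradicting the $4$-sequence hypothesis. The overall strategy is to invoke the fence constraint of~\cref{cor:fg4} on several nested fences of the variable gadget to pin down the possible shapes of the parts containing $\top$ and $\bot$, and then to exhibit six pairwise distinct parts forced to become red neighbors of the merged part.

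First, I would apply~\cref{cor:fg4} to the outermost fence, which is attached to $S = \{x, \top, \bot\}$: as long as $S$ is not contained in a single part, no part intersects both $S$ and $V(G) \setminus S$. Hence the parts $P_\top \ni \top$ and $P_\bot \ni \bot$ are disjoint subsets of $\{x, \top, \bot\}$, and the merged vertex $Q = P_\top \cup P_\bot$ satisfies $\{\top, \bot\} \subseteq Q \subseteq \{x, \top, \bot\}$. Moreover, by~\cref{lem:fg3}, the outer fence of $S$ has not yet been touched by any contraction, so each of its six vertices of $A$ is a singleton part fully black-adjacent to every subset of $S$, contributing nothing to the red degree of $Q$.

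Next, I would apply~\cref{cor:fg4} and~\cref{lem:fg3} to the fences attached to $\{x^1, y^1\}$, $\{x^2, y^2\}$, $\{a^\top, b^\top\}$, and $\{a^\bot, b^\bot\}$ (the last two being the innermost fences of the two OR gadgets of the variable gadget). Under the contradiction hypothesis, the first two fences force $x^1, y^1, x^2, y^2$ to lie in four distinct singleton parts, all disjoint from $Q$. The last two force the part containing $a^\top$ to be a subset of $\{a^\top, b^\top\}$ and the part containing $a^\bot$ to be a subset of $\{a^\bot, b^\bot\}$; both are thus disjoint from $Q$ and from one another, as well as from the previous four singletons.

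The red-degree count is then immediate. Each of $x^1$, $y^1$, and $a^\top$ is adjacent in $G$ to $\top \in Q$ but not to $\bot \in Q$ (and not to $x$); symmetrically, each of $x^2$, $y^2$, and $a^\bot$ is adjacent to $\bot \in Q$ but not to $\top$ (and not to $x$). Consequently, the six parts isolated above each contain a vertex witnessing non-homogeneity with $Q$, and therefore each contributes a red edge incident to $Q$. This yields red degree at least~$6$ for the merged vertex, the desired contradiction. The only mild obstacle is the bookkeeping of the possible sub-states of $\{x, \top, \bot\}$ (whether $x$ has previously been merged with $\top$, with $\bot$, or with neither), but the same six vertices $x^1, y^1, x^2, y^2, a^\top, a^\bot$ witness non-homogeneity uniformly across all three cases, so a single argument covers them all.
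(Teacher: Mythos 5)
Your proof is correct and follows essentially the same approach as the paper: invoking \cref{cor:fg4} on the relevant fences to pin down the parts of the half-guard vertices and of $a^\top, a^\bot$, then counting red neighbors of the merged part (you find six, the paper settles for five, both of which exceed four). One small factual slip: $a^\top$ and $a^\bot$ \emph{are} adjacent to $x$ in $G$ (the construction explicitly links $a^\top$ to $x$ and $\top$, and $a^\bot$ to $x$ and $\bot$), so your parenthetical ``(and not to $x$)'' is wrong for those two vertices; however, this does not affect the conclusion, since $a^\top$ being adjacent to $\top$ but not $\bot$ (and symmetrically for $a^\bot$) already forces a red edge to $Q$ in all three sub-cases you enumerate.
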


\begin{proof}
  The pairs $\{x^1, y^1\}$, $\{x^2, y^2\}$, and $\{\top,\bot\}$ are contained in three disjoint sets $S^1, S^2, S$, respectively, to which a fence is attached.
  Thus before any of these pairs are contracted, by~\cref{cor:fg4}, a vertex outside $S^1 \cup S^2 \cup S$, like $a^\bot$, is in a different part than the six vertices $x^1, y^1, x^2, y^2, \top, \bot$.
  Therefore contracting $\top$ and $\bot$ would create a vertex with five red neighbors, considering the parts of $x^1, y^1, x^2, y^2, a^\bot$.
\end{proof}

We next show that no contraction is possible in $U$ (resp.~in $T$), while $x$ and $\bot$ (resp.~$x$ and $\top$) are not contracted.

\begin{lemma}\label{lem:VG2}
  In a partial 4-sequence, the contractions of $a^\bot$ and $b^\bot$ (resp.~$a^\top$ and $b^\top$) and of $c^\bot$ and $d^\bot$ (resp.~$c^\top$ and $d^\top$) have to be preceded by the contraction of $x$ and $\bot$ (resp.~$x$ and $\top$).
  Therefore no contraction is possible in $U$ (resp.~$T$) before $x$ and $\bot$ (resp.~$x$ and $\top$) are contracted.
\end{lemma}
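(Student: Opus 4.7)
The plan is to parallel the red-degree counting of \cref{lem:or1prep} for the OR gadget, and then use the fence machinery to exclude all other contractions inside $U$. The argument splits into two steps: first I would show that both candidate first contractions in $U$, namely the pairs $a^\bot b^\bot$ and $c^\bot d^\bot$, require $x$ and $\bot$ to already lie in a common part; then I would show that no other contraction in $U$ can be performed first, so the ``Therefore'' clause follows.

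For the first step, I would apply \cref{cor:fg4} to the fence on $\{x,\top,\bot\}$, to the inner fences attached to $\{a^\bot, b^\bot\}$ and $\{c^\bot, d^\bot\}$, to the outermost OR fence, and finally to $F'^\bot$ and $F^\bot$. Together these imply that, just before any first contraction inside $U$ occurs, the seven vertices $a^\bot, b^\bot, c^\bot, d^\bot, e^\bot, f^\bot, g^\bot$ each form their own singleton part, while $\{x,\top,\bot\}$ is partitioned into parts wholly contained in $\{x,\top,\bot\}$. For the $a^\bot b^\bot$ contraction, I would then count red neighbors of the new part: by the attachment rule, the fence attached to $\{a^\bot,b^\bot\}$, the outer OR fence, and the enclosing fences $F'^\bot, F^\bot$ are all homogeneous with $\{a^\bot, b^\bot\}$ and contribute nothing, whereas $c^\bot, e^\bot$ (neighbors of $a^\bot$ only) and $d^\bot$ (neighbor of $b^\bot$ only) contribute three red edges. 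A short case analysis over the five partitions of $\{x,\top,\bot\}$ shows that the contribution from that side equals $1$ exactly when $x$ and $\bot$ lie in a common part and equals $2$ otherwise, so the total stays within the budget of~$4$ only if $x$ and $\bot$ have already been merged. The analogous count around $c^\bot d^\bot$ yields red edges to the five singletons $a^\bot, b^\bot, e^\bot, f^\bot, g^\bot$, and the only way to trim this to four is to have $a^\bot$ and $b^\bot$ already in a common part; by the previous case this itself requires the merge of $x$ and $\bot$.

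For the second step, nested applications of \cref{cor:fg4} to $F^\bot$, $F'^\bot$, and the outer OR fence force any first contraction in $U$ to sit inside the innermost $S$-set of the OR gate, that is, among $\{a^\bot, b^\bot, c^\bot, d^\bot, e^\bot\}$ together with the two inner fence vertex sets. \Cref{lem:fg3} rules out a first contraction involving a fence vertex, and \cref{cor:fg4} applied to the inner fences on $\{a^\bot,b^\bot\}$ and $\{c^\bot,d^\bot\}$ rules out any pair that would straddle $\{a^\bot,b^\bot\}$, $\{c^\bot,d^\bot\}$, and $\{e^\bot\}$. The only surviving candidates are exactly $a^\bot b^\bot$ and $c^\bot d^\bot$, both handled by the first step; the case of $T$ is entirely symmetric.

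The main subtlety is the bookkeeping around nested fences: one must verify that the outer fences $F'^\bot, F^\bot$ and the outermost OR fence are homogeneous with both $\{a^\bot,b^\bot\}$ and $\{c^\bot,d^\bot\}$ (so they never silently contribute to the red count), and that the forced singleton structure at the moment of the first interior contraction leaves no backdoor through which $e^\bot$, $f^\bot$, or $g^\bot$ could be premerged with each other or with a vertical pair to cheat the red-degree bound.
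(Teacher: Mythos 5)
Your proposal is correct and follows essentially the same route as the paper: apply \cref{cor:fg4} to the relevant nested fences to force the singleton structure on $x,\bot,a^\bot,b^\bot,c^\bot,d^\bot,e^\bot,f^\bot,g^\bot$, then count red degree at the two candidate first contractions $a^\bot b^\bot$ and $c^\bot d^\bot$. The paper's write-up is terser (it simply records that the eight listed vertices lie in pairwise distinct parts, so either contraction yields red degree at least~5, and dispatches the ``no other contraction'' clause with a one-line appeal to the fence structure); your version makes explicit the case analysis over partitions of $\{x,\top,\bot\}$, the homogeneity of the enclosing fences, and the chain from $c^\bot d^\bot$ back through $a^\bot b^\bot$ to $x,\bot$, which is a more careful elaboration of the same argument rather than a different one.
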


\begin{proof}
  Since the two statements are symmetric, we only show it with $\bot$.
  Assume none of the pairs $\{x,\bot\}$, $\{a^\bot, b^\bot\}$, $\{c^\bot, d^\bot\}$ are contracted.
  Because of the fence gadgets, by the first item of~\cref{cor:fg4}, the vertices $x, \bot, a^\bot, b^\bot, c^\bot, d^\bot, e^\bot, f^\bot$ are pairwise in distinct parts.
  Therefore contracting $a^\bot$ and $b^\bot$ or $c^\bot$ and $d^\bot$ would create a vertex of red degree at least~5.
  The structure of the fence gadgets in $U$ thus prevents any contraction.
\end{proof}

We deduce that priming the wire of $\neg x$ (resp. $+x$) can only be done after $x$ and $\bot$ (resp.~$x$ and $\top$) are contracted.

\begin{lemma}\label{lem:VG3}
  Assume that $G$ has a variable gadget with outputs the vertical sets $V^3$ (root of the wire of $+x$) and $V_4$ (root of the wire of $\neg x$).
  In a partial 4-sequence from $G$, the contraction of $x^4$ and $y^4$ (resp.~$x^3$ and $y^3$) has to be preceded by the contraction of $x$ and $\bot$ (resp.~$x$ and $\top$).
\end{lemma}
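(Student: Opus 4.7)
The plan is to chain Lemma~\ref{cor:fg4} applied to the outermost fence around $U$ (respectively $T$) with Lemma~\ref{lem:VG2}. I will only treat the case of $x^4, y^4$ since the argument for $x^3, y^3$ is completely symmetric with $T$ and $\top$ replacing $U$ and $\bot$.

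First I would identify the relevant instantiation of the attachment rule. In the construction, the outermost fence of the bottom half of the variable gadget is attached to $U = U' \cup V(F'^\bot) \cup \{f^\bot, g^\bot\}$. Since $x^4$ is declared fully adjacent to $U$, and (looking at the construction) also fully adjacent to the vertices of this outermost fence, the vertex $x^4$ lies in the set $X$ of the attachment rule for this fence. On the other hand, $y^4$ has no neighbor in $U$ and no neighbor in the outermost fence around $U$, so $y^4$ lies in $Y$. (A small sanity check that $V^4$ is a separate connected component in the red graph from the $U$-fence is immediate from the construction.)

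Next I would apply item~(ii) of Lemma~\ref{cor:fg4} to this outermost fence: while $U$ is not contracted into a single part, no part of the current partition can intersect both $X$ and $Y$. In particular, $x^4$ and $y^4$ cannot be placed in the same part. Hence any partial 4-sequence that contracts the pair $\{x^4, y^4\}$ must first have contracted $U$ into a single vertex. Contracting $U$ into a single vertex requires at least one contraction of two vertices both lying inside $U$ (since $U$ has more than one element to begin with). By Lemma~\ref{lem:VG2}, no such contraction within $U$ is possible before $x$ and $\bot$ have been contracted. Therefore, $x$ and $\bot$ are contracted before $x^4$ and $y^4$, as desired.

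I do not expect any genuine obstacle here: the lemma is essentially a composition of Lemma~\ref{cor:fg4} (which turns a fence into a hard barrier) and Lemma~\ref{lem:VG2} (which locks $U$ until the ``half'' side of the variable gadget is opened by contracting $x$ with $\bot$). The only point that needs a bit of care is the bookkeeping that identifies $x^4$ as being in $X$ and $y^4$ as being in $Y$ for the fence around $U$; once that is stated, the argument is a one-line application of the two preceding lemmas.
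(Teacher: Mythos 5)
Your proof is correct and follows essentially the same approach as the paper's: apply item~(ii) of Lemma~\ref{cor:fg4} to the fence $F^\bot$ attached to $U$ (noting $x^4\in X$ and $y^4\in Y$) to conclude that $U$ must first be contracted to a single vertex, then invoke Lemma~\ref{lem:VG2} to conclude that $x$ and $\bot$ must be contracted beforehand. The extra bookkeeping you provide (verifying $x^4\in X$, $y^4\in Y$, and that collapsing $U$ requires at least one contraction inside $U$) is implicit in the paper's two-line proof but matches its logic exactly.
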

\begin{proof}
  By the second item of~\cref{cor:fg4} applied to the fence attached to $U$, the pair $x^4, y^4$ cannot be contracted until $U$ is not contracted to a single vertex.
  Thus by~\cref{lem:VG2}, the pair $x^4, y^4$ can only be contracted after the pair $x, \bot$ is contracted. 
  The other statement is obtained symmetrically.
\end{proof}

\textbf{Contraction of the variable gadget.}
We show two options to contract a ``half'' of the variable gadget, either $T$ and its fence, or $U$ and its fence, into a~single vertex. 

\begin{lemma}\label{lem:VG3}
  There is a partial 4-sequence that contracts $x$ and $\top$ together, and $T \cup F^\top$ into a~single vertex.
  Symmetrically there is a partial 4-sequence that contracts $x$ and $\bot$ together, and $U \cup F^\bot$ into a~single vertex.
\end{lemma}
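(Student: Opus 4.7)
The strategy is to merge $x$ with $\top$ and then to contract the top half $T \cup V(F^\top)$ from the inside out, mirroring the proof of \cref{lem:or2}. We treat the case of $\top$; the case of $\bot$ is symmetric.

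We begin by contracting $x$ with $\top$ into a new vertex $z$. All $A$-vertices of the variable fence are adjacent to both $x$ and $\top$, as is $a^\top$, so those edges remain black. The only external vertices adjacent to exactly one of $x,\top$ are $a^\bot$ (adjacent to $x$) and $x^1,y^1$ (adjacent to $\top$); moreover, the attached set of the variable fence now splits into two parts $\{x,\top\}$ and $\{\bot\}$, which are non-homogeneous (witnessed by $a^\top$), creating an additional red edge between $z$ and $\bot$. Thus $z$ has red degree exactly $4$ right after this contraction.

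Next we contract the rest of the top half. Since $z$ plays the role of an already-merged input of the top OR-like structure ($a^\top$ is black-adjacent to $z$, while $b^\top$ is not), we follow the same pattern as in \cref{lem:or2}: contract $a^\top$ with $b^\top$, then $c^\top$ with $d^\top$, collapse the two inner OR-fences via \cref{lem:fg5}, absorb $e^\top$ and collapse the outer OR-fence by another application of \cref{lem:fg5}. The bookkeeping is slightly heavier than in a pure OR gadget because $c^\top$ has the extra black neighbors $f^\top$ and $g^\top$; this is handled by alternating the absorption of $f^\top$ (followed by contracting $F'^\top$ via \cref{lem:fg5}) and of $g^\top$ (followed by contracting $F^\top$ via \cref{lem:fg5}), at each stage verifying that no red degree exceeds $4$. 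The resulting vertex $w$ representing $T \cup V(F^\top)$ is red-adjacent only to $z$ and black-adjacent to~$x^3$, since $x^3$ is fully adjacent to all of $T \cup V(F^\top)$ by the attachment rule.

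The main obstacle is controlling the red degree of $z$ once the OR-contraction begins: merging $a^\top$ with $b^\top$ in isolation would add a fifth red edge incident to $z$, since $a^\top$ is adjacent to $z$ but $b^\top$ is not. The remedy, whose precise orchestration is deferred to the global proof of correctness in \cref{subsec:constr-corr}, is to schedule concurrent contractions outside the variable gadget so that the half-guard pair $\{x^1,y^1\}$ is already merged into a single vertex, contributing one red neighbor of $z$ rather than two, before this lemma is applied. Under this precondition $z$ has red degree at most $3$ when the OR contraction starts, and the $4$-bound is maintained throughout.
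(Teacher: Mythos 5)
Your proof has a genuine error in the red-degree computation, and that error cascades into an unnecessary—and, within the overall construction, actually unsatisfiable—precondition.

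After contracting $x$ and $\top$ into $z$, you claim $z$ has red degree~$4$, with an alleged red edge $z\bot$ arising because the parts $\{x,\top\}$ and $\{\bot\}$ are ``non-homogeneous (witnessed by $a^\top$)''. This is not how homogeneity works: two parts are non-homogeneous precisely when some cross pair is an edge while another cross pair is a non-edge. Here both $x\bot$ and $\top\bot$ are non-edges in the variable gadget, so $\{x,\top\}$ and $\{\bot\}$ are homogeneous and no red edge between them is created. The vertex $a^\top$ belongs to a \emph{third} part; the fact that it distinguishes $x,\top$ from $\bot$ is irrelevant to the adjacency relation between the parts $\{x,\top\}$ and $\{\bot\}$. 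Hence $z$ has red degree exactly~$3$, with red neighbors $x^1$, $y^1$, $a^\bot$—as the paper states—and Lemma~\ref{lem:or2} can be invoked directly with $z$ playing the role of $z^1$ (no vertices playing the role of $x^2,y^2$).

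Because you overcount $z$'s red degree, you then impose a precondition that does not appear in the lemma statement: that the half-guard pair $\{x^1,y^1\}$ is already contracted before the lemma is applied. This would make the lemma as stated false (it asserts, unconditionally, a partial $4$-sequence starting from~$G$), and it is also impossible to satisfy when the lemma is actually used: in the overall correctness argument, Lemma~\ref{lem:VG3} is applied at the very start of the $4$-sequence, whereas by Lemma~\ref{cor:wire1} the pair $x^1,y^1$ cannot be contracted until the entire long chain from the global output $V^o$ back to the half-guards has been processed, which happens much later. Aside from this, your high-level outline—contract $x,\top$, then run the OR-gadget contraction, then absorb $f^\top,g^\top$ and collapse the nested fences $F'^\top,F^\top$ via Lemma~\ref{lem:fg5}—does follow the paper's route.
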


\begin{proof}
  We first contract $x$ and $\top$ into a vertex $+x$.
  Observe that $+x$ has exactly three red neighbors: $x^1$, $y^1$, and $a^\bot$.
  Thus $\{a^\top, b^\top, c^\top, d^\top, e^\top\}$ and their three fences can be contracted exactly like an OR gadget.
  So by~\cref{lem:or2}, there is a partial 4-sequence that contracts all these vertices to a single vertex $u$, with three red neighbors ($+x$, $f^\top$, and $g^\top$).
  We can now contract $u$ and $f^\top$ into $u'$, followed by contracting their fence gadget $F'^\top$ into a~single vertex, by~\cref{lem:fg5}.
  That pendant vertex can be contracted to $u'$, and the result to $g^\top$, forming vertex $v$.
  Finally, again by~\cref{lem:fg5}, the fence $F^\top$ attached to $T$ can be contracted into a~single vertex, which can be contracted with $v$.
 
  The other sequence is the symmetric.
\end{proof}

We now see how the second ``half'' of the variable gadget can be contracted once the vertical pairs of the \emph{half-guards} $V^1, V^2$ have been contracted.  

\begin{lemma}\label{lem:VG4}
  Assume $T \cup F^\top$ (resp.~$U \cup F^\bot$) has been contracted into a single vertex $u$, and that the pairs $\{\top,x\}$ (resp.~$\{\bot,x\}$), $\{x^1,y^1\}$, and $\{x^2,y^2\}$ have been contracted into $+x$ (resp.~$\neg x$), $z^1$, and $z^2$, respectively.
  We further assume that the red degree of $z^2$ (resp.~$z^1$) is at most~3.
  Then there is partial 4-sequence that contracts $\top, x, \bot$ and their fence into a~single vertex, and $U \cup F^\bot$ (resp.~$T \cup F^\top$) into a~single vertex.
\end{lemma}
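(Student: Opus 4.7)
The plan is to carry out the bot half of the variable gadget in the spirit of the top-half contraction proved just above, and then handle the fence on $\{x,\top,\bot\}$ after a small preparatory contraction that lowers the red degree of the vertex standing for $\{\top,x,\bot\}$. We treat only the first of the two symmetric cases.

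First, we contract $+x$ and $\bot$ into a vertex $\alpha$ representing $\{\top,x,\bot\}$. Direct inspection shows that $\alpha$ has exactly four red neighbors: $u$ (via $a^\top\in u$), $a^\bot$, $z^1$, and $z^2$; the fence attached to $\{x,\top,\bot\}$ remains homogeneous to $\alpha$ ($A$-side fully black, $B$-side non-adjacent), and no other red degree grows. We then reproduce, on the bot-side OR sub-gadget and on the chain built on $f^\bot,F'^\bot,g^\bot,F^\bot$, exactly the contraction sequence used to collapse the top half, with $\alpha$ playing the role that $+x$ played there: contract $a^\bot,b^\bot$ into $\beta$, then $c^\bot,d^\bot$ into $\gamma$, collapse the two inner fences using~\cref{lem:fg5}, absorb $f^\bot$, collapse $F'^\bot$, absorb $g^\bot$, and finally collapse $F^\bot$. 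The only difference from the top-half situation is the extra red edge $\alpha z^2$; since $z^2$ is non-adjacent to every vertex of $U\cup F^\bot$, this edge is never threatened and every intermediate trigraph stays within red degree~$4$. After this phase, the bot half is collapsed into a single vertex $w$ and $\alpha$ has red neighbors $\{u,w,z^1,z^2\}$.

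It remains to collapse the fence on $\{x,\top,\bot\}$ via~\cref{lem:fg5}, which requires the attached vertex to have red degree at most~$3$. Since this fence has empty ``$X$''-set (no vertex outside $\{x,\top,\bot\}$ is fully adjacent to the three of them), the only obstacle is the red degree~$4$ of $\alpha$. We lower it by contracting $u$ and $w$ into a single vertex $uw$: since $u$ and $w$ are red-adjacent only to $\alpha$ and possibly to the (contracted or not) vertical pairs of $V^3$ and $V^4$, the merge is legal, yielding $uw$ of red degree at most~$3$ while $\alpha$ now has red neighbors $\{uw,z^1,z^2\}$. We then apply~\cref{lem:fg5} verbatim to the fence on $\{x,\top,\bot\}$, producing a single fence vertex $f$ whose unique red neighbor is $\alpha$ and temporarily raising $\alpha$'s red degree to~$4$; contracting $f$ with $\alpha$ yields the announced vertex containing $\{\top,x,\bot\}$ together with its entire fence, while $uw$ still contains $U\cup F^\bot$, meeting the lemma's conclusion. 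The only non-routine step is this auxiliary merge of $u$ and $w$: it is needed to pay for the single extra red edge that~\cref{lem:fg5} unavoidably charges on the attached vertex at its first step, an overhead that was not visible in the top-half contraction because $+x$ started there with only three red neighbors.
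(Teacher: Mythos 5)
Your proof is correct and follows essentially the same route as the paper: contract $+x$ with $\bot$ into a vertex of red degree~$4$, collapse $U\cup F^\bot$ by re-running the VG3/OR-gadget sequence (safe because the one extra red edge points to $z^2$, which is isolated from $U\cup F^\bot$, and because the edge from $\alpha$ into $U\cup F^\bot$ was already red), merge $u$ and $w$ to bring $\alpha$'s red degree down to~$3$, then apply Lemma~\ref{lem:fg5} to the fence on $\{x,\top,\bot\}$ and absorb its pendant. The paper performs one further contraction (pendant-with-$\alpha$ followed by the result with $y$, yielding a single vertex), and it states explicitly that the first step raises $z^2$'s red degree to~$4$ — your ``no other red degree grows'' is slightly loose there, though harmless since $z^2$ was assumed to have red degree at most~$3$ — but these are cosmetic differences.
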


\begin{proof}
  We contract $\bot$ with $+x$ into $v$ of red degree~4.
  This increases the red degree of $z^2$ by one, which remains at most~4.
  We then contract $U \cup F^\bot$ into a single vertex $w$, like in~\cref{lem:VG3}.
  We contract $u$ and $w$ into $y$, a vertex of red degree at most~3.
  Now $v$ has degree~3.
  So we can contract its fence gadget by~\cref{lem:fg5}.
  We further contract $v$ with its pendant neighbor, and finally with $y$.
  What results is a unique vertex with four red neighbors.

  The other partial sequence is symmetric.
\end{proof}

\subsection{Clause gadget}

A \emph{3-clause gadget} (or simply \emph{clause gadget}) has for \emph{inputs} three vertical sets $V^1, V^2, V^3$, and for \emph{output} one vertical set $V^4$.
It consists of combining two OR gadgets, and using long chains to make the OR gadgets \emph{distant enough}.
We add an OR gadget with input $V^1$ and $V^2$, and output $V$.
We then add a long chain from $V$ to $V'$, and an OR gadget with input $V'$ and $V^3$, and output $V^4$.
The clause gadget is depicted in~\cref{fig:cl-gadget}.
We call~\emph{first OR gadget} of the clause gadget, the one with output $V$, and \emph{second OR gadget}, the one with output $V^4$.

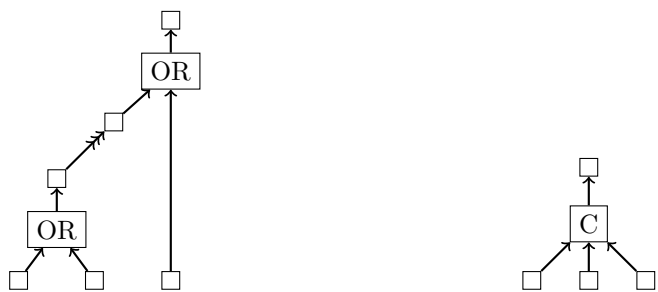
\begin{figure}[h!]
  \centering
  \begin{tikzpicture}
    \def\s{0.75}
    \begin{scope}
    \foreach \l/\i/\j in {a/0.33/0,b/1.66/0,c/1/1.8, f/2/2.8,g/3/0,i/3/4.6}{
      \node[draw] (\l) at (\s * \i, \s * \j) {} ;
    }
    \node[draw] (og) at (\s * 1, \s * 0.9) {OR} ;
    \node[draw] (h) at (\s * 3, \s * 3.7) {OR} ;
    \foreach \i/\j in {a/og,b/og,og/c,f/h,g/h,h/i}{
      \draw[thick,->] (\i) -- (\j) ;
    }
    \foreach \i/\j in {c/f}{
      \draw[thick,->>>] (\i) -- (\j) ;
    }
    \end{scope}

    \begin{scope}[xshift=7cm]
    \foreach \l/\i/\j in {a/0/0,b/1/0,c/2/0,d/1/2}{
      \node[draw] (\l) at (\s * \i, \s * \j) {} ;
    }
    \node[draw] (cl) at (\s * 1, \s * 1) {C} ;
    \foreach \i/\j in {a/cl,b/cl,c/cl,cl/d}{
      \draw[thick,->] (\i) -- (\j) ;
    }
    \end{scope}
  \end{tikzpicture}
  \caption{Left: A 3-clause gadget. Right: A shorthand for the gadget.}
  \label{fig:cl-gadget}
\end{figure}

\textbf{Constraint of the clause gadget on a 4-sequence.}
As a consequence of~\cref{lem:or1,cor:wire1}, we get that once a contraction involves the output, at least one of the vertical pairs of the inputs has been contracted.

\begin{lemma}\label{lem:clause-gadget}
  Assume $G$ contains a clause gadget with inputs $V^1, V^2, V^3$ and output $V^4$.
  In a partial 4-sequence from $G$, any contraction involving a vertex of $V^4$ is preceded by the contraction of the vertical pair of one of $V^1$, $V^2$, or $V^3$.
\end{lemma}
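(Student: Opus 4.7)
The plan is to chain together the constraints imposed by the two OR gadgets composing the clause gadget, using the long chain only as a conduit for propagation via \cref{cor:wire1}.

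First, I would apply \cref{lem:or1} to the \emph{second} OR gadget, whose inputs are $V'$ and $V^3$, and whose output is $V^4$. This yields: any contraction involving a vertex of $V^4$ must be preceded by the contraction of the vertical pair of $V'$ or of $V^3$. If the latter happens, the conclusion is immediate, since $V^3$ is an input of the clause gadget. So assume that instead the vertical pair of $V'$ is contracted at some earlier point.

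Next, $V'$ is the leaf of the long chain going from $V$ to $V'$; in the propagation digraph $\mathcal D(G)$ there is a directed path from $V$ to $V'$. By \cref{cor:wire1}, any contraction involving a pair of vertices in $V'$ — in particular, the contraction of its vertical pair — must be preceded by the contraction of the vertical pair of every vertical set on that directed path, including that of $V$.

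Finally, I would apply \cref{lem:or1} once more, this time to the \emph{first} OR gadget, whose inputs are $V^1$ and $V^2$ and whose output is $V$. Since some contraction involving a vertex of $V$ (namely, the contraction of its vertical pair) occurred, it must have been preceded by the contraction of the vertical pair of $V^1$ or $V^2$. In all cases, the vertical pair of one of $V^1$, $V^2$, $V^3$ is contracted before any contraction involving a vertex of $V^4$, proving the lemma. There is essentially no obstacle: the argument is a straightforward composition of \cref{lem:or1} (twice) and \cref{cor:wire1}, and no new combinatorial work on red degrees is needed.
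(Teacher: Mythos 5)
Your proof is correct and follows essentially the same route as the paper: apply \cref{lem:or1} to the second OR gadget, propagate through the long chain via \cref{cor:wire1}, and apply \cref{lem:or1} to the first OR gadget. The only cosmetic difference is that you spell out in slightly more detail why \cref{cor:wire1} and the second application of \cref{lem:or1} trigger, which the paper leaves implicit.
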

\begin{proof}
  Assume a contraction involves a vertex of $V^4$.
  By~\cref{lem:or1} applied to the second OR gadget, the vertical pair of $V^3$ or of $V'$ has to be contracted beforehand.
  If the vertical pair $V^3$ has been contracted, we conclude.
  Otherwise, the vertical pair of $V'$ has been contracted, and by~\cref{cor:wire1}, this implies that the vertical pair of $V$ has been contracted.
  By~\cref{lem:or1} applied to the first OR gadget, this in turn implies that the vertical pair of $V^1$ or of $V^2$ has been contracted.
\end{proof}

\textbf{Contraction of the clause gadget.}
The OR gates of the clause gadgets will be contracted as specified in~\cref{lem:or2}, while we wait the overall construction to describe the contraction of the wires. 

\subsection{Overall construction and correctness}\label{subsec:constr-corr}

Let $I=(C_1, \ldots, C_m)$ be an instance of \textsc{$3$-SAT}, that is, a collection of $m$~3-clauses over the variables $x_1, \ldots, x_n$.
We further assume that each variable appears at most twice positively, and at most twice negatively in $I$.
The \textsc{$3$-SAT} problem remains NP-complete with that restriction, and without $2^{o(n)}$-time algorithm unless the ETH fails; see~\cref{thm:3sat4occ}.
We build a trigraph $G$ that has twin-width at most~4 if and only if $I$ is satisfiable.
As trigraph $G$ will satisfy the condition of~\cref{lem:reduction-trigraph-sd}, it can be replaced by a graph on $O(|V(G)|)$ vertices.
We set $L$ the length of the long chain to $2\lceil \log(5n+3m) \rceil=O(\log n)$.

\subsubsection{Construction}

We now piece the gadgets described in the previous sections together.

\medskip

\textbf{Variable to clause gadgets.}
For every variable~$x_i$, we add a variable gadget half-guarded by $V^1_i, V^2_i$, and with outputs $V^3_i, V^4_i$.
We add a long chain starting at vertical set $V^3_i$ (resp.~$V^4_i$), and ending at a branching vertical set from which starts two long chains stopping at vertical sets $V^{\top,1}_i$ and $V^{\top,2}_i$ (resp.~$V^{\bot,1}_i$ and $V^{\bot,2}_i$).
Vertical set $V^{\top,1}_i$ (resp.~$V^{\bot,1}_i$) serves as the input of the first clause gadget in which~$x_i$ appears positively (resp.~negatively), while $V^{\top,2}_i$ (resp.~$V^{\bot,2}_i$) becomes the input of the second clause gadget in which $x_i$ appears positively (resp.~negatively).
If a literal has only one occurrence, then we omit the corresponding vertical set, and the long chain leading to it.
We nevertheless assume that each literal has at least one occurrence, otherwise the corresponding variable could be safely assigned.

\medskip

\textbf{Clause gadgets to global output.}
For every $j \in [m]$, we add a long chain from the output of the clause gadget of $C_j$, to a vertical set, denoted by $V^c_j$.
For every $j \in [2,m]$, we then add a long chain starting at $V^c_j$ and ending at $V^c_{j-1}$.
We add an arc from $V^c_1$ to a new vertical set $V^o$, which we call the \emph{global output}.

\medskip

\textbf{Global output back to half-guarding the variable gadgets.}
For every $i \in [n]$, we add two vertical sets $V^{1,r}_i, V^{2,r}_i$, and puts a long chain starting at $V^{1,r}_i$ and ending at $V^{2,r}_i$.
We add a long chain from $V^o$ to $V^{1,r}_1$.
We also add a long chain from $V^{2,r}_i$ to $V^{2,r}_{i+1}$, for every $i \in [n-1]$.
Finally we add a long chain from $V^{a,r}_i$ to $V^a_i$ for every $a \in \{1,2\}$ and every $i \in [n]$.
Recall that $V^1_i$ and $V^2_i$ are half-guarding the variable gadget of $x_i$.

This finishes the construction of the graph $G=G(I)$.
See~\cref{fig:overall} for an illustration.

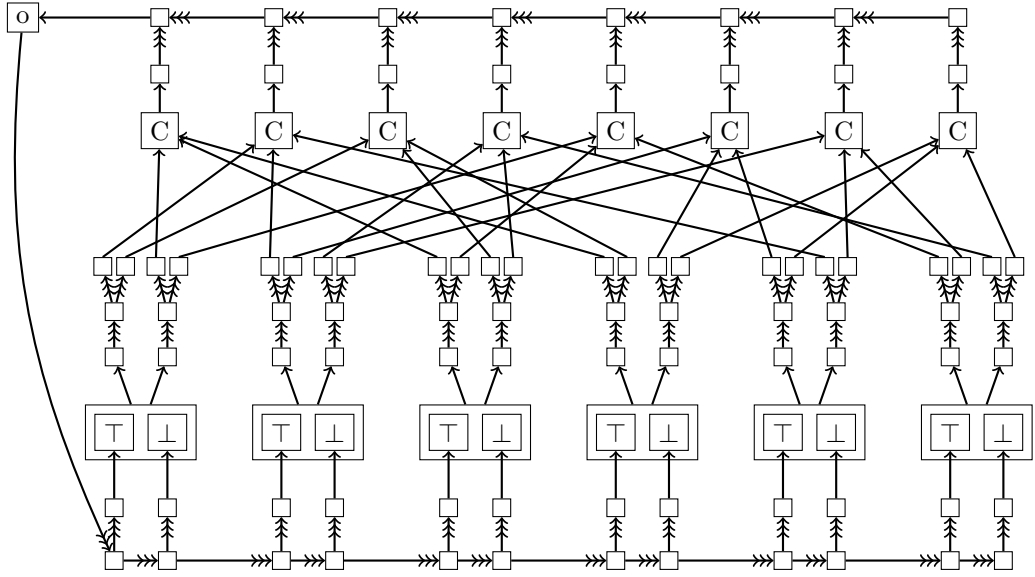
\begin{figure}
  \centering
  \begin{tikzpicture}
    \def\n{6}
    \def\m{8}
    \def\hs{2.2}
    \def\hm{1.5}
    
    \foreach \k in {1,...,\n}{
    \begin{scope}[xshift=\k * \hs cm]
      \foreach \l/\i/\j/\w in {a/0/0/, b/0.7/0/, c/0/2/, d/0.7/2/, cp/0/2.6/, dp/0.7/2.6/, t/0/1/{$\top$}, f/0.7/1/{$\bot$}, aa/0/-0.7/, bb/0.7/-0.7/, c1/-0.15/3.2/,c2/0.15/3.2/, d1/0.55/3.2/,d2/0.85/3.2/}{
        \node[draw] (\l\k) at (\i,\j) {\w} ;
      }
      \node[draw,fit=(t\k) (f\k)] (vg\k) {} ;
      \foreach \i/\j in {a/t, b/f, vg/c, vg/d}{
        \draw[thick,->] (\i\k) -- (\j\k) ;
      }
      \foreach \i/\j in {aa/a, bb/b,aa/bb,cp/c1,cp/c2,dp/d1,dp/d2,c/cp, d/dp}{
        \draw[thick,->>>] (\i\k) -- (\j\k) ;
      }
    \end{scope}
    }
    \foreach \k [count = \km from 1] in {2,...,\n}{
      \draw[thick,->>>] (bb\km) -- (aa\k) ;
    }

    \foreach \h in {1,...,\m}{
      \node[draw] (cl\h) at (1.3+\h * \hm,5) {C} ;
      \node[draw] (z\h) at (1.3+\h * \hm,5.75) {} ;
      \node[draw] (w\h) at (1.3+\h * \hm,6.5) {} ;
      \draw[thick,->] (cl\h) -- (z\h) ;
      \draw[thick,->>>] (z\h) -- (w\h) ;
    }
    \foreach \k [count = \km from 1] in {2,...,\m}{
      \draw[thick,->>>] (w\k) -- (w\km) ;
    }
    \node[draw] (o) at (1,6.5) {o} ;
    \draw[thick,->] (w1) -- (o) ;
    \draw[thick,->>>] (o) to [bend left=-15] (aa1) ;

    \foreach \i/\j in {d11/cl1,c13/cl1,c14/cl1, c11/cl2,c12/cl2,d15/cl2, c21/cl3,d13/cl3,c24/cl3, d12/cl4,d23/cl4,d16/cl4, d21/cl5,c23/cl5,c16/cl5, c22/cl6,d14/cl6,c15/cl6, d22/cl7,d25/cl7,c26/cl7, d24/cl8,c25/cl8,d26/cl8}{
      \draw[thick,->] (\i.north) -- (\j) ;
    }
  \end{tikzpicture}
  \caption{An example of the overall construction $G$ on a \textsc{$3$-SAT} instance with 6 variables and 8~clauses.
  The first two clauses are $\neg x_1 \lor x_3 \lor x_4$ and $x_1 \lor x_2 \lor \neg x_5$.}
  \label{fig:overall}
\end{figure}

\subsubsection{Correctness}

\textbf{If $G$ has twin-width at most 4, then $I$ is satisfiable.}
Let us consider the trigraph $H$ obtained after the vertical pair of the global output $V^o$ is contracted (see~top-right picture of~\cref{fig:overall-contraction}).
This has to happen in a 4-sequence by the first item of \cref{cor:fg4} applied to the fence of $V^o$.
By~\cref{cor:wire1}, no contraction involving vertices of the vertical sets $V^1_i, V^2_i$ can have happened (for any $i \in [n]$).
This is because there is a directed path in the propagation digraph $\mathcal D(G)$ from $V^o$ to $V^1_i$ and~$V^2_i$.

Thus by~\cref{lem:VG1,lem:VG3}, for every variable $x_i$ ($i \in [n]$), at most one of the wire of $+x_i$ and the wire of $\neg x_i$ has been primed.
We can define the corresponding truth assignment~$\mathcal A$: $x_i$ is set to true if the wire of $\neg x_i$ is \emph{not} primed, and to false if instead the wire of $+x_i$ is \emph{not} primed.
Besides, by~\cref{cor:wire1} applied to the contraction in $V^o$, every vertical pair of a~clause-gadget output has been contracted.
Then~\cref{lem:clause-gadget} implies that the vertical pair of at least one input of each clause gadget has been contracted.
But such a vertical pair can be contracted only if it corresponds to a literal set to true by $\mathcal A$.
For otherwise, the root of the wire of that literal cannot be contracted.
This implies that $\mathcal A$ is a satisfying assignment.

\medskip

\textbf{If $I$ is satisfiable, then $G$ has twin-width at most 4.}
In what follows, when we write that we \emph{can} contract a vertex, a set, or make a sequence of contractions, we mean that there is a partial 4-sequence that does the job. 
Let $\mathcal A$ be a satisfying assignment of~$I$.
We start by contracting ``half'' of the variable gadget of each $x_i$.
We add a subscript matching the variable index to the vertex and set labels of~\cref{fig:VG}.
For every $i \in [n]$, we contract vertices $x_i$ and $\top_i$ together, and $T_i \cup F^\top_i$ to a single vertex $v_i$, if $\mathcal A$ sets variable $x_i$ to true, and $x_i$ and $\bot_i$ together, and $U_i \cup F^\bot_i$ to a single vertex $v_i$, if instead $\mathcal A$ sets $x_i$ to false.
By~\cref{lem:VG3}, this can be done by a partial 4-sequence.

Next we contract the wire of $+x_i$ if $\mathcal A(x_i)$ is true, or the wire of $\neg x_i$ if $\mathcal A(x_i)$ is false.
This is done as described in the end of~\cref{sec:wire}.
We contract the vertical pair of the root of the wire into $z_i$ (the red degree of $v_i$ goes from 1 to 2). 
We then contract its fence by~\cref{lem:fg5}.
We can now contract the resulting vertex with $z_i$. 
Inductively, we contract the children of the current vertical set to single vertices, in a similar fashion.
As the propagation digraph has degree at most~3, this never creates a vertex of red degree more than~4.
At this point the trigraph corresponds to the top-left drawing of~\cref{fig:overall-contraction}. 

At the leaves of the wire (the vertical sets $V^{\top,1}_i, V^{\top,2}_i$ or $V^{\bot,1}_i, V^{\bot,2}_i$), we make an exception, and only contract the vertical pair.
We then contract, by \cref{lem:or2}, all the (non-already reduced) OR gadgets (involved in clause gadgets) at least one input of which has its vertical pair contracted. 
After that, applying~\cref{lem:fg5}, we finish the contraction of the OR inputs whose vertical pairs was contracted.
Next we contract each output of a~contracted OR gadget into a~single vertex.

In those clauses where the third literal is not satisfied by $\mathcal A$, the output of the clause gadget is not contracted at that point.
However, as $\mathcal A$ is a satisfying assignment, the output of the first OR gate of the gadget is contracted.
We then contract each vertical set of the long chain leading to the input of the second OR gate.
We only contract the vertical pair of that input, and contract the incident OR gadget, by~\cref{lem:or2}.
We finally proceed by contracting the input vertical set into a single vertex, and the output vertical set into a single vertex.

At this point, each output of the clause gadgets is contracted into a single vertex.
The (\emph{not} strongly) connected component $C$ of the propagation digraph $\mathcal D(G)$ containing the global output $V_o$ is acyclic and has total degree at most~3.
All the vertical sets of $C$ with in-degree 0 (ant out-degree 1) in $\mathcal D(G)$ are the clause outputs, which have been contracted to single vertices. 
Thus each vertical set of $C$ can be contracted to single vertices, by repeated use of~\cref{lem:fg5}, followed by contracting the pendant vertex (resulting from the fence contraction) with its unique (red) neighbor.

Note that this process terminates by contracting the half-guards $V_i^1, V_i^2$ (for every $i \in [n]$).
The conditions of \cref{lem:VG4} are now satisfied, so we can finish contracting each variable gadget into two vertices that we further contract together.
This results in a vertex of red degree 4.
See top-right and bottom-left drawings of~\cref{fig:overall-contraction} for the additional contractions leading to that point.

We can then contract the wire of the literal that was set to false by~$\mathcal A$.
Again, we only contract the vertical pair of the inputs of OR gates that are not contracted yet.
Then we contract those OR gadgets, and finish by fully contracting the vertical set of those inputs.
We next contract each output of the newly contracted OR gates.
We eventually contract into a single vertex each vertical set of the long chain in clause gadgets where this was not already done.

At this point, the current trigraph $H$ has only red edges (and corresponds to the bottom-right drawing of~\cref{fig:overall-contraction}).
Thus it can be interpreted as a graph, and we write~\emph{degree} instead or \emph{red degree}.
$H$ has $4n+3m$ vertices of degree 3, $n$~vertices of degree 4, and the rest of its vertices have degree~2.
Because we added long chains to separate what now corresponds to vertices of degree at least~3, $H$ is an $(\geqslant L)$-subdivision of a graph on $5n+3m$ vertices.
Since $L=2 \lceil \log(5n+3m) \rceil$, by~\cref{lem:subd}, $H$ finally admits a 4-sequence.

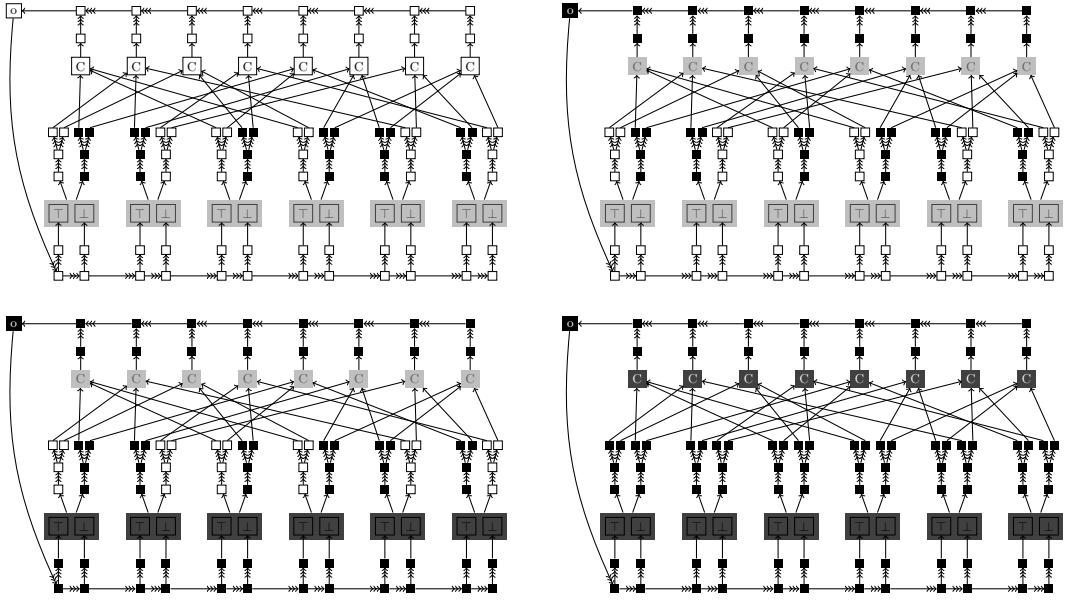
\begin{figure}
  \centering
  \resizebox{400pt}{!}{
  \begin{tikzpicture}
    \def\n{6}
    \def\m{8}
    \def\hs{2.2}
    \def\hm{1.5}

    \begin{scope}
    \foreach \k/\c/\d in {1/draw/fill,2/fill/draw,3/draw/fill,4/draw/fill,5/fill/draw,6/fill/draw}{
    \begin{scope}[xshift=\k * \hs cm]
      \foreach \l/\i/\j/\w/\x in {a/0/0//draw, b/0.7/0//draw, c/0/2//\c, d/0.7/2//\d, cp/0/2.6//\c, dp/0.7/2.6//\d, t/0/1/{$\top$}/draw, f/0.7/1/{$\bot$}/draw, aa/0/-0.7//draw, bb/0.7/-0.7//draw, c1/-0.15/3.2//\c,c2/0.15/3.2//\c, d1/0.55/3.2//\d,d2/0.85/3.2//\d}{
        \node[\x] (\l\k) at (\i,\j) {\w} ;
      }
      \node[fill opacity=0.5,fill=gray,fit=(t\k) (f\k)] (vg\k) {} ;
      \foreach \i/\j in {a/t, b/f, vg/c, vg/d}{
        \draw[thick,->] (\i\k) -- (\j\k) ;
      }
      \foreach \i/\j in {aa/a, bb/b,aa/bb,cp/c1,cp/c2,dp/d1,dp/d2,c/cp, d/dp}{
        \draw[thick,->>>] (\i\k) -- (\j\k) ;
      }
    \end{scope}
    }
    \foreach \k [count = \km from 1] in {2,...,\n}{
      \draw[thick,->>>] (bb\km) -- (aa\k) ;
    }

    \foreach \h in {1,...,\m}{
      \node[draw] (cl\h) at (1.3+\h * \hm,5) {C} ;
      \node[draw] (z\h) at (1.3+\h * \hm,5.75) {} ;
      \node[draw] (w\h) at (1.3+\h * \hm,6.5) {} ;
      \draw[thick,->] (cl\h) -- (z\h) ;
      \draw[thick,->>>] (z\h) -- (w\h) ;
    }
    \foreach \k [count = \km from 1] in {2,...,\m}{
      \draw[thick,->>>] (w\k) -- (w\km) ;
    }
    \node[draw] (o) at (1,6.5) {o} ;
    \draw[thick,->] (w1) -- (o) ;
    \draw[thick,->>>] (o) to [bend left=-15] (aa1) ;

    \foreach \i/\j in {d11/cl1,c13/cl1,c14/cl1, c11/cl2,c12/cl2,d15/cl2, c21/cl3,d13/cl3,c24/cl3, d12/cl4,d23/cl4,d16/cl4, d21/cl5,c23/cl5,c16/cl5, c22/cl6,d14/cl6,c15/cl6, d22/cl7,d25/cl7,c26/cl7, d24/cl8,c25/cl8,d26/cl8}{
      \draw[thick,->] (\i.north) -- (\j) ;
    }
    \end{scope}

    \begin{scope}[xshift=15cm]
    \foreach \k/\c/\d in {1/draw/fill,2/fill/draw,3/draw/fill,4/draw/fill,5/fill/draw,6/fill/draw}{
    \begin{scope}[xshift=\k * \hs cm]
      \foreach \l/\i/\j/\w/\x in {a/0/0//draw, b/0.7/0//draw, c/0/2//\c, d/0.7/2//\d, cp/0/2.6//\c, dp/0.7/2.6//\d, t/0/1/{$\top$}/draw, f/0.7/1/{$\bot$}/draw, aa/0/-0.7//draw, bb/0.7/-0.7//draw, c1/-0.15/3.2//\c,c2/0.15/3.2//\c, d1/0.55/3.2//\d,d2/0.85/3.2//\d}{
        \node[\x] (\l\k) at (\i,\j) {\w} ;
      }
      \node[fill opacity=0.5,fill=gray,fit=(t\k) (f\k)] (vg\k) {} ;
      \foreach \i/\j in {a/t, b/f, vg/c, vg/d}{
        \draw[thick,->] (\i\k) -- (\j\k) ;
      }
      \foreach \i/\j in {aa/a, bb/b,aa/bb,cp/c1,cp/c2,dp/d1,dp/d2,c/cp, d/dp}{
        \draw[thick,->>>] (\i\k) -- (\j\k) ;
      }
    \end{scope}
    }
    \foreach \k [count = \km from 1] in {2,...,\n}{
      \draw[thick,->>>] (bb\km) -- (aa\k) ;
    }

    \foreach \h in {1,...,\m}{
      \node[fill opacity=0.5,fill=gray] (cl\h) at (1.3+\h * \hm,5) {C} ;
      \node[fill] (z\h) at (1.3+\h * \hm,5.75) {} ;
      \node[fill] (w\h) at (1.3+\h * \hm,6.5) {} ;
      \draw[thick,->] (cl\h) -- (z\h) ;
      \draw[thick,->>>] (z\h) -- (w\h) ;
    }
    \foreach \k [count = \km from 1] in {2,...,\m}{
      \draw[thick,->>>] (w\k) -- (w\km) ;
    }
    \node[fill] (o) at (1,6.5) {o} ;
    \node at (1,6.5) {\textcolor{white}{o}} ;
    \draw[thick,->] (w1) -- (o) ;
    \draw[thick,->>>] (o) to [bend left=-15] (aa1) ;

    \foreach \i/\j in {d11/cl1,c13/cl1,c14/cl1, c11/cl2,c12/cl2,d15/cl2, c21/cl3,d13/cl3,c24/cl3, d12/cl4,d23/cl4,d16/cl4, d21/cl5,c23/cl5,c16/cl5, c22/cl6,d14/cl6,c15/cl6, d22/cl7,d25/cl7,c26/cl7, d24/cl8,c25/cl8,d26/cl8}{
      \draw[thick,->] (\i.north) -- (\j) ;
    }
    \end{scope}

    \begin{scope}[yshift=-8.5cm]
    \foreach \k/\c/\d in {1/draw/fill,2/fill/draw,3/draw/fill,4/draw/fill,5/fill/draw,6/fill/draw}{
    \begin{scope}[xshift=\k * \hs cm]
      \foreach \l/\i/\j/\w/\x in {a/0/0//fill, b/0.7/0//fill, c/0/2//\c, d/0.7/2//\d, cp/0/2.6//\c, dp/0.7/2.6//\d, t/0/1/{$\top$}/draw, f/0.7/1/{$\bot$}/draw, aa/0/-0.7//fill, bb/0.7/-0.7//fill, c1/-0.15/3.2//\c,c2/0.15/3.2//\c, d1/0.55/3.2//\d,d2/0.85/3.2//\d}{
        \node[\x] (\l\k) at (\i,\j) {\w} ;
      }
      \node[fill opacity=0.75,fill=black,fit=(t\k) (f\k)] (vg\k) {} ;
      \foreach \i/\j in {a/t, b/f, vg/c, vg/d}{
        \draw[thick,->] (\i\k) -- (\j\k) ;
      }
      \foreach \i/\j in {aa/a, bb/b,aa/bb,cp/c1,cp/c2,dp/d1,dp/d2,c/cp, d/dp}{
        \draw[thick,->>>] (\i\k) -- (\j\k) ;
      }
    \end{scope}
    }
    \foreach \k [count = \km from 1] in {2,...,\n}{
      \draw[thick,->>>] (bb\km) -- (aa\k) ;
    }

    \foreach \h in {1,...,\m}{
      \node[fill opacity=0.5,fill=gray] (cl\h) at (1.3+\h * \hm,5) {C} ;
      \node[fill] (z\h) at (1.3+\h * \hm,5.75) {} ;
      \node[fill] (w\h) at (1.3+\h * \hm,6.5) {} ;
      \draw[thick,->] (cl\h) -- (z\h) ;
      \draw[thick,->>>] (z\h) -- (w\h) ;
    }
    \foreach \k [count = \km from 1] in {2,...,\m}{
      \draw[thick,->>>] (w\k) -- (w\km) ;
    }
    \node[fill] (o) at (1,6.5) {o} ;
    \node at (1,6.5) {\textcolor{white}{o}} ;
    \draw[thick,->] (w1) -- (o) ;
    \draw[thick,->>>] (o) to [bend left=-15] (aa1) ;

    \foreach \i/\j in {d11/cl1,c13/cl1,c14/cl1, c11/cl2,c12/cl2,d15/cl2, c21/cl3,d13/cl3,c24/cl3, d12/cl4,d23/cl4,d16/cl4, d21/cl5,c23/cl5,c16/cl5, c22/cl6,d14/cl6,c15/cl6, d22/cl7,d25/cl7,c26/cl7, d24/cl8,c25/cl8,d26/cl8}{
      \draw[thick,->] (\i.north) -- (\j) ;
    }
    \end{scope}

    \begin{scope}[xshift=15cm, yshift=-8.5cm]
    \foreach \k/\c/\d in {1/fill/fill,2/fill/fill,3/fill/fill,4/fill/fill,5/fill/fill,6/fill/fill}{
    \begin{scope}[xshift=\k * \hs cm]
      \foreach \l/\i/\j/\w/\x in {a/0/0//fill, b/0.7/0//fill, c/0/2//\c, d/0.7/2//\d, cp/0/2.6//\c, dp/0.7/2.6//\d, t/0/1/{$\top$}/draw, f/0.7/1/{$\bot$}/draw, aa/0/-0.7//fill, bb/0.7/-0.7//fill, c1/-0.15/3.2//\c,c2/0.15/3.2//\c, d1/0.55/3.2//\d,d2/0.85/3.2//\d}{
        \node[\x] (\l\k) at (\i,\j) {\w} ;
      }
      \node[fill opacity=0.75,fill=black,fit=(t\k) (f\k)] (vg\k) {} ;
      \foreach \i/\j in {a/t, b/f, vg/c, vg/d}{
        \draw[thick,->] (\i\k) -- (\j\k) ;
      }
      \foreach \i/\j in {aa/a, bb/b,aa/bb,cp/c1,cp/c2,dp/d1,dp/d2,c/cp, d/dp}{
        \draw[thick,->>>] (\i\k) -- (\j\k) ;
      }
    \end{scope}
    }
    \foreach \k [count = \km from 1] in {2,...,\n}{
      \draw[thick,->>>] (bb\km) -- (aa\k) ;
    }

    \foreach \h in {1,...,\m}{
      \node[fill opacity=0.75,fill=black] (cl\h) at (1.3+\h * \hm,5) {C} ;
      \node (cl\h) at (1.3+\h * \hm,5) {\textcolor{white}{C}} ;
      \node[fill] (z\h) at (1.3+\h * \hm,5.75) {} ;
      \node[fill] (w\h) at (1.3+\h * \hm,6.5) {} ;
      \draw[thick,->] (cl\h) -- (z\h) ;
      \draw[thick,->>>] (z\h) -- (w\h) ;
    }
    \foreach \k [count = \km from 1] in {2,...,\m}{
      \draw[thick,->>>] (w\k) -- (w\km) ;
    }
    \node[fill] (o) at (1,6.5) {o} ;
    \node at (1,6.5) {\textcolor{white}{o}} ;
    \draw[thick,->] (w1) -- (o) ;
    \draw[thick,->>>] (o) to [bend left=-15] (aa1) ;

    \foreach \i/\j in {d11/cl1,c13/cl1,c14/cl1, c11/cl2,c12/cl2,d15/cl2, c21/cl3,d13/cl3,c24/cl3, d12/cl4,d23/cl4,d16/cl4, d21/cl5,c23/cl5,c16/cl5, c22/cl6,d14/cl6,c15/cl6, d22/cl7,d25/cl7,c26/cl7, d24/cl8,c25/cl8,d26/cl8}{
      \draw[thick,->] (\i.north) -- (\j) ;
    }
    \end{scope}
  \end{tikzpicture}
  }
  \caption{The different stages of the contraction sequence. Gadgets in black are contracted to single vertices, while gadgets in gray are only partially contracted.}
  \label{fig:overall-contraction}
\end{figure}

\subsubsection{Wrapping up}

The initial trigraph $G$ comprises $O((n+m)L)=O(n \log n)$ gadgets and vertical sets, each consisting of $O(1)$ vertices.
Hence $|V(G)| = O(n \log n)$.
It is immediate that the construction of $G$ can be made in polynomial time. 

The red graph of $G$ is a disjoint union of paths of length 12 (fence gadgets), and isolated vertices (the rest of the trigraph).
Thus, by~\cref{lem:reduction-trigraph-sd}, there is a \emph{graph} $G'$ on $O(|V(G)|)=O(n \log n)$ vertices such that $G'$ has twin-width at most~4 if and only if ($G$ has twin-width at most~4 if and only if) $I$ is satisfiable.
This concludes the proof of~\cref{thm:main} since a $2^{o(N/\log N)}$-time algorithm deciding if an $N$-vertex graph has twin-width at~most~4, would allow to decide \textsc{$3$-SAT} where each variable appears at most twice positively and at most twice negatively, in time $2^{o(\frac{n \log n}{\log n + \log \log n})}=2^{o(n)}$, contradicting the ETH.


\end{document}